\renewcommand\footnotetextcopyrightpermission[1]{} 
\pgfplotsset{compat = newest}
\newcommand\jonni[1]{\textcolor{red}{(Jonni) #1}}
\newtheorem{claim}{\textit{Claim}}
\theoremstyle{definition}
\newtheorem{remark}{Remark}[section]
\newcommand{\eqdef}{\mathrel{{\mathop:}}=}
\newcommand{\N}{\mathbb N}
\newcommand{\R}{\mathbb R}
\newcommand{\B}{\mathbb B}
\newcommand{\K}{\mathbb K}
\newcommand{\C}{\mathcal C}
\newcommand{\PTIME}{\ensuremath{\mathsf{P}}\xspace}
\newcommand{\DLOGTIME}{\ensuremath{\mathsf{DLOGTIME}}\xspace}
\newcommand{\NL}{\ensuremath{\mathsf{NL}}\xspace}
\newcommand{\Log}{\ensuremath{\mathsf{L}}\xspace}
\newcommand{\coNP}{\ensuremath{\mathsf{coNP}}\xspace}
\newcommand{\NP}{\ensuremath{\mathsf{NP}}\xspace}
\newcommand{\aco}{\ensuremath{\mathsf{AC}^0}\xspace}
\newcommand{\faco}{\ensuremath{\mathsf{FnAC}^0}\xspace}
\newcommand{\eval}[3]{{#1}({#2},{#3})}
\newcommand{\FOKeval}[2]{{#1}({{#2}})}
\newcommand{\qminusR}{\ensuremath{q \setminus R}}
\newcommand{\qyzminusR}{\ensuremath{\widetilde{q}}}
\newcommand{\Rep}{\mathrm{Rep}}
\newcommand{\dv}{\mathrm{A}}
\newcommand{\m}{\mathrm{m}}
\newcommand{\supp}{\overline{\mathrm{Supp}}}
\newcommand{\suppo}{\mathrm{Supp}}
\newcommand{\dfn}{\coloneqq}
\newcommand{\Var}{\mathsf{Var}}
\newcommand{\closureVars}{\ensuremath{(\var(\vec{y}) )^+_{\Sigma(q\setminus R)}}\xspace}
\newcommand{\enc}{\mathrm{enc}}
\newcommand{\epath}{\ensuremath{e_{path}(\DB)}\xspace}
\newcommand{\Logic}{\ensuremath{\mathcal{L}}}
\newcommand{\ICs}{\ensuremath{\Sigma}}
\newcommand{\DB}{\mathfrak{D}}
\newcommand{\RP}{\mathfrak{R}}
\newcommand{\RPS}{\mathfrak{S}}
\newcommand{\CA}{\ensuremath{\mathsf{CA}}\xspace}
\newcommand{\sjfCQ}{\ensuremath{\mathrm{sjfCQ}}\xspace}
\newcommand{\qpath}{\ensuremath{\mathrm{q}_\text{path}}\xspace}
\newcommand{\qcycle}{\ensuremath{\mathrm{q}_\text{cycle}}\xspace}
\newcommand{\qsink}{\ensuremath{\mathrm{q}_\text{sink}}\xspace}
\newcommand{\FO}{\ensuremath{\mathsf{FO}}\xspace}
\newcommand{\var}{\mathsf{var}}
\newcommand{\key}[1]{\mathsf{key}(#1)}
\definecolor{darkred}{rgb}{0.5, 0.0, 0.0}
\definecolor{green}{RGB}{0,120,0}
\definecolor{hlyellow}{RGB}{250, 250, 190}
\definecolor{ninaeditcolor}{RGB}{160,195,185}
\definecolor{almostwhite}{rgb}{0.9,0.9,0.9}
\definecolor{phokioneditcolor}{RGB}{190,190,185}
\definecolor{jonnieditcolor}{RGB}{190,160,195}
\definecolor{jefeditcolor}{RGB}{0,255,255}
\newcommand{\phokion}[1]{\todo[inline,caption={},color=phokioneditcolor, size=\footnotesize]{{\bf Ph:} #1}}
\newcommand{\cons}{\textsc{Cons}}
\newcommand{\CONS}{\textsc{Cons}}
\newcommand{\block}{\mathcal{B}}
\newcommand{\lrformula}[1]{\left({#1}\right)}
\newcommand{\agree}[2]{{\mathsf{PreCopy}}(#1,#2)}
\newcommand{\compose}[2]{\theta_{#2}^{#1}}
\newcommand{\closed}{closed}
\newcommand{\commentout}[1]{}
\newcommand{\suggestion}[2]{#2}
\def\@copyrightspace{\relax}
\begin{document}

\title{Rewriting Consistent Answers on Annotated Data}


\author{Phokion Kolaitis}
\affiliation{%
  \institution{University of California Santa Cruz \& IBM Research}
  \city{Santa Cruz}
  \state{California}
  \country{USA}
}
\email{kolaitis@soe.ucsc.edu}

\author{Nina Pardal}
\affiliation{%
  \institution{University of Huddersfield, University of Sheffield}
  \city{Huddersfield}
  \country{United Kingdom}}
\email{n.pardal@hud.ac.uk}

\author{Jonni Virtema}
\affiliation{%
  \institution{University of Sheffield}
  \city{Sheffield}
  \country{United Kingdom}}
\email{j.t.virtema@sheffield.ac.uk}

\author{Jef Wijsen}
\affiliation{%
  \institution{University of Mons}
  \city{Mons}
  \country{Belgium}}
\email{jef.wijsen@umons.ac.be}

\renewcommand{\shortauthors}{Kolaitis et al.}

\begin{abstract}
We embark on a study of the consistent answers of queries over databases annotated with values from a naturally ordered positive semiring. 
In this setting, the consistent answers of a query are defined as the minimum of the semiring values that the query takes over all repairs of an inconsistent database. 
The main focus is on self-join free conjunctive queries and key constraints, which is the most extensively studied case of consistent query answering over standard databases. We introduce a variant of first-order logic with a limited form of negation, define suitable semiring semantics, and then establish the main result of the paper: the consistent query answers of a self-join free conjunctive query under key constraints are rewritable in this logic if and only if the attack graph of the query contains no cycles. This result generalizes an analogous result of Koutris and Wijsen for ordinary databases, but also yields new results for a multitude of semirings, including the bag semiring, the tropical semiring, and the fuzzy semiring.
\suggestion{}{
Further, for the bag semiring, we show that computing the consistent answers of any self-join free conjunctive query whose attack graph has a strong cycle is not only $\NP$-hard but also it is $\NP$-hard to even approximate the consistent answers with a constant relative approximation guarantee.
}
\end{abstract}


\begin{CCSXML}
<ccs2012>
   <concept>
       <concept_id>10002951.10002952.10003197.10010822</concept_id>
       <concept_desc>Information systems~Relational database query languages</concept_desc>
       <concept_significance>500</concept_significance>
       </concept>
   <concept>
       <concept_id>10003752.10010070.10010111.10011736</concept_id>
       <concept_desc>Theory of computation~Incomplete, inconsistent, and uncertain databases</concept_desc>
       <concept_significance>500</concept_significance>
       </concept>
   <concept>
       <concept_id>10003752.10010070.10010111.10011734</concept_id>
       <concept_desc>Theory of computation~Logic and databases</concept_desc>
       <concept_significance>500</concept_significance>
       </concept>
 </ccs2012>
\end{CCSXML}

\ccsdesc[500]{Information systems~Relational database query languages}
\ccsdesc[500]{Theory of computation~Incomplete, inconsistent, and uncertain databases}
\ccsdesc[500]{Theory of computation~Logic and databases}



\keywords{consistent query answering, repairs, semirings, conjunctive queries, key constraints}



\makeatletter                   
\def\mdseries@tt{m}             
\makeatother                    

\maketitle

\thispagestyle{empty}
\pagestyle{plain}

\allowdisplaybreaks
\section{Introduction and summary of results}

Database repairs and the consistent answers of queries provide a principled approach to coping with inconsistent databases, i.e., databases that violate one or more integrity constraints in a given set $\Sigma$.  This area of research started with the influential work by Arenas, Bertossi, and Chomicki \cite{ArenasBC99} and since then has had  
 a steady presence in database theory. Intuitively, a \emph{repair} of an inconsistent database $\DB$ is a consistent database $\DB'$ (i.e., $\DB'$  satisfies every constraint in $\Sigma$) and   differs from $\DB$ in a ``minimal'' way.  By definition, the \emph{consistent answers} $\cons(q,\Sigma,\DB)$ of a query $q$ over an inconsistent database $\DB$ is the intersection of the evaluations $q(\DB')$ of $q$ over all repairs $\DB'$ of $\DB$. Thus, every set $\Sigma$ of integrity constraints and every query $q$ give rise to the following algorithmic problem $\cons(q,\Sigma)$: given a database $\DB$, compute
 $\cons(q,\Sigma,\DB)$.
 
Since, in general, an inconsistent database may have a multitude of different repairs, computing the consistent answers can be an intractable problem. In fact, computing the consistent answers can be a \coNP-hard problem, even for conjunctive queries $q$ and for key constraints. This state of affairs motivated a series of investigations aiming to delineate the boundary between tractability and intractability in the computation of the consistent answers.
One of the most striking results along these lines is a \emph{trichotomy} theorem obtained by Koutris and Wijsen \cite{KoutrisW17}. Specifically, Koutris and Wijsen  showed that for a self-join free conjunctive query $q$ 
and a set $\Sigma$ of key constraints with one key constraint per relation of $q$,   the problem $\cons(q,\Sigma)$ exhibits one of the following three behaviours:  $\cons(q,\Sigma)$ is first-order rewritable, or it is polynomial-time computable but it is not first-order rewritable, or it is \coNP-complete.  Extending this trichotomy theorem to more expressive classes of queries and to richer types of integrity constraints has been the topic of active investigations during the past several years \cite{DBLP:conf/pods/KoutrisW18,DBLP:conf/pods/KoutrisW20,DBLP:conf/pods/KoutrisOW21,DBLP:journals/pacmmod/KoutrisOW24}.

A different direction in database research has focused on  $\mathbb K$-databases, i.e., databases in which the tuples in the relations are annotated with values from some fixed semiring ${\mathbb K}=(K,+, \times, 0,1)$. Ordinary  databases correspond to databases over the Boolean semiring ${\mathbb B}=(\{0,1\}, \vee, \wedge, 0, 1)$, while bag databases correspond to databases over the semiring ${\mathbb N}=(N,+, \times, 0,1)$ of the non-negative integers. 
The catalyst for this investigation was the paper by 
Green, Karvounarakis, and Tannen \cite{GreenKT07}, which developed 
a powerful framework for data provenance based on semirings of  polynomials. While the original framework \cite{GreenKT07} applied only to the provenance of queries expressible in negation-free first-order logic,  subsequent investigations extended the study of provenance to richer languages, including full first-order logic \cite{gradel2017semiring}, Datalog \cite{DBLP:conf/icdt/DeutchMRT14}, and least fixed-point logic \cite{DannertGNT21}.  Furthermore, 
several other topics in database theory have been examined in the context of semiring semantics, including the conjunctive query containment problem
\cite{DBLP:journals/mst/Green11, DBLP:conf/pods/KostylevRS12},
the evaluation of Datalog queries
\cite{DBLP:conf/pods/Khamis0PSW22,DBLP:journals/pacmmod/ZhaoDKRT24}, and the interplay between local consistency and global consistency for relations over semirings \cite{DBLP:journals/pacmmod/AtseriasK24}.

In this paper, we embark on an investigation of the consistent answers of queries under semiring semantics. 
Our main focus is on conjunctive queries with key constraints and on the rewritability of the certain answers of such queries.  
The first task is to address the following conceptual questions:
How should the consistent answers of queries under semiring semantics be defined? What does it mean to say that the consistent answers of a query under semiring semantics are rewritable in first-order logic? To
simplify the exposition, let us assume that the queries considered are \emph{\closed}, i.e., they have arity zero or, equivalently, the formulas that define them have no free variables. On ordinary databases, \closed~queries are called Boolean queries because they take value $0$ or $1$, but on $\mathbb K$-databases they can take any value in the universe $K$ of the semiring $\mathbb K$. 

To define the notion of the consistent answers of a query under semiring semantics, we first need to define the notion of a repair of a $\mathbb K$-database with respect to a set $\Sigma$ of key constraints, where $\mathbb K$ is a fixed semiring.  If $\DB$ is a $\mathbb K$-database, then the \emph{support} of $\DB$ is the ordinary database $\suppo(\DB)$ obtained from $\DB$ by setting value $1$ to every tuple in one of the relations of $\DB$ that has a non-zero annotation. 
In Section \ref{sec:query}, we argue that it is natural to define a \emph{repair} of a $\mathbb K$-database $\DB$ to be a  maximal 
sub-database $\DB'$ of $\DB$ whose support $\suppo(\DB')$ satisfies the key constraints at hand.
Let $q$ be a \closed~query with a set $\Sigma$ of key constraints. If $\DB$ is a $\mathbb K$-database, then on every repair $\DB'$ of $\DB$, the query returns a value $q(\DB')$ from the universe $K$ of the semiring $\mathbb K$.  We  define the \emph{consistent answers $\m\CA_{\mathbb K}(q,\Sigma, \DB)$ of $q$ on $\DB$} to be the minimum of the values $q(\DB')$, as $\DB'$ varies over all repairs of $\DB$. For this definition to be meaningful, we need to assume a total order on the universe $K$ of $\mathbb K$. Thus, we define the consistent answers of queries for \emph{naturally ordered positive semirings}, i.e., positive semirings $\mathbb K$ in which the \emph{natural} preorder of the semiring is a  total order (the precise definitions are given in the next section). The Boolean semiring, the bag semiring, the tropical semiring, and the fuzzy semiring are some of the main examples of naturally ordered positive semirings. \looseness=-1

Before introducing the notion of first-order rewritability for semirings,  let  $\qpath$ be the \closed~conjunctive query   $\exists x \exists y \exists z \, (R(x;y) \land S(y;z))$, where the semicolon separates the key positions from the non-key ones, i.e., the first attribute of $R$ is the key of $R$ and the first attribute of $S$ is the key of $S$. Let $\Sigma$ be the set consisting of these two key constraints.
Fuxman and Miller \cite{DBLP:journals/jcss/FuxmanM07} showed that the consistent answers of $\qpath$ on ordinary databases are rewritable in first-order logic. Specifically, they showed that for every ordinary database $\DB$,  $$\cons(q,\Sigma,\DB)=1 \quad \mbox{if and only if} \quad \DB \models \exists x \exists z' (R(x;z') \land \forall z( R(x;z) \to \exists y S(z;y))).$$
Thus, $\cons(q,\Sigma,\DB)$ can be computed by a single evaluation of  a first-order sentence on $\DB$ and without  evaluating repeatedly $q(\DB')$, as $\DB'$ ranges over the potentially exponentially many repairs of $\DB$. 
Let $\K=(K,+,\times, 0,1)$ be a naturally ordered positive semiring. 
In Section \ref{sec:mca}, we show that for every $\K$-database $\DB$, the following holds for the consistent answers $\m\CA_\K(\qpath,\Sigma,\DB)$:
$$\m\CA_\K(\qpath,\DB) = 
\sum_{a\in D} \min_{b\in D:R^\DB(a,b)\neq 0}({R^\DB(a,b) \times \min_{c\in D:S^\DB(b,c)\neq 0}{S^\DB(b,c)}}),
 $$
 where $D$ is the active domain of  $\DB$. Thus,  $\m\CA_\K(\qpath,\Sigma, \DB)$ can be evaluated directly on $\DB$ and without considering the repairs of $\DB$. 
 
 Motivated by the above properties
 of $\qpath$, we  introduce the logic $\Logic_\K$, which is a variant of first-order logic with a minimization operator $\nabla$ and a limited form of negation ($\supp$) that   flattens non-zero annotations to zero. We give rigorous semantics to the formulas of  the logic $\Logic_\K$ on every naturally ordered positive semiring and then investigate
 when the consistent answers of conjunctive queries are rewritable in this logic.
 
 Let $q$ be a self-join free \closed~conjunctive query with one key constraint per relation.
 Our main result asserts that the consistent answers of $q$ are rewritable in the logic
 $\Logic_\K$ if and only if the \emph{attack} graph of $q$ is acyclic. 
This result generalizes an analogous result of Koutris and Wijsen \cite{KoutrisW17} for the Boolean semiring, but also yields new results for a multitude of semirings, including the bag semiring, the tropical semiring, and the fuzzy semiring.
The notion of the attack graph was introduced by Wijsen~\cite{DBLP:conf/pods/Wijsen10,DBLP:journals/tods/Wijsen12} 
and has played an important role in the study of the consistent answers of self-join free conjunctive queries on ordinary databases.  Here, we leverage the insights obtained in this earlier study, but also obtain new insights that entail a further analysis of the properties of the attack graph when the query is evaluated under semiring semantics. As an illustration, it will turn out that the consistent answers  $\m\CA_\K(\qpath,\Sigma, \DB)$ of the query $\qpath$ on $\DB$ are definable by the 
 $\Logic_\K$-formula
 $
\exists x \nabla_{R(x,y)} y. \,({R(x,y) \times \nabla_{S(y,z)} z .\, {S(y,z)}})$. 

Let $q$ be a fixed self-join free query with one key per relation.
Koutris and Wijsen \cite{KoutrisW17} showed that if  the attack graph of $q$ contains a \emph{strong} cycle, then computing the consistent answers $\cons(q,\Sigma, \DB)$ of~$q$ on ordinary databases $\DB$ is a \coNP-complete problem \suggestion{}{(this is a data complexity result, since $q$ is fixed)}. 
Note that if $\K$ is a naturally ordered positive semiring, then computing  $\m\CA_\K(q,\DB)$ on  $\K$-databases $\DB$ is an optimization problem. Here, we focus on the bag semiring ${\mathbb N}=(N,+,\times, 0,1)$ and show that if the attack graph of $q$ contains a strong cycle, then computing 
$\m\CA_{\mathbb N}(q,\Sigma, \DB)$ on bag databases $\DB$ not only is a 
\NP-hard problem but also  it is  \NP-hard to even approximate $\m\CA_{\mathbb N}(q,\Sigma, \DB)$ with a constant relative approximation guarantee.
 This result paves the way to expand the study of  the consistent answers of queries on annotated databases with methods and techniques from the rich theory of approximation algorithms for optimization problems.

\commentout{
This was proved via a reduction from $\cons(\qsink,\DB)$, where
$\qsink$ is the query   $\exists x \exists y \exists z \, (R(x;z) \land S(y;z))$; the consistent answers of $\qsink$ were already known to be \coNP-complete \cite{DBLP:journals/ipl/KolaitisP12}. Note that if $\K$ is a naturally ordered positive semiring, then computing the consistent answers $\m\CA_\K(\qsink,\DB)$ on  $\K$-databases $\DB$ is an optimization problem. Here, we focus on the bag semiring ${\mathbb N}=(N,+,\times, 0,1)$ and show that computing 
$\m\CA_{\mathbb N}(\qsink,\DB)$ on bag databases $\DB$ not only is a 
\coNP-hard problem but also does not have a polynomial-time constant-approximation algorithm, unless $\PTIME = \NP$. This result paves the way to expand the study of  the consistent answers of queries on annotated databases with methods and techniques from the rich theory of approximation algorithms for optimization problems.
}

\section{Preliminaries}\label{sec:preliminaries}

\paragraph{Semirings.}
A \emph{commutative semiring} is an algebraic structure $\K=(K, +, \times, 0, 1)$ with $0 \neq 1$ and such that $(K, +,0)$ and $(K,\times, 1)$ are commutative monoids, $\times$ distributes over $+$, and $0\times a = a \times 0 = 0$ for every $a \in K$. A semiring has no \emph{zero-divisors} if $a\times b = 0$ implies that  $a=0$ or $b=0$.
We say that $\K$ is \emph{positive} if $a+b = 0$ implies $a=0$ and $b=0$, and also $\K$ has no zero-divisors.
A semiring is \emph{naturally ordered} if the canonical preorder $\leq_{\K}$, defined by $a \leq_{\K} b \Longleftrightarrow \exists c~(a+c = b)$, is a total order.
Every non-empty finite set $A$ of elements from a naturally ordered semiring has a minimum element, denoted by $\min(A)$. 
%
Unless specified otherwise, from now on we assume  $\K$ to be a naturally ordered positive semiring. Examples of such semirings include the
\emph{Boolean} semiring ${\mathbb B}=(\{0,1\}, \lor, \wedge, 0, 1)$, the \emph{bag} semiring
${\mathbb N}=(N,+, \cdot, 0, 1)$ of the natural numbers, the \emph{tropical} semiring
${\mathbb T}=([0,\infty], \min, +, \infty, 0)$, the \emph{Viterbi} semiring ${\mathbb V}=([0,1], \max,\times, 0,1)$, and the \emph{fuzzy} semiring ${\mathbb F}=([0,1], \max, \min, 0,1)$.
If $\mathbb K$ is  the Boolean or the bag or the Viterbi or the fuzzy semiring, then $\leq_\K$ coincides with the standard order on the universe of $\K$, while if $\K$ is the tropical semiring, then $\leq_\K$ is the reverse of the standard order on the universe of $\K$.

\paragraph{$\K$-relations and $\K$-databases.}
A \emph{(relational) schema} $\tau$ is a finite set of relation symbols each with a positive integer as its arity.
We fix a countable set $\dv$ of possible data values.
An $n$-ary \emph{$\K$-relation}, where $n>0$,  is a function $R \colon \dv^n \to K$ such that $R(t)=0$ for all but finitely many $n$-tuples of~$\dv^n$.  
The \emph{support} of $R$ is defined as $\suppo(R) \dfn \{ t \in \dv^n \colon R(t) \neq 0 \}$.
If $R$ and $T$ are $\K$-relations of the same arity, we write $R \leq_\K T$ if $R(\vec{a}) \leq_\K T(\vec{a})$ for all $\vec{a}$ in the support of $R$. We write $R \subseteq T$, if $\suppo(R)\subseteq \suppo(T)$ and  $R(\vec{a}) = T(\vec{a})$ for all $\vec{a}$ in the support of $R$.
A \emph{$\K$-database} $\DB$ over a schema~$\tau$ is a collection of $\K$-relations, that is, a collection of functions $R_i: \dv^n \to K$, such that the arities of $R_i$ match that of the corresponding relation symbols in $\tau$.
We often write $R^\DB$ to denote the interpretation of the relation symbol R in $\DB$.
The \emph{support} of $\DB$,  written  $\suppo(\DB)$, is the database that consists of the supports $\suppo(R)$ of the $\K$-relations $R$ of $\DB$.
For $\K$-databases $\DB$ and $\DB'$ of the same schema $\tau$, we write $\DB \leq_\K \DB'$ ($\DB \subseteq \DB'$, resp.) if $R^\DB \leq_\K R^{\DB'}$ ($R^\DB \subseteq R^{\DB'}$, resp.) for every $R\in \tau$.
The \emph{active domain} of the $\K$-database $\DB$, denoted by $D$, is the set of all data values that occur in the support of some $\K$-relation of $\DB$. \looseness=-1
\suggestion{}{In this work, we only consider $\K$-databases with a non-empty active domain. This means that there is at least one relation $R$ and a tuple $t$ such that $R(t)\neq 0$. 
} 

\paragraph{Conjunctive Queries}

We assume familiarity with basic definitions and notions related to first-order logic \FO in the context of relational databases (e.g., see \cite{EFT84, abi95}).
We fix a countably infinite set $\Var$ of \emph{first-order variables} $x, y, x_1,\dots, x_n$ etc. and write $\vec{x}$ to denote a \emph{tuple} of variables.
We write $\var(\vec{x})$ to denote the set of variables that occur in $\vec{x}$.  
An \emph{assignment} on a $\K$-database  $\DB$ is a total function $\alpha \colon \Var \to D$. 
For a variable $x\in \Var$ and a value $a\in D$, we write $\alpha(a/x)$ to denote the assignment 
that maps $x$ to $a$, and otherwise agrees with $\alpha$.
%

A \emph{conjunctive query} (CQ) is an \FO-formula of the form
$q(\vec{x}) \dfn \exists \vec{y} \big(R_1(\vec{z}_1) \land \ldots \land R_n(\vec{z}_n)\big)$,
where each $R_i(\vec{z}_i)$ is a relational atom and each $\vec{z}_i$ is a tuple of variables in $\vec{x}$ and $\vec{y}$. 
We assume that  all quantified variables of $q$ occur in the quantifier-free part of the query.
We say that a CQ $q$ is \emph{closed} if $q$ has no free variables; otherwise, we say that
$q$ is \emph{open}.
We write $\hat{q}$ to denote the quantifier-free part of $q$.
We call $q$ \emph{self-join-free} (\sjfCQ) if no relation symbol occurs more than once in $\hat{q}$.
%
If $\alpha$ is an assignment, we write $\alpha(\hat{q})$ for the set $\{R(\alpha(\vec{x})) \mid R(\vec{x}) \text{ is a conjunct of $\hat{q}$}\}$ of facts of $\hat{q}$ under $\alpha$.

Following  \cite{GreenKT07},
if $q(\vec{x}) = \exists \vec{y}(R_1(\vec{z}_1) \land \ldots \land R_k(\vec{z}_k))$ is a CQ,
$\DB$ is a $\K$-database, and $\alpha$ is an  assignment on $\DB$,  
then, if $\sum$ and $\times$ are the (iterated) addition and multiplication of the semiring $\K$, 
the semantics of $q$ on $\DB,\alpha$  is the semiring value
\[
\suggestion{}{
\eval{q}{\DB}{\alpha} \dfn \sum_{\vec{a}\in D^{\vert \vec{y}\rvert}}  R_1^{\DB}(\beta(\vec{z}_1)) \times \ldots \times R_k^{\DB}(\beta(\vec{z}_k)), \quad\text{ where } \beta\dfn \alpha(\vec{a}/\vec{y}).
}
\]

\section{Repairs and consistent answers under semiring semantics} \label{sec:query}

Integrity constraints are semantic restrictions that the data of interest must obey. Integrity constraints on ordinary databases are typically expressed as sentences of first-order logic. In particular, this holds true for key constraints,  the most widely used integrity constraints. A \emph{key constraint} on a relation symbol $R$
asserts that the values of some  attributes of $R$ determine the values of all other attributes of $R$. In what follows, we adopt the convention that the key attributes occupy the leftmost positions in the relation symbol $R$. We will write $R(\vec{x}; \vec{y})$ 
to denote that the attributes in $\vec{x}$ form a key of $R$, and set $\key{R}\dfn \var(\vec{x})$.
Clearly, every key constraint is expressible in  \FO. For example, if we have a ternary relation symbol $R(x_1,x_2;y)$, then the first-order sentence $\forall x_1 \forall  x_2 \forall y \forall z (R(x_1,x_2,y) \land R(x_1,x_2,z)\rightarrow y=z)$ tells that the first two attributes of $R$ form a key. 

Let $\K$ be a semiring and let $\DB$ be a $\K$-database. What does it mean to say that $\DB$ satisfies a key constraint or, more generally, an integrity constraint $\psi$, where $\psi$ is an \FO-sentence? To answer this question, we have to give semiring semantics to \FO. In the previous section, we already gave such semantics to the fragment of \FO that expresses conjunctive queries, where  the addition and the multiplication operations of $\K$ were used to interpret, respectively, existential quantification and conjunction. This approach extends naturally to the negation-free fragment of \FO, but more care is needed to assign semiring semantics to arbitrary \FO-formulas. As part of the study of provenance in databases, Gr\"adel and Tannen \cite{gradel2017semiring} gave semiring semantics to \FO-formulas in negation normal form NNF (i.e., all negation symbols are ``pushed'' to the atoms) by using the notion of an \emph{interpretation}, which is a function that assigns semiring values to atomic or negated atomic facts. Here, we give  semiring semantics to \FO-formulas on a $\K$-database $\DB$ by, in effect, considering a particular \emph{canonical} interpretation on $\DB$; a similar approach was adopted by Barlag et al. in \cite{BHKPV23unified} for $\K$-teams, which can be viewed as  $\K$-databases over a schema with a single relation symbol. Appendix \ref{A:truth} contains a more detailed discussion of semiring semantics via interpretations.

Let $\mathbb K$ be a  semiring, $\DB$  a $\K$-database, and $\alpha\colon\Var\to D$ an assignment. If $\varphi$ is an \FO-formula in negation normal form, then we define the semiring value
$\eval{\varphi(\vec{x})}{\DB}{\alpha}$  recursively as follows:
\label{def:FOsemantics}
\begin{align*}
     \eval{R(\vec{x})}{\DB}{\alpha}&= R^\DB(\alpha(\vec{x})) 
     &    &    \\
     \eval{\neg R(\vec{x})}{\DB}{\alpha} &=  
     \begin{cases}
     1 &\hspace{-2mm}\text{if } \eval{R(\vec{x})}{\DB}{\alpha}=0\\
     0 &\hspace{-2mm}\text{if } \eval{R(\vec{x})}{\DB}{\alpha}\neq 0
     \end{cases} &
       \eval{(x * y)}{\DB}{\alpha} &=  
     \begin{cases}
     1 &\hspace{-2mm}\text{if }\alpha(\vec{x})*\alpha(\vec{y})\\
     0 &\hspace{-2mm}\text{otherwise}
     \end{cases} \mbox{where}~ *\in \{=,\neq\}\\
    \eval{(\varphi \land \psi)}{\DB}{\alpha}  & = \eval{\varphi}{\DB}{\alpha} \times \eval{ \psi}{\DB}{\alpha}
     &   \eval{(\varphi \lor \psi)}{\DB}{\alpha} & = \eval{\varphi}{\DB}{\alpha} + \eval{ \psi}{\DB}{\alpha}  \\
    \eval{\forall x\varphi}{\DB}{\alpha} & =    \prod_{a \in D} \eval{\varphi }{\DB}{\alpha(a/x)} 
    &  \eval{\exists x\varphi}{\DB}{\alpha} & =    \sum_{a \in D} \eval{\varphi }{\DB}{\alpha(a/x)}.
  \end{align*}

It is straightforward to prove by induction that if $\alpha$ and $\beta$ are assignments that agree on the free variables of an \FO-formula $\varphi$, then  $\eval{\varphi}{\DB}{\alpha}=\eval{\varphi}{\DB}{\beta}$, for every  $\K$ database $\DB$. 
Thus, from now on, when we consider an \FO-formula $\varphi$ and an assignment $\alpha$, we will assume that $\alpha$ is a function defined on the free variables on $\varphi$, hence $\alpha$ is a finite object. 
Furthermore, if $\varphi$ is a sentence (i.e., $\varphi$ has no free variables), then the semiring value $\eval{\varphi}{\DB}{\alpha}$
does not depend on the assignment $\alpha$.  In what follows, we write $\varphi (\DB) $ to denote that value, if $\varphi$ is an \FO-sentence.

The following result is a consequence of Proposition 9 in \cite{gradel2017semiring}.

\begin{proposition} \label{pro:interp}
If $\varphi$ is an \FO-formula in NNF,  $\DB$ is a $\K$-database, and $\alpha$ is an assignment, then 
\begin{equation}\label{eq:truth}
\eval{\varphi}{\DB}{\alpha} \not = 0 \quad\text{ if and only if }\quad \suppo(\DB), \alpha \models \varphi,
\end{equation}
where the symbol $\models$ in the right-hand side refers to satisfaction in standard (set-based) \FO.  In particular, if $\varphi$ is an \FO-sentence in NNF, then $\varphi(\DB) \not = 0$ if and only if ~ $\suppo(\DB) \models \varphi$.
\end{proposition}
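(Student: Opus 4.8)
The plan is to prove the equivalence \eqref{eq:truth} by structural induction on the NNF-formula $\varphi$, keeping track at each step of exactly which property of a naturally ordered positive semiring is invoked. Since the two ``In particular'' clauses follow immediately by specializing the general statement to sentences (whose value does not depend on $\alpha$), it suffices to establish \eqref{eq:truth} for all formulas $\varphi$ and all assignments $\alpha$.

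For the base cases I would read the equivalence straight off the recursive definition of $\eval{\cdot}{\DB}{\alpha}$. For a relational atom we have $\eval{R(\vec{x})}{\DB}{\alpha}=R^\DB(\alpha(\vec{x}))$, which is nonzero precisely when $\alpha(\vec{x})\in\suppo(R^\DB)$, i.e.\ when $\suppo(\DB),\alpha\models R(\vec{x})$. The negated atom $\neg R(\vec{x})$ and the (in)equality atoms are even more direct: by definition their value is $1$ exactly when the corresponding set-theoretic condition holds and $0$ otherwise, so the equivalence is immediate.

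For the inductive step I would split on the outermost connective. For $\varphi\land\psi$ the value is the product $\eval{\varphi}{\DB}{\alpha}\times\eval{\psi}{\DB}{\alpha}$, and because $\K$ has \emph{no zero-divisors} this product is nonzero iff both factors are, which by the induction hypothesis matches $\suppo(\DB),\alpha\models\varphi\land\psi$. Dually, for $\forall x\,\varphi$ the value is the finite product $\prod_{a\in D}\eval{\varphi}{\DB}{\alpha(a/x)}$ --- finite because the active domain $D$ is finite, every $\K$-relation having finite support --- so the no-zero-divisors property again yields that the product is nonzero iff every factor is, i.e.\ iff $\suppo(\DB),\alpha(a/x)\models\varphi$ for all $a\in D$. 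For $\varphi\lor\psi$ the value is the sum $\eval{\varphi}{\DB}{\alpha}+\eval{\psi}{\DB}{\alpha}$, and \emph{positivity} ($a+b=0$ implies $a=b=0$) makes the sum nonzero iff at least one summand is; the case $\exists x\,\varphi$ is the analogous finite-sum version over $D$. In each case the induction hypothesis converts the ``some/all factor nonzero'' condition into the matching clause of set-based satisfaction.

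The argument is conceptually routine; the only care needed is the bookkeeping of which semiring axiom each connective relies on. The real content --- and the point I would flag as the main obstacle --- is recognizing that every step genuinely uses the full positivity package: the $\lor$/$\exists$ cases fail for a non-positive semiring, where a sum of nonzero elements could vanish, while the $\land$/$\forall$ cases fail in the presence of zero-divisors. Once this is noted, the finiteness of $D$ guarantees that all sums and products appearing in the quantifier clauses are finite, so no convergence or infinitary-distributivity issues arise and the induction goes through. This is precisely the specialization to the canonical interpretation on $\DB$ of Proposition 9 in \cite{gradel2017semiring}, which one may alternatively invoke directly.
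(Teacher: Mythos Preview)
Your proof is correct and matches the underlying argument: the paper itself simply appeals to Proposition~9 of \cite{gradel2017semiring} (and, in Appendix~\ref{A:truth}, recasts this via model-defining interpretations and Proposition~\ref{sem-prop}), while you spell out the structural induction that this cited result amounts to for the canonical interpretation. Since you also note explicitly that one may alternatively invoke \cite{gradel2017semiring} directly, there is nothing to add.
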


Proposition \ref{pro:interp} leads to the following natural definition for satisfaction over $\K$-databases.
\begin{definition}\label{def:satisfaction}
 Let $\varphi$ be an \FO-formula in NNF, $\DB$  a $\K$-database, and $\alpha$ an assignment. We say that $\DB,\alpha$ \emph{satisfies} $\varphi$, denoted  $\DB,\alpha \models_\K \varphi$, if  
    $\eval{\varphi}{\DB}{\alpha}\neq 0$ (equivalently, $\suppo(\DB), \alpha \models \varphi$). In particular, if
    $\varphi$ is an \FO-sentence in NNF, then
    we write $\DB \models_{\K} \varphi$ if
$\varphi(\DB) \not = 0$ (equivalently,
$\suppo(\DB)\models \varphi$).

If $\Sigma$ is a set of \FO-sentences in NNF,  we write $\DB \models_\K \Sigma$ to denote that $\DB\models_\K \varphi$ for every $\varphi \in \Sigma$. 
\end{definition}

With Definition \ref{def:satisfaction} at hand, we are now ready to define the notion of a repair of a $\K$-database $\DB$ with respect to a set $\Sigma$ of key constraints. Here, we will assume that every key constraint is defined by an \FO-sentence in NNF in the standard way.  For example, if $R(x_1,x_2;y)$ is a relation symbol in which the first two attributes form a key, then the key constraint expressing this property will be given by the \FO-sentence 
$\forall x_1,x_2,y,z(\neg R(x_1,x_2,y) \lor \neg R(x_1,x_2,z) \lor y = z)$.

Recall also that if $\DB$ and $\DB'$ are two $\K$-databases, then $\DB' \subseteq \DB$ means that for every relation symbol $R$, the following two properties hold: (i) $\suppo(R^{\DB'})\subseteq \suppo(R^{\DB})$;  (ii) $R^{\DB'}(\vec{a})=R^{\DB}(\vec{a})$, for every $\vec{a} \in \suppo (R^{\DB'})$. We will write $\DB' \subset \DB$ to denote that $\DB'\subseteq \DB$ and $\DB'\not = \DB$.

\begin{definition}\label{def:repair_Kdb_egds}
    Let $\Sigma$ be a set of key constraints, $\K$ a naturally ordered positive semiring, and $\DB$ a $\K$-database. A $\K$-database $\DB'$ is a \emph{repair} of $\DB$ w.r.t.
    $\Sigma$ if:
    \begin{enumerate}
        \item $\DB' \models_\K \Sigma$ (which amounts to  $\suppo(\DB') \models  \varphi$, for every $\varphi \in \Sigma$); 
        \item $\DB'\subseteq \DB$ and there is no $\K$-database $\DB''$ such that $\DB' \subset \DB''\subseteq \DB$ and $\DB''\models_\K \Sigma$. %
    \end{enumerate}
    We write $\Rep(\DB,\Sigma)$ for the collection of repairs of $\DB$ with respect to $\Sigma$.
\end{definition}

Several remarks are in order now concerning the notion of repair in Definition \ref{def:repair_Kdb_egds}. First, 
in the case of ordinary databases (i.e.,  $\K$ is the Boolean semiring $\mathbb B$), this notion  coincides with the standard notion of a (subset) repair of an ordinary database with respect to a set of key constraints.
Second, this notion is quite robust in the sense that we get the same notion if, instead of $\subseteq$,  we had used the more relaxed notion $\leq_{\K}$ for comparing $\K$-databases. Finally, the notion can be extended by considering arbitrary \FO-sentences as integrity constraints. Here, we gave the definition of a repair with respect to key constraints, since, in this paper, our entire focus is on such constraints.

\begin{example} Let ${\mathbb N}=(N,+,\cdot, 0,1)$ be the bag semiring and consider a schema consisting of a single ternary relation  $R(x_1,x_2;y)$  in which the first two attributes form a key. Let $\DB$ be the $\mathbb N$-database  with $R^\DB(a,b,c)=2, R^\DB(a,b,d)=3, R^\DB(a,a,a) = 4$, and  $R^\DB(a',b',c')=0$ for all other values (the values $a,b,c,d$ are assumed to be distinct). Then $\DB$ has exactly two repairs $\DB_1$ and $\DB_2$ with respect the key constraint of $R$, where the non-zero values of $R^{\DB_1}$
are $R^{\DB_1}(a,b,c)=2,  R^{\DB_1}(a,a,a) = 4$, while the non-zero values of $R^{\DB_2}$ are
$R^{\DB_2}(a,b,d)=3 , R^{\DB_2}(a,a,a) = 4$.
\end{example}

We are now ready to define the notion of consistent answers in the semiring setting.

\begin{definition} \label{def:cqa}
Let $\K$ be a naturally ordered positive semiring,  $\Sigma$ a set of key constraints, $\varphi$ an \FO-formula, $\DB$ a $\K$-database, and  $\alpha$ a variable assignment. 
%
The \emph{consistent answers $\mathrm{m}\CA_\K(\varphi, \Sigma, \DB, \alpha)$ of $\varphi$ on $\DB, \alpha$ with respect to $\Sigma$} is defined as
\(
\min_{\DB'\in \Rep(\DB,\Sigma)} \varphi(\DB',\alpha).
\)
%
If $\varphi$ is an \FO-sentence, then 
the \emph{consistent answers 
of $\varphi$ on $\DB$ with respect to $\Sigma$} is the value
$\mathrm{m}\CA_\K(\varphi, \Sigma, \DB) = \min_{\DB'\in \Rep(\DB,\Sigma)} \varphi(\DB')$.  
\end{definition}
The consistent answers 
provide
the tightest lower 
bound on the values $\varphi(\DB',\alpha)$ as $\DB'$ ranges over all repairs of $\DB$.  
On the Boolean semiring $\mathbb B$, they 
coincide with the consistent answers.

 


\begin{example} Let ${\mathbb V}=([0,1], \max, \times, 0, 1)$ be the Viterbi semiring. For   $n\geq 2$, let $\DB$ be a $\mathbb V$-database with $\mathbb V$-relations $E_i(x;y)$, for $1\leq i\leq n$, each encoding a simple directed edge-weighted graph in which
the weight of every edge is a  real number in $(0,1]$.
The weight of an edge can be thought of as the \emph{confidence} of the edge. We define the \emph{confidence of a path along $E_1,E_2,\dots, E_n$} to be the product of the confidences of its edges. Let us now consider the closed conjunctive query
$q_n \dfn \exists x_0\cdots \exists x_{n} \big(E_1(x_0;x_1)\land \dots \land E_n(x_{n-1};x_n) \big)$  and  the set $\Sigma$ of the key constraints of $E_i$, $1\leq i \leq n$. 
Then, 
$\m\CA_{\mathbb V}(q_n, \Sigma, \DB, \alpha)$
is a number $c \in [0,1]$ that has the following two properties: (i) in every repair of $\DB$, there is a path along $E_1,E_2,\ldots,E_n$ of confidence at least $c$; (ii) there is a repair of $\DB$ in which the maximum confidence of a path along $E_1,\ldots,E_n$ is $c$. 
\end{example}

Henceforth,  we will focus on consistent answers for self-join-free CQs and key constraints.

\commentout{
\phokion{This section is in good shape, but I think that there are several things that we should address: (i) we need to make clearer the connection with the semantics in Graedel-Tannen because here we consider just one particular type of interpretations; (ii) we should point out that the semantics given agrees with the one for the CQs given in the previous section; (iii) the discussion and the motivation for the definition of the repairs is a bit convoluted, especially when it comes to the connections between $\subseteq$ and $\preceq$. One way to address this is to give the definition as is using $\subseteq$, then introduce $\preceq$, and then have a lemma that the notion of repair will not change if we use $\preceq$ instead of $\subseteq$. We also need some more examples.}

In the literature, there are several ways to define semantics for first-order logic over semiring annotated data.
In particular, the handling of negation, and hence implication, has presented challenges and various solutions.
We base the semantics of first-order logic over $\K$-databases on the definition of a $\K$-interpretation by Gr\"adel and Tannen \cite{gradel2017semiring}, as did Barlag et al. in \cite{BHKPV23unified} for $\K$-teams, which can be seen as unirelational $\K$-databases. See Appendix \ref{A:truth} for a more detailed discussion.


We assign a semiring value $\eval{\varphi(\vec{x})}{\DB}{\alpha}$ for every $\K$-database, negation normal form $\FO$-formula $\varphi$, and assignment $\alpha\colon\Var\to \dv$ recursively as follows:
\label{def:FOsemantics}
\begin{align*}
     \eval{R(\vec{x})}{\DB}{\alpha}&= R^\DB(\alpha(\vec{x})) 
     &    &    \\
    \eval{(\varphi \land \psi)}{\DB}{\alpha}  & = \eval{\varphi}{\DB}{\alpha} \times \eval{ \psi}{\DB}{\alpha}
     &   \eval{(\varphi \lor \psi)}{\DB}{\alpha} & = \eval{\varphi}{\DB}{\alpha} + \eval{ \psi}{\DB}{\alpha}  \\
    \eval{\forall x\varphi}{\DB}{\alpha} & =    \prod_{a \in D} \eval{\varphi }{\DB}{\alpha(a/x)} 
    &  \eval{\exists x\varphi}{\DB}{\alpha} & =    \sum_{a \in D} \eval{\varphi }{\DB}{\alpha(a/x)}\\
       \eval{\neg R(\vec{x})}{\DB}{\alpha} &=  
     \begin{cases}
     1 &\hspace{-2mm}\text{if } \eval{R(\vec{x})}{\DB}{\alpha}=0\\
     0 &\hspace{-2mm}\text{if } \eval{R(\vec{x})}{\DB}{\alpha}\neq 0,
     \end{cases} &
       \eval{(x * y)}{\DB}{\alpha} &=  
     \begin{cases}
     1 &\hspace{-2mm}\text{if }s(x)*s(y)\\
     0 &\hspace{-2mm}\text{otherwise},
     \end{cases} *\in \{=,\neq\}.
  \end{align*}

It is straightforward to proof by induction that, if $\alpha$ and $\beta$ are assignments that agree on the free variables of a formula $\varphi$, then  $\eval{\varphi}{\DB}{\alpha}=\eval{\varphi}{\DB}{\beta}$ holds, for any $\K$ database $\DB$.


While Gr\"adel and Tannen do not explicitly define what it means for a $\K$-database to satisfy an \FO-formula $\varphi$, they establish the following connection. Below, the left-hand-side refers to satisfaction in standard (set-based) \FO.
\begin{equation}\label{eq:truth}
\suppo(\DB), \alpha \models \varphi \quad\text{ if and only if }\quad \eval{\varphi}{\DB}{\alpha} \not = 0.
\end{equation}
This leads to the following natural definition for satisfaction over $\K$-databases.
\begin{definition}\label{def:satisfaction}
    Let $\K$ be a semiring, $\DB$ be a $\K$-database, $\alpha$ an assignment, and $\varphi\in\FO$. We write $\DB,\alpha \models \varphi$, and say that $\DB,\alpha$ satisfies $\varphi$, whenever $\eval{\varphi}{\DB}{\alpha}\neq 0$.
\end{definition}

An \emph{integrity constraint} (IC) is a $\FO$-sentence that describes some desirable property that the data should comply.
We focus on \emph{key constraints}, which express that certain attributes constitute a key for a given $\K$-relation. For a $\K$-relation $R$, we write $R(\vec{x}; \vec{y})$ 
to denote that the attributes in $\vec{x}$ constitute a key of $R$, and set $\key{R}\dfn \var(\vec{x})$. Formally this means that
\(
R^{\DB}(\vec{a},\vec{b})\neq 0 \land R^{\DB}(\vec{a},\vec{c})\neq 0 \text{ implies } \vec{b}=\vec{c},
\)
forall $\vec{a} \in D^{\lvert {\vec{x}}\rvert}$ and $\vec{b},\vec{c}\in D^{\lvert {\vec{y}}\rvert}$.
Note that, we adopt the convention that key attributes are always stored in the leftmost coordinates of a relation. 
A set of integrity constraints is \emph{consistent} if there is an instance $\DB$ that satisfies the ICs.

In order to define the notion of consistent query answering for $\K$-databases, we first have to define the notion of a repair of a $\K$-database with respect to some set of ICs. Our definition is based on the principles that the definition should be parametric on the semiring and that for the Boolean semiring, we should obtain the standard repair notions from literature (cf. \cite{DBLP:series/synthesis/2011Bertossi}). We focus on key constraints.

\begin{example}
    Let $\DB$ be a $\K$-database with a binary $\K$-relation $R$. 
    Consider the key constraint
    $\forall x y z \big( R(x,z) \land R(y,z) \rightarrow x=y \big)$. For giving it a semiring interpretation, we must give semantics for the implication. We choose to use the following rewriting
    \(
    \varphi \dfn \forall x y z \big( \neg R(x,z) \lor  \neg R(y,z) \lor x=y \big)
    \)
    of the formula in negation normal form,
    so that we may use the semiring semantics we have just defined to give an interpretation for the key constraint. It is easy to see that if $\DB \leq \DB'$ then
    $\DB',\alpha\models \varphi$ implies  $\DB,\alpha\models \varphi$.
    This shows that the natural repair notion for key constraints is a form of a subset repair, analogous to the case of set-based databases. Arguably, $\DB'$ is a repair of $\DB$ with respect to $\varphi$, if $\DB'\subseteq \DB$ and is $\subseteq$-maximal of those that satisfy $\varphi $ (i.e., there is no $\DB''$ that satisfies $\varphi$ and $\DB' \subset \DB'' \subseteq \DB$). Interestingly, this repair notion remains unchanged if $\subseteq$ is replaced with $\leq$.
\end{example}

More generally, \eqref{eq:truth} implies that for any $\K$-database $\DB$ and $\Sigma\subseteq\FO$, the satisfaction of $\Sigma$ in $\DB$ depends only on the support of $\DB$. Hence, a natural choice for a repair of $\DB$ with respect to $\Sigma$ can be obtained directly from a classical set-based repair, when the semiring annotation is considered as an additional attribute. Thus, we obtain the following definition. Note that our definition is the natural generalization of set-based repairs in the semiring setting. A natural generalization of cardinality-based repairs would also take the values of the semiring annotations under consideration. We leave generalizations of cardinality-based repairs as future work.

\begin{definition}\label{def:repair_Kdb_egds}
    Let $\Sigma$ be a set of \FO-sentences, $\K$ a naturally ordered positive semiring, and $\DB$ a $\K$-database. A $\K$-database $\DB'$ is a \emph{repair} of $\DB$ w.r.t.
    $\Sigma$ if:
    \begin{enumerate}
        \item $\DB' \models \Sigma$, that is $Supp(\DB') \models  \varphi$ for every $\varphi \in \Sigma$; 
        \item $\DB'\subseteq \DB$ and there is no $\DB''$ s.t. $\DB' \subset \DB''\subseteq \DB$ and $\DB''\models \Sigma$. 
    \end{enumerate}
    We write $\Rep(\DB,\Sigma)$ for the collection of repairs of $\DB$ w.r.t. $\Sigma$.
\end{definition}

Hence, a repair of a $\K$-database of $\DB$ is a $\K$-database $\DB'\subseteq\DB$ that is subset maximal in the class of  $\K$-database whose support satisfies the given ICs.

\begin{example}
\jonni{Give an example of a subset repair in this setting.}
\end{example}

We are now ready to define two variants of consistent answers in the semiring setting that correspond to cautious and credulous answers, respectively.

\begin{definition}
Let $\K$ be a naturally ordered positive semiring, $\Sigma$ a set of ICs, $q\in\FO$, $\alpha$ a variable assignment, and $\DB$ a $\K$-database. 
The consistent answers $\mathrm{m}\CA_\K(q, \Sigma, \DB, \alpha)$ of $q$ in $\DB, \alpha$ with respect to $\Sigma$ is defined as
\(
\min_{\DB'\in \Rep(\DB,\Sigma)} q(\DB',\alpha).
\)
The credulous consistent answers $\mathrm{M}\CA_\K(q, \Sigma, \DB, \alpha)$ of $q$ in $\DB, \alpha$ with respect to $\Sigma$ is defined as
\(
\max_{\DB'\in \Rep(\DB,\Sigma)} q(\DB',\alpha).
\)
\end{definition}

\jonni{Think of a meaningful example with a positive semiring.}
\begin{example}
Let $G=(V,E)$ be a simple directed graph whose edges are labeled with real numbers and let $\DB$ be $\K=(\R,\max,+)$-database with $\K$-relations $E_0(x;y), \dots E_n(x;y)$ that are identical copies of $E$. Let
\[
q_n= \exists x_1\dots x_{n-1} \big(E_1(x_0;x_1)\land \dots \land E_n(x_{n-1};x_n) \big).
\]
Let $\Sigma$ be the collection of key constraints specified by $q_n$ together with the formula $q_n$.
Now $\m\CA_\K(q_n, \Sigma, \DB, \alpha)$ returns the least weight of paths of length $n$ from $\alpha(x_0)$ to $\alpha(x_n)$ in $G$.
\end{example}

From now on we focus on consistent answers for self-join-free conjunctive queries and key constraints.

}

\section{Consistent answers and query rewriting}\label{sec:mca}

Data complexity is one of the most useful measures in studying query answering \cite{Vardi82thecomplexity}. This stems from the observation that in practice queries are typically of small size (e.g., they are written by a user), while databases are of large size. In data complexity, the query is fixed and only the database is the input. Hence, each query $q$ gives rise to a separate computational problem.

\begin{definition}
Let $\K$ be a naturally ordered positive semiring, $q$ a CQ, and $\Sigma$ a set of key constraints.
We define the following function problem.
    \begin{center}
        \fbox{\begin{minipage}{30em}
            \textsc{Problem}: $\m\CA_\K(q,\Sigma)$
        
            \textsc{Input}: A $\K$-database $\DB$ and an assignment $\alpha$.  
            
            \textsc{Output}: The value of $\m\CA_\K(q,\Sigma,\DB,\alpha)$.
        \end{minipage}}
    \end{center}
\end{definition}

\suggestion{%
}
{If $\K=\B$ and $q$ is a closed CQ, then $\m\CA_\K(q,\Sigma)$ is a decision problem, which in the introduction we  denoted by $\CONS(q,\Sigma)$.
We say that $\CONS(q,\Sigma)$ is \emph{$\FO$-rewritable} if there is a  $q'\in\FO$ such that $\cons(q,\Sigma, \DB) = q'(\DB)$, for every ~$\DB$.
}%
%
%
The main benefit of an \FO-rewriting  of $\m\CA_\B(q,\Sigma)$  is that such a rewriting reduces consistent query answering to query evaluation.  Thus, a database engine developed for query evaluation  can  be directly used to compute the consistent answers. Note that having a query rewriting in some logic is trivial; for instance, one may internalize the process of checking all repairs using second-order logic.
%
The real benefit of  having an \FO-rewriting is that the evaluation of \FO-formulas, with respect to data complexity, is fast and highly parallelizable. In fact, $\FO$-definable properties can be recognized by circuits of constant-depth and polynomial-size;  more precisely, they lie in the circuit complexity class $\DLOGTIME$-uniform $\aco$ \cite{DBLP:journals/jcss/BarringtonIS90}.
Our goal is to identify a suitable  notion of $\FO$-rewritability for $\m\CA_\K(q,\Sigma)$, where $\K$ is a naturally ordered positive semiring.

As a motivation, we first present an example of the rewriting of $\m\CA_\K(q,\Sigma)$ in the  semiring context for a particular query $q$. In Section \ref{sec:logic}, we introduce the logic $\Logic_\K$ that will be used to express rewritings. In Section \ref{sec:logic_good}, we show that the data complexity of $\Logic_\K$ lies in a semiring variant of uniform $\aco$. Finally, in Section \ref{sec:rewriting}, we present our main result concerning the rewriting of  $\m\CA_\K(q,\Sigma)$ for sjfCQs and key constraints, one key  for each relation in $q$.  In what follows, we consider only sjfCQs and key constraints that can be read from the query; thus, we drop $\Sigma$ from the notation and write simply $\m\CA_\K(q)$, $\m\CA_\K(q,\DB,\alpha)$, and $\Rep(\DB)$. Recall that we also drop  $\alpha$, when we consider closed queries and when the proofs do no technically require an assignment.

%
%

\subsection{Rewriting of the consistent answers of the path query}\label{sec:qpath}

Recall that $\qpath = \exists x \exists y \exists z( R(x;y)\land S(y;z))$, for which $\m\CA_\B(\qpath)$  has the following
\FO-rewriting:
\(
\exists x \exists z' (R(x,z') \land \forall z( R(x,z) \to \exists y S(z,y)))
\)~\cite{DBLP:journals/jcss/FuxmanM07}.
We want to obtain a similar expression in our setting for \qpath, that is, an expression using semiring operations that provides the answer $\m\CA_\K(\qpath,\DB)$ for $\K$-databases without having to evaluate \qpath on every  repair of $\DB$.
We define the expression $\epath$ as follows:
\begin{equation}\label{eq:epath}
     \epath \dfn \sum_{a\in D} \min_{b\in D:R^\DB(a,b)\neq 0}({R^\DB(a,b) \times \min_{c\in D:S^\DB(b,c)\neq 0}{S^\DB(b,c)}}),
\end{equation}
 where $\sum$, $\times$, and $\min$ refer to operations of a naturally ordered positive semiring $\K$.\suggestion{}{
 We call $R^\DB(a,b)\neq 0$ and $S^\DB(b,c)\neq 0$ \emph{guards} of the minimisation operators.
 The $\min$-operator in~\eqref{eq:epath} may take an empty set as its range. This occurs specifically when there is no~$b\in D$ such that $R^\DB(a,b)\neq 0$.
 In expressions of the above form, we adopt the convention that $\min(\emptyset)=0$.} 
%
%
We claim that
\begin{align}
\m\CA_\K(\qpath,\DB)
= \epath,
\label{eq:rewriting_qpath}
\end{align}
for every $\K$-database $\DB$. 
%
%
%
Next we show that the equality holds.
Note that 
    \[
    \m\CA_\K(\qpath,\DB) =
    \min_{\DB' \in \Rep(\DB)} \FOKeval{\qpath}{\DB'} =
    \min_{\DB' \in \Rep(\DB)} \sum_{a,b,c\in D'} R^{\DB'}(a,b) \times S^{\DB'}(b,c),
    \]
where the former equality is the definition of $\m\CA_\K$ and the latter follows from the semiring semantics for CQs.  
%
To show (\ref{eq:rewriting_qpath}), 
it suffices to establish that the following two statements hold:
\begin{enumerate}[label=(\roman*)]
    \item For every repair $\DB'$ of $\DB$, we have that $\epath \leq_\K \qpath(\DB')$; \label{qpath:first}
    \item There is a repair $\DB^*$ of $\DB$ such that $\epath = \qpath(\DB^*)$. \label{qpath:second}
\end{enumerate}

The first statement is proved by inspecting the expression case-by-case, while the second is proved by constructing a suitable repair. The full proof is in Appendix~\ref{appendix_qpath}.

\suggestion{}{
In order to construct a rewriting of $\m\CA_\K(\qpath)$, we need a logical formula that evaluates to $\epath$, for every $\K$-database $\DB$. 
We first argue why existing rewritings for the Boolean semiring~\cite{DBLP:journals/jcss/FuxmanM07, KoutrisW17} do not straightforwardly extend to arbitrary naturally ordered positive semirings.  
Intuitively, this is because expressions like~\eqref{eq:epath} are built from sum ($\sum$), product $(\prod)$, and $\min$. While sum and product naturally correspond to existential and universal quantification, respectively, standard rewritings do not provide a natural equivalent for $\min$. In fact, $\min$ is not needed in the Boolean semiring, since in expressions like~\eqref{eq:epath}, the guarded $\min$-operator would always evaluate to~$1$ provided that the guards are supported, and to~$0$ otherwise. Unguarded $\min$-operators, on the other hand, directly correspond to product on the Boolean semiring.

To formalize the preceding discussion, the first-order rewriting of Fuxman and Miller~\cite{DBLP:journals/jcss/FuxmanM07} and Koutris and Wijsen~\cite{KoutrisW17} for $\m\CA_\B(\qpath)$,  when expressed in negation normal form, is 
\begin{equation}\label{eq:fuxman}
\exists x \exists z' (R(x,z') \land \forall z( \neg R(x,z) \lor \exists y S(z,y))).
\end{equation}
Under the semiring semantics of first-order logic, this rewriting would evaluate to the expression
\begin{equation}\label{eq:fuxmannew}
\sum_{a\in D, b\in D} \Big(
R^\DB(a,b) \times
\prod_{c\in D} \big( \supp(R^\DB(a,c)) +
\sum_{d\in D} S^\DB(c,d)\big)\Big),
\end{equation}
where $\supp(k)$ is a function introduced in Section~\ref{sec:logic} that maps any non-zero semiring value $k$ to~$0$, and $0$ to $1$.
While this expression is correct for the Boolean semiring, as expected, it is incorrect for other semirings, such as the bag semiring, since it uses a product of semiring values instead of minimization.
This raises the question of whether a correct rewriting is obtained by interpreting the universal quantifier as a minimization operator. Under this interpretation, the rewriting~\eqref{eq:fuxman} evaluates to the expression
\[
\sum_{a\in D, b\in D} \Big(
R^\DB(a,b) \times
\min_{c\in D} \big( \supp(R^\DB(a,c)) +
\sum_{d\in D} S^\DB(c,d)\big)\Big).
\]
To see why the latter expression is incorrect for the bag semiring, consider an $\N$-database $\DB$ with $R^\DB(a,b)=R^\DB(a,c)=1$ and $S^\DB(b,d)=S^\DB(c,d)=1$. Then $\m\CA_\N(\qpath,\DB)=1$, but the expression evaluates to $2$.



To obtain a rewriting whose semantics matches $\epath$, we need a logic capable of expressing the \emph{guarded} minimization operators used in \eqref{eq:epath}. A rewriting in such a logic could be expressed as follows:
\begin{equation}\label{eq:lepath}
\exists x \nabla_{R(x,y)} y. \,({R(x,y) \times \nabla_{S(y,z)} z .\, {S(y,z)}}),
\end{equation}
where $\nabla_{R(x,y)} y$ is a sort of minimization operator that utilizes a \emph{guard}. 
In the next section, we introduce a logic $\Logic_\K$ that supports such formulas.
}

\subsection{A logic for query rewriting over $\K$-databases}\label{sec:logic}

%
We have just exhibited an expression based on semiring operations for rewriting  the consistent answers of the path query on $\K$-databases.
Next we make this more formal, and introduce a general notion of query rewritability for $\K$-databases. For this purpose, we need to define a suitable logic.


%
As before, $\tau$ is a relational schema and $\K$ a naturally ordered positive semiring. 
%
%
The syntax of $\Logic_\K$ is given by the following grammar:
\begin{align*}
\varphi \,\coloneqq\,& R(\vec{x}) \,|\, x=y \,|\, \varphi\land\varphi 
\,|\, \varphi\lor\varphi 
\,|\, \exists x\, \varphi
\,|\, \nabla x \varphi(x) \,|\, \supp(\varphi),
\end{align*}
where $R\in\tau$ of arity $r$, and $\vec{x}=(x_1,\ldots,x_r)$, for $x_1,\ldots,x_r,x,y\in \Var$.

Let $\DB$ be a $\K$-database and $\alpha\colon \Var \to D$ an assignment.
The value $\FOKeval{\varphi}{\DB,\alpha}$ of a formula $\varphi\in \Logic_\K$ is defined recursively as follows. The cases for literals, Boolean connectives, and the existential quantifier are as  the  semiring semantics for $\FO$ (see page \pageref{def:FOsemantics}). \suggestion{}{The semantics of the new constructs is as follows:}
%
%
 \begin{align*}
        \FOKeval{\nabla x \varphi(x)}{\DB,\alpha} 
    & = \min_{a\in D} \FOKeval{\varphi}{\DB,\alpha(a/x)}
    & \FOKeval{\supp(\varphi)}{\DB,\alpha}
    &=     
    \begin{cases}
     1 & \text{if }\FOKeval{\varphi}{\DB,\alpha} = 0 \\
     0 & \text{otherwise}.
    \end{cases}
 \end{align*}
%
%
%
 Over the Boolean semiring, the semantics of the minimization operator $\nabla x$ coincides with the universal quantifier and $\supp(\varphi)$ corresponds to negation.
In general, the $\supp$ operator serves as a (weak) negation operator since it flattens every non-zero annotation to return $0$, thus in a sense losing all the shades of information provided by non-zero annotations. We use the shorthand $\suppo(\varphi)$ for $\supp(\supp(\varphi))$, which flattens all non-zero weights to $1$.


\begin{remark}
    $\Logic_\B$ is essentially $\FO$; furthermore, in $\Logic_\B$, $\nabla x$ is $\forall x$ and $\supp(\varphi)$ is $\neg \varphi$.
\end{remark}
This observation can be extended to the result that $\FO$ embeds into $\Logic_\K$, for every naturally ordered positive semiring $\K$, in the following sense. \suggestion{}{The translation between the logics is obtained by identifying $\nabla x$ with $\forall x$ and $\supp(\varphi)$ with $\neg \varphi$.}
\begin{proposition}\label{prop:semiring_to_Boolean}
Let $\K$ be a naturally ordered positive semiring. For every $\varphi\in \Logic_\K$, there exists $\psi\in \FO$ such that
$\FOKeval{\varphi}{\DB,\alpha} \neq 0$  if and only if  $\suppo(\DB), \alpha \models \psi$,
for every $\K$-database $\DB$ and every assignment $\alpha$.
Conversely, for every $\psi\in \FO$, there exists $\varphi\in \Logic_\K$ such that the  ``if and only if'' holds.
If the annotations of $\DB$ are from the set $\{0,1\}$, then $\FOKeval{\varphi}{\DB,\alpha} \neq 0$  if and only if  $\DB, \alpha \models~\psi$.
\end{proposition}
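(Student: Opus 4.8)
The plan is to prove both directions by a single structural induction that ties the \emph{support} of an $\Logic_\K$-formula (the set of pairs $\DB,\alpha$ at which its value is non-zero) to classical satisfaction over $\suppo(\DB)$, read with active-domain semantics so that $\suppo(\DB)$ is the finite relational structure with universe $D$. Concretely, I would define a translation $\tau\colon\Logic_\K\to\FO$ by the obvious recursion: $\tau$ is the identity on atoms $R(\vec{x})$ and equalities $x=y$, it commutes with $\land$, $\lor$, and $\exists$, and it sends $\nabla x\,\varphi$ to $\forall x\,\tau(\varphi)$ and $\supp(\varphi)$ to $\neg\,\tau(\varphi)$. The target of the induction is the claim that $\FOKeval{\varphi}{\DB,\alpha}\neq 0$ if and only if $\suppo(\DB),\alpha\models\tau(\varphi)$, for every $\K$-database $\DB$ and assignment $\alpha$; this yields the first assertion with $\psi\dfn\tau(\varphi)$.

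The base cases and the cases for $\land$, $\lor$, and $\exists$ are handled by the same argument as in \cref{pro:interp} and use only that $\K$ is positive and has no zero-divisors: for $\land$ the product $\FOKeval{\varphi_1}{\DB,\alpha}\times\FOKeval{\varphi_2}{\DB,\alpha}$ is non-zero iff both factors are non-zero, and for $\lor$ and $\exists$ a finite sum is non-zero iff at least one summand is non-zero. Hence it remains only to treat the two new constructs $\supp$ and $\nabla$. The case of $\supp$ is immediate from its semantics: $\FOKeval{\supp(\varphi)}{\DB,\alpha}$ is non-zero exactly when $\FOKeval{\varphi}{\DB,\alpha}=0$, which by the induction hypothesis happens exactly when $\suppo(\DB),\alpha\not\models\tau(\varphi)$, i.e. when $\suppo(\DB),\alpha\models\neg\,\tau(\varphi)$.

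The conceptually important step, and the one I expect to be the crux, is the case of $\nabla$. Here I would use that in a naturally ordered positive semiring $0$ is the $\leq_\K$-least element, since $0+a=a$ witnesses $0\leq_\K a$ for every $a$. Because $\leq_\K$ is a total order and the active domain $D$ is non-empty (by the standing convention that $\K$-databases have a non-empty active domain), the value $\FOKeval{\nabla x\,\varphi}{\DB,\alpha}=\min_{a\in D}\FOKeval{\varphi}{\DB,\alpha(a/x)}$ is non-zero if and only if none of the values $\FOKeval{\varphi}{\DB,\alpha(a/x)}$ equals $0$, i.e. if and only if all of them are non-zero. By the induction hypothesis this is exactly $\suppo(\DB),\alpha\models\forall x\,\tau(\varphi)$. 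This is the precise point where the minimization operator $\nabla$ and the universal quantifier agree at the level of supports, even though $\min$ and the semiring product differ as values.

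For the converse I would define the reverse translation $\sigma\colon\FO\to\Logic_\K$ that replaces $\neg$ by $\supp$, $\forall$ by $\nabla$, and $x\neq y$ by $\supp(x=y)$, and is otherwise the identity; the same induction, now on $\psi\in\FO$, gives $\FOKeval{\sigma(\psi)}{\DB,\alpha}\neq 0$ iff $\suppo(\DB),\alpha\models\psi$, providing the required $\varphi\dfn\sigma(\psi)$. Finally, when all annotations of $\DB$ lie in $\{0,1\}$, the support operation fixes every relation pointwise, so $\suppo(\DB)$ coincides with $\DB$ viewed as an ordinary set-based database; substituting this identity into the biconditional just proved yields $\FOKeval{\varphi}{\DB,\alpha}\neq 0$ iff $\DB,\alpha\models\psi$, as claimed. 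The only genuine subtlety throughout is the $\nabla$ case; all remaining cases are routine once positivity and the absence of zero-divisors are invoked.
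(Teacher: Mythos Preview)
Your proposal is correct and follows precisely the approach indicated by the paper, which states that the translation is obtained by identifying $\nabla x$ with $\forall x$ and $\supp(\varphi)$ with $\neg\varphi$. Your inductive argument fleshes out exactly this translation, and the key observation you highlight for the $\nabla$ case (that $0$ is the $\leq_\K$-least element, so a minimum over a non-empty domain is non-zero iff every value is non-zero) is indeed the only non-routine step.
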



The formula $\nabla x.\, \varphi(x)$ computes the minimum value of $\varphi(a/x)$, where $a$ ranges over the active domain of the database. However, sometimes we want $a$ to range over the support of some definable predicate, that we call \emph{guard} and write $G$. \suggestion{}{
In general, the guard is an $\Logic_\K$-formula $G(\vec{y},z)$, and typically it has free variables from $\varphi$.} For this purpose, we define the following shorthand
\begin{equation}\label{eq:nabla}
\nabla_G z. \,\varphi(\vec{y},z)  \dfn \nabla z. \theta(\vec{y},z), \text{ where }\theta(\vec{y},z)\dfn \Big(\big( \supp(G(\vec{y},z)) \land \exists z' \varphi(\vec{y},z') \land \chi \big) \lor \big(\varphi(\vec{y},z) \land \chi\big) \Big),
\end{equation}
where $G, \varphi \in \Logic_\K$ and $\chi \dfn \suppo(\exists z' G(\vec{y},z'))$.
The idea of \eqref{eq:nabla} is as follows. 
\suggestion{}{
We first explain $\theta(\vec{y},z)$.
Informally, if the subformula $\exists z' G(\vec{y},z')$ of $\chi$ is not supported (i.e., $G(\vec{y},z')$ evaluates to~$0$ for every~$z'$), then~$\theta(\vec{y},z)$ evaluates to~$0$.
The more interesting case is where $\exists z' G(\vec{y},z')$ is supported.
In that case, $\theta(\vec{y},z)$ returns the value of $\varphi(\vec{y},z)$  if $G(\vec{y},z)$ is supported; otherwise it returns the sum of all values $\varphi(\vec{y},z')$, where $z'$ ranges over the active domain.
It is then evident that $\varphi(\vec{y},z)$ is minimized for a $z$ such that $G(\vec{y},z)$ is supported (if such a $z$ exists).
}

It is now straightforward to check that the following holds: 
\begin{proposition} If  $G$ and $ \varphi$ are $\Logic_\K$-formulas, $\DB$ is a $\K$-database, and $\alpha$ is an assignment, we have that
\(
    \FOKeval{\nabla_G x.\varphi(x)}{\DB,\alpha} 
     = \min_{a\in D:\FOKeval{G}{\DB,\alpha(a/x)}\neq 0} \FOKeval{\varphi}{\DB,\alpha(a/x)}.
\)
\end{proposition}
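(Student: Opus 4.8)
The plan is to unfold the shorthand in \eqref{eq:nabla} and reduce everything to the base semantics of $\nabla$, $\supp$, $\suppo$, $\land$, $\lor$, and $\exists$. By definition $\FOKeval{\nabla_G x.\varphi(x)}{\DB,\alpha} = \FOKeval{\nabla x.\theta}{\DB,\alpha} = \min_{a\in D}\FOKeval{\theta}{\DB,\alpha(a/x)}$, where $\vec{y}$ denotes the remaining free variables fixed by $\alpha$, so the entire task is to evaluate $\theta$ pointwise in $a$ and then take the minimum. The first observation I would record is that the guard indicator $\chi = \suppo(\exists x' G(\vec{y},x'))$ binds its own fresh variable and hence does not depend on the value $a$ assigned to $x$; throughout the minimization it is a single global constant.

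First I would case-split on the value of $\chi$, using positivity of $\K$. Write $D_G \dfn \{a\in D : \FOKeval{G}{\DB,\alpha(a/x)}\neq 0\}$. Since $\FOKeval{\exists x' G}{\DB,\alpha} = \sum_{a\in D}\FOKeval{G}{\DB,\alpha(a/x)}$ and $\K$ is positive, this sum is $0$ if and only if $D_G = \emptyset$. In the degenerate case $D_G = \emptyset$, the value of $\chi$ is $0$; as $\chi$ occurs as a conjunct, i.e.\ a multiplicative factor, in both disjuncts of $\theta$, every $\FOKeval{\theta}{\DB,\alpha(a/x)}$ is $0$. Hence the left-hand side is a minimum over the (assumed non-empty) active domain of a constant $0$, namely $0$, while the right-hand side is $\min(\emptyset)=0$ by the stated convention, so both sides agree.

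The interesting case is $D_G\neq\emptyset$, where $\chi$ evaluates to $1$ and thus acts as the multiplicative identity in both conjunctions. Here I would evaluate $\theta$ separately on the two parts of $D$, using that $\supp(G)$ is an indicator, equal to $1$ when $\FOKeval{G}{\DB,\alpha(a/x)}=0$ and to $0$ otherwise. For $a\in D_G$ the first disjunct vanishes, giving $\FOKeval{\theta}{\DB,\alpha(a/x)} = \FOKeval{\varphi}{\DB,\alpha(a/x)}$; for $a\notin D_G$ the first disjunct contributes the full sum $V \dfn \FOKeval{\exists x'\varphi}{\DB,\alpha} = \sum_{a'\in D}\FOKeval{\varphi}{\DB,\alpha(a'/x)}$, so that $\FOKeval{\theta}{\DB,\alpha(a/x)} = V + \FOKeval{\varphi}{\DB,\alpha(a/x)}$.

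It then remains to show that the minimum over $D$ is attained inside $D_G$, which is precisely the right-hand side. This min-selection step is where the naturally ordered positive structure does the work, and I expect it to be the only genuine content of the proof. Any single summand is $\leq_\K$ the whole sum, so $V \geq_\K \FOKeval{\varphi}{\DB,\alpha(a^*/x)} = m$ for the $a^*\in D_G$ achieving $m \dfn \min_{a\in D_G}\FOKeval{\varphi}{\DB,\alpha(a/x)}$, and likewise $V + \FOKeval{\varphi}{\DB,\alpha(a/x)} \geq_\K V$. Thus every off-guard value is $\geq_\K m$, so adjoining the off-guard branch cannot lower the minimum, and the overall minimum coincides with $m$, the right-hand side. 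The comparison relies on totality of $\leq_\K$ (to make $\min$ well defined and to conclude from the pointwise inequalities); the only other subtlety is the degenerate empty-guard case above, which is handled cleanly by the $\min(\emptyset)=0$ convention together with the non-empty active domain assumption.
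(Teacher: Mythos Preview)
Your proof is correct and follows precisely the approach the paper sketches informally in the paragraph preceding the proposition (the paper itself offers no formal proof, deeming the verification ``straightforward''). You have carefully filled in the details: the case split on $D_G=\emptyset$ via positivity, the pointwise evaluation of $\theta$ on and off the guard, and the key inequality $V+\FOKeval{\varphi}{\DB,\alpha(a/x)}\geq_\K V\geq_\K m$ showing the off-guard branch cannot undercut the on-guard minimum. One minor remark: your computation $\theta(a)=V+\FOKeval{\varphi}{\DB,\alpha(a/x)}$ for $a\notin D_G$ is in fact more accurate than the paper's informal description, which says $\theta$ ``returns the sum of all values $\varphi(\vec{y},z')$'' (i.e., just $V$); either way the bound $\geq_\K m$ holds.
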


Next, we define what rewritable means in our setting:
\begin{definition}\label{def:semiring-rewritable}
    A \emph{semiring rewriting} of $\m\CA_\K(q)$ is an $\Logic_\K$-formula $\varphi_q$ such that for every $\K$-database $\DB$ and assignment $\alpha$, we have that $\m\CA_\K(q, \DB,\alpha) = \varphi_q(\DB,\alpha)$.
    %
    We say that $q$ is \emph{$\Logic_\K$-rewritable} if there exists a semiring rewriting of $q$.
\end{definition}
\begin{example}\label{ex:qpath}  
\suggestion{}{
Consider $\qpath$ and the expression $\epath$ defined in Section \ref{sec:qpath}. 
Observe that $\m\CA_\K(\qpath)$ is $\Logic_\K$-rewritable, since the $\Logic_\K$-formula 
\(
\exists x \nabla_{R(x,y)} y. \,({R(x,y) \times \nabla_{S(y,z)} z .\, {S(y,z)}})
\)
evaluates to \epath, over any $\K$-database $\DB$.
}

\suggestion{}{
When $\K=\B$ (i.e., over the Boolean semiring $\B$), the guarded minimization operators in the latter formula can be eliminated using \eqref{eq:nabla}, leading to a formula whose interpretation matches~\eqref{eq:fuxmannew} and thus coincides with existing rewritings~\cite{DBLP:journals/jcss/FuxmanM07, KoutrisW17}. However, this elimination is not possible for semirings in general.}
\end{example}

\subsection{$\K$-circuits and complexity theory}\label{sec:logic_good}

Our goal is to use $\Logic_\K$ as a logic for rewritings of $\m\CA_\K(q,\Sigma)$. 
Note that,  from Proposition \ref{prop:semiring_to_Boolean}, it follows directly  that any general rewriting result for our logic embeds the rewriting of Koutris and Wijsen for ordinary databases.
We need to establish that $\Logic_\K$ maintains some of the benefits of  $\FO$ mentioned in the beginning of Section \ref{sec:mca}. In particular, the data complexity of $\FO$ is in $\DLOGTIME$-uniform $\aco$ (and hence in \PTIME).
Arguably, a rewriting of $\m\CA_\K(q)$ in $\Logic_\K$ retains the conceptual benefit of query rewriting: instead of computing answers of $q$ for each repair, one may compute the answers of the rewriting directly on the inconsistent database. 
In order to argue about the complexity of $\Logic_\K$, and hence about the suitability of the logic for rewriting, we need to introduce a generalization of $\aco$ to semirings. For this purpose, we define a variant of semi-unbounded fan-in arithmetic $\aco$. We emphasize that, in general, an input to $\m\CA_\K(q,\Sigma)$ is a $\K$-database, where $K$ could be infinite (such as the real numbers), and hence a suitable model of computation has to be able to deal with arithmetics over an arbitrary semiring.
%
%

Next, we give the  definitions concerning circuits that are needed  for our purpose. We refer the reader to the book \cite{Vollmerbook} for a thorough introduction to circuit complexity and to the book \cite{jukna2023tropical} for an exposition of tropical circuits as a tool for studying discrete optimization problems via dynamic programming.
$\K$-circuits are a model of computation for computing semiring-valued functions. 
\begin{definition}\label{def:circuit}
    Let $\K$ be a naturally ordered positive semiring.
    A \emph{$\K$-circuit with min} is a finite simple directed acyclic graph of labeled nodes, also called \emph{gates}, such that
    \begin{itemize}[noitemsep,topsep=0pt]
        \item there are gates labeled \emph{input},  each of which has indegree $0$,
        \item there are gates labeled \emph{constant}, with indegree $0$ and labeled with a $c \in K$,
        \item there are gates labeled \emph{addition}, \emph{multiplication}, \emph{min}, and \suggestion{}{$\supp$},
        \item exactly one gate of outdegree $0$ is additionally labeled \emph{output}.
    \end{itemize}
    Additionally, the input gates are ordered. Note that addition, multiplication, and min gates can have arbitrary in-degree, and the \suggestion{}{$\supp$-gates have in-degree 1}. The \emph{depth} of a circuit $C$ is the length of the longest path from an input gate to an output gate in $C$, while the \emph{size} of $C$ is the number of gates in $C$.

    A circuit $C$ of this kind, with $n$ input gates, computes the function $f_C \colon K^n \to K$ as follows:
    First, the input to the circuit is placed in the input gates.
    Then, in each step, each gate whose predecessor gates all have a value, computes the respective function it is labeled with, using the values of its predecessors as inputs.
    The output of $f_C$ is then the value of the output gate after the computation.
\end{definition}

Each  $\K$-circuit computes a function with a fixed number of arguments; for this reason, we consider families $(C_{n})_{n \in \N}$ of $\K$-circuits, where each circuit $C_{n}$ has exactly $n$ input gates. A  family $\C = (C_{n})_{n \in \N}$ of $\K$-circuits computes the function $f_\C \colon K^* \to K$ defined as 
    \(
        f_\C(\vec{x}) \coloneqq f_{C_{\lvert \vec{x} \rvert}}(\vec{x}).
    \)

\begin{remark}
    A frequent requirement for considering a circuit family $(C_n)_{n \in \N}$ as an algorithm is the existence of an algorithm that given $n$ outputs the circuit $C_n$.
    Circuit families for which such an algorithm exists are called \emph{uniform}.
    The data complexity of $\FO$ is $\DLOGTIME$-uniform $\aco$ 
    \cite{DBLP:journals/jcss/BarringtonIS90},
    and hence there is a $\DLOGTIME$ algorithm that describes $C_n$, given $n$. More formally, the algorithm takes two numbers $(i,j)$ as an input and outputs the type of the $i$th gate of $C_n$, the index of its $j$th predecessor, and, if the gate is an input gate, the index of the input string it corresponds to.
    See \cite{Vollmerbook} for details on circuit uniformity and \cite{BSSbook} for uniformity in the context of real computation.
\end{remark}

\begin{definition}
For a naturally ordered positive semiring $\K$, let $\aco_\K(+,\times, \min,\suggestion{}{\supp})$ consist of all families $(C_n)_{n \in \N}$ of $\K$-circuits as defined in Definition \ref{def:circuit} that are of constant depth and of polynomial size. This means that there is some constant $k\in\N$ and a univariate polynomial function $f\colon \N \to \N$ such that, for each $i\in\N$, the circuit $C_i$ is of depth $k$ and has at most $f(n)$ many gates. We write $\aco_\K(+,\times_2, \min, \suggestion{}{\supp})$ for the restriction of $\aco_\K(+,\times, \min, \suggestion{}{\supp})$ in which all  multiplication gates have in-degree $2$. We write $\faco_\K(+,\times, \min, \supp)$ and $\faco_\K(+,\times_2, \min, \suggestion{}{\supp})$, for the class of functions computed by $\aco_\K(+,\times, \max, \suggestion{}{\supp})$ and $\aco_\K(+,\times_2, \min, \suggestion{}{\supp})$ circuits, respectively.
\end{definition}
Note that, for the Boolean semiring, the classes $\aco_\B(+,\times, \min, \supp)$, $\aco_\B(+,\times_2, \min, \suggestion{}{\supp})$, and $\aco$ coincide, for in this case  $\times$ and $\min$ correspond to $\land$-gates, $+$ corresponds to $\lor$-gates, and \suggestion{}{$\supp$ corresponds to not-gates}. Hence, by a classical result by Immerman relating $\FO$ and $\aco$, the functions in $\DLOGTIME$-uniform $\faco_\B(+,\times_2, \min, \suggestion{}{\supp})$ are precisely those that can be defined in $\Logic_\B$.


The proof of the next result is  in Appendix~\ref{appendix_circuits}.
\begin{restatable}{proposition}{circuitprop}\label{prop:circuitprop}
For every naturally ordered positive semiring $\K$, the data complexity of $\Logic_\K$ is in $\DLOGTIME$-uniform $\faco_\K(+,\times_2, \min,\suggestion{}{\supp})$.
\end{restatable}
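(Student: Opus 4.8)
The plan is to prove the statement by a structural induction on the formula $\varphi \in \Logic_\K$, turning each syntactic construct directly into the corresponding gate of a $\K$-circuit. Throughout, the formula is fixed and the input is a $\K$-database $\DB$ with an assignment; I would encode $\DB$ over an active domain $D = \{1,\ldots,n\}$ by listing, for each relation symbol $R$ of arity $r$ in the schema and each tuple $\vec a \in D^r$, the semiring value $R^\DB(\vec a)$ as one input gate, so that the input length is $\Theta(n^{c})$ for a constant $c$ depending only on the schema. Following the Barrington--Immerman--Straubing translation of $\FO$ into $\aco$~\cite{DBLP:journals/jcss/BarringtonIS90}, I would build, for every subformula $\psi$ of $\varphi$ with free variables among $u_1,\ldots,u_k$, a block of $n^k$ gates indexed by tuples $\vec a \in D^k$, where the gate indexed by $\vec a$ is responsible for the value $\FOKeval{\psi}{\DB,\beta}$ with $\beta$ assigning $a_i$ to $u_i$. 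The output gate is the gate of the top block indexed by the hard-wired assignment to the free variables of $\varphi$.

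The inductive construction follows the semantics on page~\pageref{def:FOsemantics} and in Section~\ref{sec:logic}. An atom $R(\vec u)$ indexed by $\vec a$ is wired directly to the input gate holding $R^\DB$ at the tuple read off from the coordinates of $\vec a$ dictated by $\vec u$; an equality atom $u_i = u_j$ becomes a constant gate whose value ($1$ or $0$) is determined by whether $a_i$ and $a_j$ coincide. A conjunction $\psi_1 \land \psi_2$ is realized by a binary multiplication gate combining the two corresponding subgates; this is the only source of multiplication, and it has fan-in exactly $2$, which is precisely what places the circuit in the class with $\times_2$. A disjunction becomes a binary addition gate; an existential quantifier $\exists x\,\psi$ becomes an addition gate of fan-in $n$ summing the $n$ subgates of the block for $\psi$ that extend the current index; a minimization $\nabla x\,\psi$ becomes a $\min$ gate of fan-in $n$ over the same subgates; and $\supp(\psi)$ becomes a $\supp$-gate of fan-in $1$. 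Thus addition and $\min$ have unbounded fan-in while multiplication stays binary, matching $\faco_\K(+,\times_2,\min,\supp)$, and correctness is immediate from the defining recursion for the value $\FOKeval{\psi}{\DB,\beta}$.

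For the resources, the depth increases by a constant amount for each nesting of a connective or operator, so the depth of the whole circuit is bounded by a constant depending only on the fixed structure of $\varphi$; the guarded operator $\nabla_G$ of~\eqref{eq:nabla} adds only constant depth, since it is shorthand for a fixed $\Logic_\K$-formula and is therefore already covered by the induction on the primitive grammar. For the size, there are a constant number of subformulas, and each contributes at most $n^{k} \le n^{\lvert \var(\varphi)\rvert}$ gates, so the total number of gates is $n^{O(1)}$, polynomial in the input length. Hence the family $(C_n)_{n\in\N}$ has constant depth and polynomial size.

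The step needing the most care, and the main obstacle, is $\DLOGTIME$-uniformity: I must exhibit an algorithm that, given $n$ and gate indices $(i,j)$ in binary, outputs in time $O(\log n)$ the label of the $i$-th gate of $C_n$ and the index of its $j$-th predecessor. This reduces to decoding a gate index into a (subformula, assignment-tuple) pair and then, by a case distinction on the fixed syntactic form of that subformula, computing the predecessor indices. Since the subformula is one of constantly many and the assignment tuple has constantly many coordinates, each in $\{1,\ldots,n\}$ and hence of $O(\log n)$ bits, all the required index arithmetic — reading off coordinates for an atom, comparing $a_i$ with $a_j$ for an equality atom, and enumerating the $n$ extensions of an index for $\exists$ and $\nabla$ — is computable within the $\DLOGTIME$ bound. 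Assembling these observations yields membership in $\DLOGTIME$-uniform $\faco_\K(+,\times_2,\min,\supp)$.
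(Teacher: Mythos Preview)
Your proposal is correct and follows essentially the same approach as the paper: translate each syntactic construct of $\Logic_\K$ into the matching gate type, bound depth by the nesting depth of $\varphi$ and size by $n^{O(1)}$, and appeal to the Barrington--Immerman--Straubing argument for $\DLOGTIME$-uniformity. Your treatment is in fact more explicit than the paper's, which largely defers to the classical $\FO$/$\aco$ correspondence. One small wrinkle: you describe the output gate as the one ``indexed by the hard-wired assignment,'' but $\alpha$ is part of the input, not fixed per circuit; the paper handles this by folding $\alpha$ into the input encoding (listing $R^\DB(\alpha(\vec a))$ for $\vec a\in(D\cup X)^k$), and your construction can be adjusted the same way without affecting depth, size, or uniformity.
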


If we can argue that functions in $\faco_\K(+,\times_2, \min, \suggestion{}{\supp})$ are in some sense simple, we have grounds for $\Logic_\K$ being a ``good'' logic for rewritings.
Consider an $\aco_\K(+,\times_2, \min, \suggestion{}{\supp})$ circuit family $(C_n)_{n\in\mathbb{N}}$ with depth $k$ and $p(n)$ many gates, for some polynomial function $p$. The computation of an $\aco_\K(+,\times_2, \min, \suggestion{}{\supp})$-circuit differs from a computation of an $\aco_\K(+,\times_2)$ circuit only in the additional minimization \suggestion{}{and $\supp$} gates. In its computation, the circuit $C_n$ needs to evaluate at most $p(n)$ minimization \suggestion{}{and $\supp$} gates. To evaluate a single minimization gate, it suffices to make at most $p(n)$-many comparisons between semiring values to find the smallest input to the gate. \suggestion{}{Similarly, to evaluate a $\supp$-gate, it suffices to make one $a=0$-comparison for a semiring value $a$}.  Hence, in comparison to $\aco_\K(+,\times_2)$ circuits, the evaluation of a $\aco_\K(+,\times_2, \min, \suggestion{}{\supp})$ circuit needs to make additionally polynomially many comparisons between two semiring values. 
Therefore, if $\aco_\K(+,\times_2)$ is computationally favorable and size comparisons between semiring values can be made efficiently, then we have an argument that $\aco_\K(+,\times_2, \min, \suggestion{}{\supp})$ is computationally favorable as well. It is easy to see that $\aco_\K(+,\times_2)$ circuit families compute polynomial functions of constant degree, and hence its computation is in a strong sense polynomial. Thus, functions in $\faco_\K(+,\times_2, \min, \suggestion{}{\supp})$ are in a strong sense polynomial.

\subsection{Acyclicity of the attack graph and semiring rewriting of consistent answers}\label{sec:rewriting}

We will next establish a necessary and sufficient condition for $\m\CA_\K(q,\Sigma)$ to be rewritable in $\Logic_\K$,
when $q$ is a self-join free conjunctive query and $\Sigma$ is a set of key constraints, one for each relation in $q$. 
%
To this effect, we  consider the concepts of an attack and of an attack graph,
two concepts that were introduced by Wijsen \cite{DBLP:conf/pods/Wijsen10,DBLP:journals/tods/Wijsen12} and also used in \cite{KoutrisW17,A-Wijsen24}.
Our main result asserts that the attack graph of $q$ (see Definition~\ref{def:attack_graph}) is acyclic if and only if $\m\CA_\K(q,\Sigma)$ is $\Logic_\K$-rewritable.


\begin{restatable}{theorem}{rewritabilitymainthm}
\label{thm:main}
    Let $\K$ be a naturally ordered positive semiring,
    $q$ be a self-join free conjunctive query, and $\Sigma$ a set of key constraints, one for each relation in $q$. The attack graph of $q$ is acyclic if and only if $\m\CA_\K(q,\Sigma)$ is $\Logic_\K$-rewritable.
\end{restatable}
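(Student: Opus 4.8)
The plan is to prove the two implications of the equivalence separately. The backward direction (a cyclic attack graph precludes rewritability) reduces quickly to the known Boolean case, whereas the forward direction (acyclicity yields a rewriting) is the substantial part and proceeds by an inductive construction that peels off unattacked atoms. For the backward direction I would argue by contraposition. Suppose $\m\CA_\K(q,\Sigma)$ admits a semiring rewriting $\varphi_q\in\Logic_\K$. By Proposition~\ref{prop:semiring_to_Boolean} there is an $\FO$-sentence $\psi$ such that, on every $\K$-database whose annotations lie in $\{0,1\}$, we have $\varphi_q(\DB,\alpha)\neq 0$ iff $\DB,\alpha\models\psi$. Restricting attention to such Boolean databases, I would use that $\K$ is naturally ordered and positive, so $0$ is its least element and the minimum of a finite set of semiring values is $0$ iff the set contains $0$. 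Hence $\m\CA_\K(q,\DB)=\min_{\DB'\in\Rep(\DB)}q(\DB')$ is nonzero iff $q(\DB')\neq 0$ for every repair $\DB'$, which by Proposition~\ref{pro:interp} means $\suppo(\DB')\models q$ for every repair, i.e. the ordinary consistent answer $\cons(q,\Sigma,\DB)$ holds. Combining these facts, $\psi$ is an $\FO$-rewriting of $\cons(q,\Sigma)$ on ordinary databases, and by the theorem of Koutris and Wijsen~\cite{KoutrisW17} such a rewriting exists only when the attack graph of $q$ is acyclic; so the attack graph cannot contain a cycle.

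For the forward direction I would build $\varphi_q$ by induction on the number of atoms of $q$, following the template of the $\qpath$ example. Using the structural fact from the attack-graph theory~\cite{DBLP:conf/pods/Wijsen10,DBLP:journals/tods/Wijsen12,KoutrisW17} that an acyclic attack graph always possesses an unattacked atom $F=R(\vec{x};\vec{y})$, and that after deleting $F$ the reduced query $q\setminus\{F\}$ again has an acyclic attack graph, I would set
\[
\varphi_q \;\dfn\; \exists \vec{x}\; \nabla_{R(\vec{x},\vec{y})}\vec{y}.\,\bigl( R(\vec{x},\vec{y}) \times \varphi_{q\setminus\{F\}} \bigr),
\]
where $\varphi_{q\setminus\{F\}}$ is obtained inductively, the guarded minimization $\nabla_{R(\vec{x},\vec{y})}\vec{y}$ is the tuple version of the operator in~\eqref{eq:nabla} (obtained by nesting the single-variable operators), and the now-frozen non-key variables of $\vec{y}$ are treated as free variables supplied by the assignment. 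The empty query is handled by a constant, and the single-atom case evaluates to $\sum_{\vec{a}}\min_{\vec{b}:R(\vec{a},\vec{b})\neq 0}R(\vec{a},\vec{b})$, which is exactly the consistent answer of a single keyed atom.

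The correctness of this construction is the main obstacle, and I would isolate it as a decomposition lemma stating that, when $F$ is unattacked,
\[
\min_{\DB'\in\Rep(\DB)} q(\DB',\alpha) \;=\; \sum_{\vec{a}} \; \min_{\vec{b}\,:\,R^\DB(\vec{a},\vec{b})\neq 0} \bigl( R^\DB(\vec{a},\vec{b}) \times \m\CA_\K(q\setminus\{F\},\DB,\alpha(\vec{b}/\vec{y})) \bigr),
\]
where the consistent answer of $q\setminus\{F\}$ depends only on the relations occurring in it, so the treatment of $R$ is immaterial. By the semantics of $\exists$, $\nabla_G$, and $\times$, the right-hand side is precisely $\varphi_q(\DB,\alpha)$, so the lemma combined with the induction hypothesis yields $\m\CA_\K(q,\DB,\alpha)=\varphi_q(\DB,\alpha)$. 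Proving the lemma is where the structural consequences of the unattacked condition are used: it guarantees that a repair can be assembled by choosing, independently for each key value $\vec{a}$ of $R$, a single surviving fact $R(\vec{a},\vec{b})$, and that the adversary minimizing $q$ can optimize each local choice separately because $\vec{b}$ influences only the part of the query reachable from $F$. Here the semiring setting goes beyond the Boolean argument of Koutris and Wijsen: rather than tracking only whether every repair satisfies $q$, I must track the actual minimum value, which requires the monotonicity of $\times$ with respect to $\leq_\K$ in a naturally ordered positive semiring, the resulting distributivity $a\times\min(S)=\min_{s\in S}(a\times s)$, and the fact that summation over the independent key blocks of $R$ commutes with the per-block minimization. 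The bulk of the work is verifying this decomposition together with the invariance of attack-graph acyclicity under deletion of an unattacked atom; once these are in place, the induction closes and the claimed equality follows.
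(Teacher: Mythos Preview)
Your overall strategy matches the paper's: the backward direction is identical to Proposition~\ref{prop:leftarrow_case}, and the forward direction proceeds by induction on the number of atoms, peeling off an unattacked atom $R(\vec{x};\vec{y})$ and producing the rewriting $\exists \vec{x}\,\nabla_{R(\vec{x},\vec{y})}\vec{y}.\,(R(\vec{x},\vec{y})\times\varphi_{q\setminus\{F\}})$, exactly as in the paper (and as in the $\qpath$ template).

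The one substantive gap concerns your justification of the decomposition lemma. You attribute it to ``independence of the key blocks of $R$'', monotonicity of $\times$, and the identity $a\times\min(S)=\min(a\times S)$. These handle the easy direction $\geq_\K$ and the factoring of $R$ out of the inner minimum, but they do not address the crux: why does $\min_{\DB'}\sum_{\vec{a}}(\cdots)$ equal $\sum_{\vec{a}}\min_{\DB'}(\cdots)$? The $R$-blocks for different $\vec{a}$ are indeed independent of one another, but every summand also depends on the \emph{same} choices made in all the other relations of $\DB'$; a priori, the repair that minimizes the $\vec{a}$-summand need not be the one that minimizes the $\vec{a}'$-summand. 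What must be shown is that a \emph{single} repair simultaneously realizes the minimum for every~$\vec{a}$. This is exactly what the paper isolates as Lemma~\ref{lemma:lemma1_for_Kdbs} (applied to each unattacked key variable), and its proof (Claim~\ref{claim:lemmaKdbs}) is non-trivial: it uses a topological order of the attack graph and a ``superfrugal'' repair construction to argue that the optimal choice in each block is independent of choices in atoms not attacked by it. Your phrase ``$\vec{b}$ influences only the part of the query reachable from $F$'' points in this direction, but the actual invariant is a statement about how choices in \emph{later} atoms (in the topological order) are optimal independently of choices in earlier ones, and this is where the unattacked condition does real work beyond the Boolean case.
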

The proof of the theorem is divided into two parts. We first establish the easier right-to-left direction (Proposition \ref{prop:leftarrow_case}) before building up to prove the more involved left-to-right direction. 

We write $\var(R_i)$ and $\var(q)$ to denote the sets of variables that occur in $R_i$ and $q$, respectively.
We write $q[x]$ to denote the CQ obtained from $q$ by removing the quantifier $\exists x$, if there is one.  
If $R(\vec{y};\vec{z})$ is an atom of $q$, we write $\qminusR$ for the query resulting by removing that atom from $q$, as well as the existential quantifiers binding the variables that appear only in the atom $R(\vec{y};\vec{z})$.
We say that an atom $R(\vec{y};\vec{z})$ \emph{induces} the functional dependency $\key{R} \rightarrow \var(R)$ (or $\var(\vec{y}) \to \var(R)$). 
If $q$ is a \sjfCQ, 
we define $\Sigma(q)$ to be the set of functional dependencies \emph{induced} by the atoms in $q$. Formally,
\[
\Sigma(q) \dfn \{ \key{R} \to \var(R) \mid R \in q \}.
\]
The \emph{closure} of a set $X$  of variables  with respect to a set $\Lambda$ of functional dependencies  contains all the variables $y \in \var(q)$ such that $\Lambda \models X \to y$.
Here, $\Lambda \models  X \to y$ means that the set of functional dependencies in $\Lambda$ semantically entails the functional dependency $X \to y$. Syntactically, this entailment can be derived  using Armstrong's axioms for functional dependencies \cite{armstrong74}.
For $R(\vec{y};\vec{z}) \in q$, 
we write \closureVars for the \emph{closure of $\var(\vec{y})$ with respect to $\Sigma(\qminusR)$}, that is 
\[
\closureVars \dfn \{ x \in \var(q) \mid \Sigma(\qminusR\})\models \var(\vec{y})\to x \}.
\]




\begin{definition}\label{def:attack}
    Let $q$ be a \sjfCQ. An atom $R(\vec{y}; \vec{z})$ of $q$ \emph{attacks a variable $x$} bound in $q$ if there exists a non-empty sequence of variables $x_1, \ldots, x_n$ bound in $q$, such that:
    \begin{enumerate}
        \item $x_1 \in \var(\vec{z})$ (i.e., is a non-key variable of $R$), and $x_n=x$;
        \item for every $i <n$, we have that $x_i, x_{i+1}$ occur together in some atom of $q$; and
        \item 
        for every $i\leq n$, we have that $x_i \not\in \closureVars$.
    \end{enumerate}
        A variable $x$ in $q$ is \emph{unattacked in $q$} if no atom of $q$ attacks it.
\end{definition}

\begin{definition}[Attack graph]\label{def:attack_graph}
    The \emph{attack graph} of $q$ is the graph defined as follows:
   (i)  the vertices are the atoms of $q$; (ii) there is a
    directed edge from $R(\vec{y};\vec{z})$ to a different atom $R'(\vec{y'};\vec{z'})$ if $R(\vec{y}; \vec{z})$ attacks a variable in $\var(\vec{y'})$, 
    i.e., if $R(\vec{y};\vec{z})$ attacks a key variable of $R'$ that is bound in $q$.
    %
\end{definition}




Our definition of an attack graph is phrased slightly differently but remains equivalent to the definitions found in~\cite{KoutrisW17,A-Wijsen24}. Specifically, note that if an atom attacks a bound variable in some other atom~$R$, then it necessarily also attacks a bound variable that occurs in $\key{R}$.

%

The next proposition establishes the right-to-left direction of Theorem \ref{thm:main}.
%

\begin{proposition}\label{prop:leftarrow_case}
Let $\K$ be a naturally ordered positive semiring, $q$ be a self-join free conjunctive query, and $\Sigma$ a set of key constraints, one for each relation in $q$. If 
$\m\CA_\K(q,\Sigma)$ is $\Logic_\K$-rewritable, then 
the attack graph of $q$ is acyclic.
\end{proposition}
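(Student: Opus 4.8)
The plan is to prove the contrapositive of the statement: assuming that the attack graph of $q$ contains a cycle, I will show that $\m\CA_\K(q,\Sigma)$ is \emph{not} $\Logic_\K$-rewritable. The key external input is the result of Koutris and Wijsen \cite{KoutrisW17}: when the attack graph of a self-join free CQ $q$ is cyclic, the Boolean consistent-answering problem $\cons(q,\Sigma)$ over ordinary databases fails to be $\FO$-rewritable (indeed it is then either in \PTIME\ but not $\FO$-expressible, or \coNP-complete). Together with the embedding of $\Logic_\K$ into $\FO$ provided by Proposition~\ref{prop:semiring_to_Boolean}, this means that a hypothetical $\Logic_\K$-rewriting of $\m\CA_\K(q,\Sigma)$ would manufacture an $\FO$-rewriting of $\cons(q,\Sigma)$, which is impossible.

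Concretely, suppose toward a contradiction that $\m\CA_\K(q,\Sigma)$ admits a semiring rewriting $\varphi_q\in\Logic_\K$. I would restrict attention to $\K$-databases all of whose annotations lie in $\{0,1\}$; such a database $\DB$ is just an ordinary database, identified with its support $\suppo(\DB)$. On these databases the semiring problem collapses to the Boolean one. Since repairs are defined as $\subseteq$-maximal sub-databases (Definition~\ref{def:repair_Kdb_egds}), every repair of a $\{0,1\}$-annotated $\DB$ again has all annotations in $\{0,1\}$ and coincides with a classical subset repair of $\suppo(\DB)$; hence $\min_{\DB'\in\Rep(\DB)}$ ranges over exactly the Boolean repairs. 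Moreover, by Proposition~\ref{pro:interp} we have $q(\DB')\neq 0$ if and only if $\suppo(\DB')\models q$ in the standard sense. Because $0$ is the least element of a naturally ordered positive semiring (as $0+a=a$), the value $\m\CA_\K(q,\Sigma,\DB,\alpha)=\min_{\DB'\in\Rep(\DB)}q(\DB',\alpha)$ is non-zero precisely when $q(\DB',\alpha)\neq 0$ for \emph{every} repair $\DB'$, that is, precisely when $\alpha$ is a consistent answer of $q$ on the ordinary database $\suppo(\DB)$.

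Next I would translate $\varphi_q$ into first-order logic. By the final clause of Proposition~\ref{prop:semiring_to_Boolean}, there is a $\psi\in\FO$ such that for every $\{0,1\}$-annotated $\K$-database $\DB$ and assignment $\alpha$ one has $\FOKeval{\varphi_q}{\DB,\alpha}\neq 0$ iff $\DB,\alpha\models\psi$. Chaining the equivalences, for every ordinary database $\DB$ and every $\alpha$,
\[
\alpha\in\cons(q,\Sigma,\DB)\iff \m\CA_\K(q,\Sigma,\DB,\alpha)\neq 0\iff \FOKeval{\varphi_q}{\DB,\alpha}\neq 0\iff \DB,\alpha\models\psi,
\]
where the second equivalence is the assumed rewriting $\m\CA_\K(q,\Sigma,\DB,\alpha)=\FOKeval{\varphi_q}{\DB,\alpha}$. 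Thus $\psi$ is an $\FO$-rewriting of $\cons(q,\Sigma)$ over ordinary databases, contradicting the Koutris--Wijsen lower bound for cyclic attack graphs. (For closed $q$ the chain reads with truth values in place of tuples; the open case is handled by carrying $\alpha$ throughout, as above.) This contradiction forces the attack graph of $q$ to be acyclic.

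I expect the main subtlety to lie not in the logical chain but in nailing down the $\{0,1\}$-dictionary: one must verify carefully that the repairs of a $\{0,1\}$-annotated $\K$-database are exactly the $\{0,1\}$-encodings of the classical subset repairs of $\suppo(\DB)$, so that the two minimisations range over the same family, and then invoke positivity to ensure that non-zeroness of the $\min$ faithfully tracks the universal Boolean condition ``every repair satisfies $q$.'' Once this correspondence and the correct clause of Proposition~\ref{prop:semiring_to_Boolean} are in hand, the reduction to the known $\FO$-rewritability characterisation is immediate.
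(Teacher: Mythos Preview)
Your proposal is correct and follows essentially the same approach as the paper: assume an $\Logic_\K$-rewriting $\varphi_q$, translate it to an $\FO$-formula $\psi$ via Proposition~\ref{prop:semiring_to_Boolean}, argue that $\psi$ is an $\FO$-rewriting of $\cons(q,\Sigma)$ on ordinary databases, and invoke Koutris--Wijsen. The paper's proof is simply a two-line sketch of this argument (``It is easy to see that $\psi$ is an $\FO$-rewriting of $\CONS(q,\Sigma)$''), whereas you have spelled out the $\{0,1\}$-dictionary and the repair correspondence that justify that step.
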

\begin{proof}
Assume that $\varphi$ is 
 $\Logic_\K$-rewriting of $\m\CA_\K(q,\Sigma)$. 
  Let $\psi$ be the corresponding $\FO$-formula obtained from Proposition \ref{prop:semiring_to_Boolean}. It is easy to see that $\psi$ is an $\FO$-rewriting of $\CONS(q,\Sigma)$ on ordinary databases. 
  From the results in \cite{KoutrisW17}, it follows that the attack graph of $q$ 
  is acyclic.
  \end{proof}




Towards proving the converse direction of Theorem~\ref{thm:main}, we start with the next lemma, whose proof can be found in Appendix~\ref{appendix_rewriting}:


\begin{restatable}{lemma}{jefslemmatwo}\label{lemma:jefs_lemma2}
If $q$ is a \sjfCQ with an acyclic attack graph and $x$ is an unattacked variable, then $q[x]$ has an acyclic attack graph. Moreover, the attack graph of $q[x]$ is a subgraph of the attack graph of $q$.
\end{restatable}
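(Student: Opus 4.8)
The plan is to prove the \emph{moreover} part first, since acyclicity of the attack graph of $q[x]$ then follows for free: a subgraph of an acyclic directed graph is acyclic. Thus the whole lemma reduces to showing that every edge of the attack graph of $q[x]$ is also an edge of the attack graph of $q$. I will establish this by a monotonicity argument on the attack relation, observing that passing from $q$ to $q[x]$ can only \emph{shrink} the set of admissible attack witnesses.

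First I would record three facts that make the two attack graphs directly comparable. (i) Forming $q[x]$ only deletes the quantifier $\exists x$; the quantifier-free part $\hat q$ is untouched (all quantified variables occur in $\hat q$), so $q$ and $q[x]$ have exactly the same atoms, and hence their attack graphs have the same vertex set. (ii) The induced functional dependencies $\Sigma(\qminusR)$ depend only on the atoms of $q$ other than $R$ and on their key/non-key split, not on which variables are quantified; therefore the relevant set of dependencies is the same for $q$ and for $q[x]$, and consequently the closures $\closureVars$ appearing in condition (3) of Definition~\ref{def:attack} are identical in $q$ and in $q[x]$. (iii) The only variable whose quantification status changes is $x$: the variables bound in $q[x]$ are exactly those bound in $q$ except $x$, so every variable bound in $q[x]$ is also bound in $q$.

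The core step is then the following monotonicity observation. Suppose an atom $R(\vec{y};\vec{z})$ attacks a variable $w$ in $q[x]$, witnessed by a sequence $x_1,\dots,x_n = w$ of variables bound in $q[x]$. By (iii) each $x_i$ is also bound in $q$; conditions (1) and (2) of Definition~\ref{def:attack} are purely syntactic and unaffected; and by (ii) condition (3) refers to the same closure $\closureVars$. Hence the very same sequence witnesses that $R$ attacks $w$ in $q$. In particular, if there is an edge from $R$ to $R'(\vec{y'};\vec{z'})$ in the attack graph of $q[x]$, i.e.\ $R$ attacks a variable of $\var(\vec{y'})$ that is bound in $q[x]$, then that variable is bound in $q$ and is attacked by $R$ in $q$, so the same edge is present in the attack graph of $q$. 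This proves that the attack graph of $q[x]$ is a subgraph of the attack graph of $q$, and acyclicity transfers downward.

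The step needing the most care is (ii): one must verify that no ingredient of the attack relation, and especially the closure $\closureVars$ computed under $\Sigma(\qminusR)$, secretly depends on the quantifier prefix rather than on the atoms alone. Once this is confirmed, the whole argument is just a containment of witness sets. It is worth noting that this subgraph relation does not actually use the hypothesis that $x$ is \emph{unattacked}; that hypothesis is what the surrounding induction on $q$ will exploit (to guarantee that an unattacked variable exists and that removing it preserves the query's semantics in the relevant sense), but it plays no role in establishing the subgraph inclusion itself.
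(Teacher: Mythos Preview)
Your proposal is correct and follows essentially the same approach as the paper: both arguments observe that $q$ and $q[x]$ have the same atoms (hence the same vertex set), that the induced dependencies $\Sigma(\qminusR)$ and thus the closures $\closureVars$ are unchanged, and that any attack witness sequence in $q[x]$ is automatically a valid witness sequence in $q$ because bound variables of $q[x]$ are also bound in $q$. Your remark that the hypothesis ``$x$ is unattacked'' is not actually used for the subgraph inclusion is a correct observation the paper does not make explicit.
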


\begin{lemma}\label{lemma:jef}
    Let $q$ be a \sjfCQ and $\ICs$ be a set of key constraints, one key per relation.
    Let $\DB$ be a $\K$-database and $\gamma$ an assignment such that $\DB',\gamma \models_\K q$ for every repair $\DB'$ of $\DB$ (with respect to $\ICs$), and let $x$ be an unattacked variable that is bound in $q$. Then, there is an element $c \in D$ such that $\DB',\gamma(c/x) \models_\K q[x]$, for every repair $\DB'$ of $\DB$.\looseness=-1
\end{lemma}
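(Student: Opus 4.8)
The plan is to first collapse the semiring layer and reduce the statement to one about ordinary (Boolean) databases, and then to prove the reduced statement by a repair-stitching argument in which the unattacked status of $x$ is the crucial ingredient. Since $q$ and $q[x]$ are negation-free, Proposition~\ref{pro:interp} and Definition~\ref{def:satisfaction} give $\DB',\gamma\models_\K q$ iff $\suppo(\DB'),\gamma\models q$ (set-based satisfaction), and likewise for $q[x]$. Moreover, by Definition~\ref{def:repair_Kdb_egds} a repair $\DB'$ of $\DB$ is determined by its support, because $\subseteq$ forces the surviving annotations to coincide with those of $\DB$; and since satisfaction of the key constraints depends only on supports, the map $\DB'\mapsto\suppo(\DB')$ is a bijection between the repairs of $\DB$ and the ordinary repairs of $\mathfrak{d}\dfn\suppo(\DB)$, with $D$ the common active domain. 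Thus the hypothesis becomes ``$\mathfrak{e},\gamma\models q$ for every repair $\mathfrak{e}$ of $\mathfrak{d}$'', and it suffices to produce $c\in D$ with $\mathfrak{e},\gamma(c/x)\models q[x]$ for every such $\mathfrak{e}$. This is precisely the ``certain witness for an unattacked variable'' property for ordinary databases, closely related to the analysis of Koutris and Wijsen~\cite{KoutrisW17}; I would give a self-contained argument.

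I would argue by contradiction. Suppose no uniform witness exists. Then for every $c\in D$ there is a repair $\rho_c$ of $\mathfrak{d}$ with $\rho_c,\gamma(c/x)\not\models q[x]$, i.e.\ there is no assignment $\mu$ extending $\gamma(c/x)$ with $\mu(\hat q)\subseteq\rho_c$. The goal is to assemble a single repair $\rho^\ast$ that simultaneously fails for every value of $x$, so that $\rho^\ast,\gamma\not\models q$, contradicting certainty of $q$. The key observation is that each failure ``$x=c$'' can be certified \emph{locally}. Because $x$ is unattacked, Definition~\ref{def:attack} forbids the length-one attack sequence $x_1=x$; hence for every atom $F$ in which $x$ occurs as a non-key variable we must have $x\in(\key{F})^{+}_{\Sigma(q\setminus F)}$, i.e.\ $\Sigma(q\setminus F)\models\key{F}\to x$. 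So the value of $x$ in any potential witness is functionally pinned down by the key values of the atoms upstream of $F$, and the block-choices responsible for the failure at $c$ live in a region of $\mathfrak{d}$ indexed by $c$ through the atoms containing $x$ and their functional-dependency closure, rather than being scattered arbitrarily across $\mathfrak{d}$.

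The heart of the proof---and the step I expect to be the main obstacle---is to show that these localized failure certificates for distinct values $c,c'$ never demand conflicting facts from a common block, so that they can all be honored by one repair $\rho^\ast$. This compatibility is exactly what the \emph{absence} of an attack path to $x$ should buy: the intuition is that a collision of two certificates inside a block $B$ would yield a sequence of pairwise co-occurring bound variables running from a non-key position of the atom of $B$ back to $x$ and avoiding the closure $(\key{F})^{+}_{\Sigma(q\setminus F)}$ for the relevant atom $F$, i.e.\ an attack on $x$, contradicting that $x$ is unattacked. Once compatibility is established, one builds $\rho^\ast$ to agree with $\rho_c$ on the region certifying the failure of $x=c$, for each $c$ (and chooses the remaining blocks arbitrarily); then for every $c$ the assignment $x=c$ still admits no witness in $\rho^\ast$, so $\rho^\ast,\gamma\not\models q$. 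This contradiction produces the desired uniform $c\in D$, which by the reduction of the first paragraph is simultaneously a witness over the $\K$-database $\DB$, namely $\DB',\gamma(c/x)\models_\K q[x]$ for every repair $\DB'$ of $\DB$.
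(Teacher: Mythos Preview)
Your reduction to the Boolean setting via supports is correct and efficient; the paper implicitly works at that level throughout, since every fact manipulated is a support-level fact and satisfaction only depends on supports.

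The main argument, however, has a genuine gap, and it is exactly the step you yourself flag as ``the main obstacle.'' The objects you rely on---a ``failure certificate'' for the value $c$, and a ``region of $\mathfrak{d}$ indexed by $c$''---are never defined. The failure $\rho_c,\gamma(c/x)\not\models q[x]$ is a \emph{universal} statement over all embeddings of $\hat q$; such negative information does not localize to a fixed set of block-choices that one can graft onto another repair while preserving the failure. Your observation that $x\in(\key{F})^{+}_{\Sigma(q\setminus F)}$ whenever $x$ is a non-key variable of $F$ is correct, but it constrains witnesses \emph{within a single repair}; it does not by itself partition the blocks of the inconsistent database into disjoint regions governed by distinct $c$-values. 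The claim that a ``collision of two certificates'' would yield an attack path to $x$ is only an intuition: you have not said what a collision is, which block it occurs in, or why the resulting sequence of variables would avoid the relevant closure at every step. Without these definitions and a proof that agreeing with $\rho_c$ on its region forces failure at $c$ in $\rho^\ast$, the stitching does not close.

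The paper takes a different and more direct route that avoids stitching altogether. It sets $\rho(\DB')=\{c\in D:\DB',\gamma(c/x)\models_\K q[x]\}$, picks a repair $\RP^*$ with $\rho(\RP^*)$ $\subseteq$-minimal, and shows $\rho(\RP^*)\subseteq\rho(\RPS)$ for an arbitrary repair $\RPS$. The key step: refine $\RP^*$ to a repair $\RP$ with $\rho(\RP)=\rho(\RP^*)$ and maximal overlap with $\RPS$; if some $a\in\rho(\RP)\setminus\rho(\RPS)$, swap a single offending fact $A\in\RP\setminus\RPS$ for its key-equal partner $A'\in\RPS$ to obtain $\RP'$, and prove $\rho(\RP')\subsetneq\rho(\RP)$, contradicting minimality. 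The containment $\rho(\RP')\subseteq\rho(\RP)$ is established by a valuation-mixing trick: given a witness $\alpha$ for $a$ in $\RP$ and a witness $\beta$ for any $b\in\rho(\RP')$, define $\delta(v)=\alpha(v)$ if the atom $R$ of the swapped fact attacks $v$, and $\delta(v)=\beta(v)$ otherwise. Unattackedness of $x$ gives $\delta(x)=\beta(x)=b$; an induction shows $\alpha$ and $\beta$ agree on $(\key{R})^+_{\Sigma(q\setminus R)}$, and a case analysis over atoms then shows $\delta$ embeds $\hat q$ into $\RP$, so $b\in\rho(\RP)$. This is precisely where the closure condition you invoked is actually used, but applied to a single pair of valuations rather than to an undefined global partition of blocks.
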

\begin{proof}
    Let $\DB$ and $\gamma$ be as described in the statement of the lemma.
    For every repair $\DB'$ of $\DB$, let $\rho(\DB')= \{ c \in D \mid \DB',\gamma(c/x) \models_{\K} q[x] \}$. This set is always nonempty, since by assumption, $\DB', \gamma \models_\K q$ for every repair $\DB'$ of $\DB$. We will prove the following stronger formulation of the lemma:
    \begin{claim}\label{claim:stronger}
    There is a repair $\RP^*$ of $\DB$ such that $\rho(\RP^*) = \bigcap_{\DB' \text{ repair of } \DB} \rho(\DB') \neq \emptyset$.        
    \end{claim} 
    Let $\RP^*$ be a repair of $\DB$ for which $\rho(\RP^*)$ is minimal according to subset inclusion. We will show that $\rho(\RP^*) \subseteq \rho(\RPS)$, for every repair $\RPS$ of $\DB$.
    To this end, fix an arbitrary repair $\RPS$ of $\DB$, and let $\RP$ be a repair such that $\rho(\RP) = \rho(\RP^*)$ and, in addition, there is no other repair $\mathfrak T$ of $\DB$ such that $\rho(\RP^*)=\rho(\mathfrak T)$ and $\mathfrak T \cap \RPS \supsetneq \RP \cap \RPS$. 
    %
    We will show that $\rho(\RP) \subseteq \rho(\RPS)$. Take $a \in \rho(\RP)$. 
    We want to show that $a \in \rho(\RPS)$. Towards a contradiction, assume $a \not\in \rho(\RPS)$.
    Hence, $\RP, \gamma(a/x) \models_{\K} q[x]$ and $\RPS, \gamma(a/x) \not\models_{\K} q[x]$. Let $\alpha$ be a valuation such that $\RP, \alpha \models_\K \hat{q}$ and agrees with $\gamma(a/x)$ with respect to the free variables of $q[x]$.
    Thus, there is a fact $A \in \alpha(\hat{q})$ such that $A \in \RP$ and $A \not\in \RPS$. Since $\RPS$ is a repair of $\DB$ it contains a fact $A'$ that is key-equal to $A$.
    Now, consider $\RP' = (\RP \setminus \{A\}) \cup \{A'\}$. Note that, $\RP'$ is a repair of $\DB$, for we are considering only key constraints with one key per relation and sjfCQs. Since $\RP' \cap \RPS \supsetneq \RP \cap \RPS$, it follows from the choice of $\RP$ that $\rho(\RP') \neq \rho(\RP)$.
    We will show the following claim, which will then lead to a contradiction.
    \begin{claim}\label{claim:rho_containment}
        $\rho(\RP') \subsetneq \rho(\RP)$
    \end{claim}
    \begin{proof}
        Since $\rho(\RP') \neq \rho(\RP)$, we only need to show containment.
        Pick an arbitrary $b \in \rho(\RP')$, and notice that $\RP', \gamma(b/x) \models_{\K} q[x]$.
        Let $\beta$ be a valuation such that $\RP', \beta \models_\K  \hat{q}$ and agrees with $\gamma(b/x)$ with respect to the free variables of $q[x]$.
        The claim follows if $\RP, \beta \models_{\K} q[x]$, and thus assume $\RP', \beta \not\models_\K  q[x]$. Since $\RP$ and $\RP'$ contain the same facts with the exception of $A$ and $A'$, then necessarily $A' \in \beta(\hat{q})$.
        Let $R(\vec{y};\vec{z})$ be the atom in $q$ with the same relation name as $A$ and $A'$, and let $\delta$ be the valuation for the variables in $q$ defined as follows:
        \[
        \delta(v) = \begin{cases}
            \alpha(v) & \text{ if } R(\vec{y};\vec{z}) \text{ attacks } v \text{ in } q \\
            \beta(v) & \text{ otherwise}
        \end{cases}
        \]
        Since $x$ is unattacked, it is not attacked by $R$, and thus $\delta(x)=\beta(x) = b$. Hence, to prove that $b\in\rho(\RP)$ it suffices to show that $\RP, \delta \models_\K \hat{q}$.
        Recall that $A$ and $A'$ are key-equal facts of relation name $R$ such that $A\in \alpha(\hat{q})$ and $A'\ \in \beta(\hat{q})$. Hence $\alpha(\vec{y})= \beta(\vec{y})$ for the key variables $\vec{y}$ of $R$.

        Let $\Sigma(\qminusR)$, and let $\closureVars$. 
        Notice that, if we define 
        \begin{align*}
            Y_0 &\eqdef \var({\vec{y}}) \\ 
            Y_{n+1} &\eqdef \{ w \in \Var \mid \mbox{ there is } (V \to W)\in \Sigma(\qminusR) \mbox{ such that } V \subseteq Y_n \mbox{ and } w \in W \}
        \end{align*} 
        then $\closureVars = \bigcup_n Y_n$.
        
        Clearly, $\var({\vec{y}}) \subseteq \closureVars$.
        We now show that $\alpha$ and $\beta$ agree on all variables in $\closureVars$. More precisely, we show that $\alpha(w) = \beta(w)$ for every $w \in \closureVars = \bigcup_n Y_n$ by induction on $n$.
        For $Y_0$ this follows, for we already established that $\alpha(\vec{y})= \beta(\vec{y})$ for the key variables of $R$.
        Now assume $\alpha(Y_n)= \beta(Y_n)$, we want to show that $\alpha(Y_{n+1})= \beta(Y_{n+1})$.
        Let $w \in Y_{n+1}\setminus Y_{n}$, then there is a functional dependency $V \to W \in \Sigma(\qminusR)$ such that $V \subseteq Y_n$ and $w\in W$. 
        This dependency is of the form $\vec{y'} \to \vec{z'}$ for some atom $S(\vec{y'}; \vec{z'}) \neq R(\vec{y}; \vec{z})$, where $\vec{y'} \subseteq V \subseteq Y_n$ and $\vec{z'} \subseteq W$. By inductive hypothesis, $\alpha(\vec{y'})= \beta(\vec{y'})$. Furthermore, $\RP$ and $\RP'$ agree on every atom with the exception of $A$ and $A'$ which are the facts corresponding to the relation $R$, thus $\RP$ and $\RP'$ agree on $S(\vec{y'}; \vec{z'})$ and this shows that $\alpha$ and $\beta$ agree on all variables in $Y_{n+1}$.
        It follows by the induction principle that $\alpha$ and $\beta$ agree on all variables in $\closureVars$.

        Notice now that for every non-key variable $z$ of $R$, if $z \not\in \closureVars$ then $R(\vec{y}; \vec{z})$ attacks~$z$, since in that case the sequence of length one consisting only of $z$ witnesses the attack, and consequently $\delta(z)=\alpha(z)$. Furthermore, since $\alpha$ and $\beta$ agree on variables in $\closureVars$, $\delta$ maps the atom $R(\vec{y}; \vec{z})$ to the fact $R(\alpha(\vec{y}); \alpha(\vec{z})) = A$, which belongs to $\RP$.

        Let $S(\vec{y'}; \vec{z'})$ be an arbitrary atom of $\qminusR$.
        Since $\RP,\alpha \models_\K \hat{q}$ and $\RP',\beta \models_\K \hat{q}$ there are facts $B\in \RP$ and $B'\in \RP'$ such that $\alpha$ and $\beta$ map $S(\vec{y'}; \vec{z'})$ to $B$ and $B'$, respectively.
        Since $\RP$ and $\RP'$ contain the same facts with the exception of $A$ and $A'$, we have that $B, B' \in \RP$.
        Hence, it suffices to show that $\delta$ maps $S(\vec{y'}; \vec{z'})$ to either $B$ or $B'$. 
        
        If all variables in $S(\vec{y'}; \vec{z'})$ are attacked (not attacked, resp.) by $R(\vec{y}; \vec{z})$ then, by definition, $\delta$ coincides with $\alpha$ ($\beta$, resp.) with respect to the variables in $S$ and hence maps $S(\vec{y'}; \vec{z'})$ to $B$ ($B'$, resp.).

        If we are not in the above case there are both variables that are attacked and not attacked by $R(\vec{y}; \vec{z})$. Let $x_1$ and $x_2$ be arbitrary variables that occur in $S(\vec{y'}; \vec{z'})$ such that $R(\vec{y}; \vec{z})$ attacks $x_1$ but does not attack $x_2$. Towards a contradiction suppose $x_2 \not\in \closureVars$. Since $R$ attacks $x_1$, there is a sequence $v_1,\ldots, v_k$ of variables starting in a non-key variable of $R$ and ending in $x_1$ witnessing the attack (see Definition \ref{def:attack}). In particular is $v_k=x_1$. Since $x_1$ and $x_2$ belong to the same atom $S$ in $q$ and $x_2 \not\in \closureVars$, we can extend the sequence by setting $v_{k+1}=x_2$ thus obtaining a witness for the attack from $R$ to $x_2$, which results in a contradiction since $x_2$ is not attacked by our assumption.
        Hence, $x_2 \in \closureVars$ and thus $\alpha(x_2) = \beta(x_2)=\delta(x_2)$. Also, $\delta(x_1)=\alpha(x_1)$ by construction of $\delta$. Since, $x_1$ and $x_2$ were arbitrary, $\delta$ maps $S(\vec{y'}; \vec{z'})$ to $B$, which is in $\RP$. 
        
        We have shown that $\delta$ maps $S(\vec{y'}; \vec{z'})$ to either $B$ or $B'$, and hence to a fact in $\RP$. Therefore, $\delta$ maps every atom in $q$ to a fact in $\RP$ and thus $\RP, \delta \models_\K \hat{q}$, which then finishes the proof of Claim~\ref{claim:rho_containment}.
    \end{proof}
    Since we assumed that $\RP$ is a repair of $\DB$ such that $\rho(\RP)$ is subset-minimal, Claim~\ref{claim:stronger} now follows by a contradiction given by Claim~\ref{claim:rho_containment}, which concludes the proof of the lemma.
\end{proof}



The proof of the following lemma can be found in Appendix~\ref{appendix_rewriting}.
\begin{restatable}{lemma}{lemmaKDBs}
\label{lemma:lemma1_for_Kdbs}
    Let $q$ be a \sjfCQ and $\Sigma$ a set of key constraints, one per relation, with an acyclic attack graph. Let $\alpha:\Var\to A$ be an assignment and  $x$ be an unattacked variable in $q$. Then, for every $\K$-database $\DB$,
    \[
     \m\CA_\K(q, \DB, \alpha) = \sum_{c \in D} \m\CA_\K(q[x], \DB, \alpha(c/x)).
    \]
\end{restatable}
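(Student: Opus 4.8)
The plan is to rewrite both sides in a common form and then prove the two inequalities separately. Since $x$ is bound in $q$, we have $q=\exists x\,q[x]$, so the semiring semantics of $\exists$ gives, for every repair $\DB'\in\Rep(\DB)$,
\[
\eval{q}{\DB'}{\alpha}=\sum_{c\in D}\eval{q[x]}{\DB'}{\alpha(c/x)}.
\]
Abbreviating $v(\DB',c)\dfn\eval{q[x]}{\DB'}{\alpha(c/x)}$, the left-hand side of the lemma is $\min_{\DB'\in\Rep(\DB)}\sum_{c\in D}v(\DB',c)$, while the right-hand side is $\sum_{c\in D}\min_{\DB'\in\Rep(\DB)}v(\DB',c)$. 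The inequality $\geq_\K$ is the routine direction: for each fixed repair $\DB'$ and each $c$ we have $v(\DB',c)\geq_\K\min_{\DB''}v(\DB'',c)$, and because $\K$ is naturally ordered and positive, its addition is monotone for $\leq_\K$; summing over the finite active domain $D$ yields $\sum_c v(\DB',c)\geq_\K\sum_c\min_{\DB''}v(\DB'',c)$ for every $\DB'$, and taking the minimum over $\DB'$ on the left preserves this.

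For the reverse inequality it suffices to produce a single \emph{simultaneously optimal} repair $\DB^*$, that is, one with $v(\DB^*,c)=\min_{\DB'}v(\DB',c)$ for every $c\in D$; then $\min_{\DB'}\sum_c v(\DB',c)\leq_\K\sum_c v(\DB^*,c)=\sum_c\min_{\DB'}v(\DB',c)$, and the lemma follows. The whole argument thus reduces to the existence of $\DB^*$, which I regard as the heart of the proof. I would obtain it, in the spirit of Lemma~\ref{lemma:jef}, by minimizing a \emph{profile} rather than a sum: minimizing $\sum_c v(\DB',c)$ is too coarse, since in a positive semiring a strict pointwise decrease need not strictly decrease a sum (e.g.\ over $\B$, decreasing one coordinate from $1$ to $0$ can leave the $\lor$-sum equal to $1$). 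Among the finitely many repairs, I would choose $\DB^*$ whose profile $c\mapsto v(\DB^*,c)$ is minimal for the pointwise preorder induced by $\leq_\K$, and then show that this profile in fact lies pointwise below the profile of \emph{every} repair, which is exactly simultaneous optimality.

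The main obstacle, and the place where the hypotheses enter, is the exchange step establishing pointwise dominance. Fix a competing repair $\RPS$ and suppose $v(\RPS,c_0)<_\K v(\DB^*,c_0)$ for some $c_0$. Refining the choice so that, among all repairs sharing the minimal profile of $\DB^*$, the repair $\RP$ agrees with $\RPS$ on as many blocks as possible, one locates a block on which $\RP$ and $\RPS$ retain key-equal but distinct facts $A,A'$ and swaps $A$ for $A'$ to form a new repair $\RP'$ (legitimate because $\Sigma$ has one key per relation and $q$ is self-join free). Exactly as in Lemma~\ref{lemma:jef}, from witnessing valuations for $q[x]$ in $\RP$ and in $\RP'$ one builds a merged valuation $\delta$ that follows the swapped-atom witness on the variables \emph{attacked} by the swapped atom and the other witness elsewhere; since $x$ is unattacked (Definition~\ref{def:attack}) the chosen value of $x$ is never disturbed, and the closure computation with respect to $\Sigma(\qminusR)$ shows the two witnesses agree on every atom mixing attacked and unattacked variables. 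Using monotonicity of $+$ and $\times$ in $\K$, this shows $\RP'$ has profile pointwise $\leq_\K$ that of $\RP$ and strictly smaller at $c_0$, contradicting the minimality of the profile of $\DB^*$. Acyclicity of the attack graph, via Lemma~\ref{lemma:jefs_lemma2}, guarantees that unattacked variables persist coherently once $\exists x$ is removed, so the exchange is well-defined. I expect the delicate part to be the value-level bookkeeping: verifying term by term that the products $\prod_i R_i^{\DB}(\delta(\vec{z}_i))$ transform monotonically under the swap, rather than merely tracking whether $q[x]$ is satisfied as in the Boolean case of Lemma~\ref{lemma:jef}.
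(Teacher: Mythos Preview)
Your reduction to the existence of a \emph{simultaneously optimal} repair is exactly what the paper also aims for, and your treatment of the easy direction $\geq_\K$ is correct. The gap is in the exchange step for the hard direction.

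The argument you lift from Lemma~\ref{lemma:jef} proves a \emph{support} statement: after swapping $A$ for $A'$, every embedding $\beta$ of $\hat q$ into $\RP'$ with $\beta(x)=c$ can be turned (via the $\delta$-construction) into an embedding of $\hat q$ into $\RP$ with the same $x$-value. That yields $\rho(\RP')\subseteq\rho(\RP)$ in the Boolean sense, but it does \emph{not} give $v(\RP',c)\leq_\K v(\RP,c)$: the map $\beta\mapsto\delta$ need not be injective (many $\beta$'s using $A'$ can collapse to the same $\delta$ using $A$), and it does not respect annotations, since $\beta$ contributes a product containing $R^\DB(A')$ while $\delta$ contributes one containing $R^\DB(A)$, and these can be ordered either way. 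So neither ``profile of $\RP'$ is pointwise $\leq_\K$ that of $\RP$'' nor ``strictly smaller at $c_0$'' follows from one swap towards $\RPS$; a single swap can easily make some coordinate worse, and it need not yet account for the improvement at~$c_0$, which may rely on several blocks where $\RPS$ differs from $\RP$. Your invocation of Lemma~\ref{lemma:jefs_lemma2} does not help here: that lemma concerns the attack graph of $q[x]$, not the value bookkeeping.

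The paper does not try to repair the exchange argument at the value level. Instead it \emph{constructs} a simultaneously optimal repair directly from the acyclicity hypothesis: fix a topological order $R_1,\ldots,R_n$ of the attack graph and process the blocks in reverse order $i=n,n-1,\ldots,1$, at each step selecting in every $R_i$-block the fact that minimizes the semiring contribution \emph{given} the already fixed $R_j$-facts for $j>i$. The $\delta$-construction is then used not to compare profiles of two repairs, but to show an \emph{independence} property: the optimal choice in an $R_i$-block does not depend on the choices in $R_\ell$-blocks for $\ell<i$ (because those atoms are not attacked by $R_i$), and in particular does not depend on the value of~$x$. This is where acyclicity is genuinely used, and it is what lets a greedy, block-local minimization produce a repair that simultaneously minimizes $q[x](\cdot,\alpha(c/x))$ for every~$c$. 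Your proposal does not supply an analogue of this independence argument, and without it the value-level exchange does not go through.
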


We are now ready to give the proof of our main result. 
\rewritabilitymainthm*


\begin{proof}
    The right-to-left direction follows from Proposition~\ref{prop:leftarrow_case}.
    We prove the left-to-right direction by induction on $|q|$.
    We will simultaneously define the rewriting and prove its correctness. 
    We first prove the correctness of the rewriting for $\K$-databases $\DB$ and assignments $\alpha$ such that $\DB',\alpha \models_\K q$, for every repair $\DB'$ of $\DB$, then at the end of the proof argue that the rewriting is correct also for the remaining case.
    Let $q$ be a self-join-free conjunctive query whose attack graph is acyclic. \looseness=-1
    %
    %
    For the base case, assume $|q|= 1$, that is, $q=\exists \vec{x} R(\vec{y}; \vec{z})$ where $\vec{x} \subseteq \vec{y}\cup\vec{z}$.
    Since $q$ is acyclic, there exists an unattacked atom in $q$, which in this case is $R(\vec{y}; \vec{z})$. Set $\vec{y}_{\vec{x}}$ (resp.\ $\vec{z}_{\vec{x}}$) to be the list of variables that are in $\vec{y}$ (resp.\ in $\vec{z}$) and $\vec{x}$.
    By Lemma~\ref{lemma:lemma1_for_Kdbs}, we get
    \[
     \m\CA_\K(q, \DB, \alpha) = \sum_{\vec{a} \in D^{\vert \vec{y}_{\vec{x}}  \rvert } } \m\CA_\K(q[\vec{y}_{\vec{x}}], \DB, \alpha'),  \text{ where $\alpha' \dfn \alpha(\vec{a}/\vec{y}_{\vec{x}})$ }.
    \]
    
    \noindent By definition of $\m\CA$ and since there is exactly one key-equal fact to $R(\alpha'(\vec{y}); \vec{z})$ in every repair $\DB'$ of $\DB$,
    \[
    \m\CA_\K(q[\vec{y}_{\vec{x}}], \DB, \alpha') = \min_{\DB' \in \Rep(\DB, \Sigma)} q[\vec{y}_{\vec{x}}](\DB', \alpha') =  \min_{\vec{b} \in D^{|\vec{z}_{\vec{x}}|} : R^{\DB}(\beta(\vec{y}) ;\beta(\vec{z})) \neq 0} R^{\DB}(\beta(\vec{y});\beta(\vec{z})), \tag{\theequation}\label{eq:proof_min_base_case}
    \]
    where $\beta \dfn \alpha'(\vec{b}/\vec{z}_{\vec{x}})$. 
    This yields the following $\Logic_\K$-rewriting $\varphi_q$ for the base case:
    \[
    \varphi_q\dfn \exists \vec{y}_x \nabla_{R(\vec{y};\vec{z})} \vec{z}_{\vec{x}}.R(\vec{y};\vec{z}).
    \]
    If there is a repair $\DB'$ of $\DB$ for which $\DB',\alpha\not\models_\K q$, since the semiring is positive, the last minimization in (\ref{eq:proof_min_base_case}) is over the empty set for every $\vec{a} \in D^{|\vec{y}_{\vec{x}}|}$, thus $\varphi_q(\DB,\alpha) = 0 = \m\CA_k(q, \DB, \alpha)$.
    For the inductive step, suppose that $|q| =n > 1$, and that the claim holds for every acyclic sjfCQ of size at most $n-1$. 
    Let $R(\vec{y}; \vec{z})$ be an unattacked atom in $q$, which exists since $q$ is acyclic. 
    As done in the base case, we define $\vec{y}_{\vec{x}}$ and $\vec{z}_{\vec{x}}$, we use Lemma~\ref{lemma:lemma1_for_Kdbs}, and focus on rewriting the expression    
    %
    \( 
    \m\CA_\K(q[\vec{y}_{\vec{x}}], \DB, \alpha'),
    \)
    where $\alpha' \dfn \alpha(\vec{a}/\vec{y}_{\vec{x}})$.
    We want to show that:
    \begin{align}
    \min_{\DB' \in \Rep(\DB,\Sigma)} q[\vec{y}_{\vec{x}}] (\DB', \alpha') 
    &= \min_{\vec{b} \in D^{|\vec{z}_{\vec{x}}|}: R^{\DB}(\beta(\vec{y});\beta(\vec{z}))\neq 0} \big( R^{\DB}(\beta(\vec{y});\beta(\vec{z}))  \times \min_{\DB' \in \Rep(\DB,\Sigma)} \qyzminusR(\DB', \beta) \big) \nonumber \\
    &= \min_{\vec{b} \in D^{|\vec{z}_{\vec{x}}|}: R^{\DB}(\beta(\vec{y});\beta(\vec{z}))\neq 0} \big( R^{\DB}(\beta(\vec{y});\beta(\vec{z}))  \times \m\CA_\K (\qyzminusR, \DB, \beta) \big) \label{eq:proof_rightarrowcase}
    \end{align}
    where $\qyzminusR = q[\vec{y}_{\vec{x}}][\vec{z}_{\vec{x}}]\setminus R(\vec{y};\vec{z})$, and $\beta \dfn \alpha'(\vec{b}/\vec{z}_{\vec{x}})$.
    It is easy to see that: 
    \[
    \min_{\DB' \in \Rep(\DB,\Sigma)} q[\vec{y}_{\vec{x}}] (\DB', \alpha') 
    \geq_\K \min_{\vec{b} \in D^{|\vec{z}_{\vec{x}}|}:  R^{\DB}(\beta(\vec{y});\beta(\vec{z}))\neq 0} \big( R^{\DB}(\beta(\vec{y});\beta(\vec{z}))  \times \min_{\DB' \in \Rep(\DB,\Sigma)} \qyzminusR(\DB', \beta) \big).
    \]

    To show the other direction, assume that $\m\CA_\K(q[\vec{y}_{\vec{x}}], \DB, \alpha')>0$, and let $\DB_{\min}$ be a repair 
    such that $q[\vec{y}_{\vec{x}}] (\DB_{\min}, \alpha')$ is minimum over all repairs of $\DB$. 
    We will prove that, for every repair $\DB^*$ of $\DB$ and for every $\vec{b} \in D^{|\vec{z}_{\vec{x}}|}$ such that $R^{\DB}(\beta(\vec{y}); \beta(\vec{z})) \neq 0$:
    \begin{align} \label{eq:prop4.16}
    q[\vec{y}_{\vec{x}}] (\DB_{\min}, \alpha') 
    \leq_\K R^{\DB}(\beta(\vec{y});\beta(\vec{z})) \times \qyzminusR(\DB^*, \beta).
    \end{align}

    It follows, as in the proof of Lemma~\ref{lemma:lemma1_for_Kdbs}, that 
    \(
    q[\vec{y}_{\vec{x}}] (\DB_{\min}, \alpha') 
    \leq_\K q[\vec{y}_{\vec{x}}](\DB^*, \alpha').
    \)
    Since there is exactly one key-equal fact (with a non-zero annotation) with relation name $R$ per repair, in particular in $\DB^*$, the latter can be rewritten as $R^{\DB}(\beta(\vec{y});\beta(\vec{z})) \times \qyzminusR(\DB^*, \beta)$. This proves \eqref{eq:prop4.16}, which then gives us the desired expression for $\m\CA_\K(q[\vec{y}_{\vec{x}}], \DB, \alpha')$.  

    On the other hand, since $|\qyzminusR|<n$, it follows from the induction hypothesis that
    \(
    \m\CA_\K (\qyzminusR, \Sigma) 
    \) 
    admits an $\Logic_\K$-rewriting $\varphi_{\widetilde{q}}$, hence we obtain the following $\Logic_\K$-rewriting for $\m\CA_\K(q, \Sigma)$:
    \[
    \varphi_q \dfn \exists \vec{y}_{\vec{x}} \nabla_{R(\vec{y};\vec{z})} \vec{z}_{\vec{x}}.\varphi_{\widetilde{q}}.
    \]
    
    %
    If there exists a repair $\DB'$ such that $\DB',\alpha \not\models_\K q$, then 
    $\m\CA(q, \DB,\alpha) = 0$. From this and the hypothesis that the semiring $\K$ is positive, it follows that when we inspect \eqref{eq:proof_rightarrowcase} for every $\vec{a} \in D^{|\vec{y}_{\vec{x}|}}$, either the set $\{ \vec{b} \in D^{|\vec{z}_{\vec{x}}|} : R^{\DB}(\beta(\vec{z}),\beta(\vec{z}))\neq 0 \}$ is empty and thus the minimization is done over an empty set, or the minimization of $\qyzminusR(\RP, \alpha)$ over all repairs $\RP$ is $0$, hence $\varphi_q(\DB,\alpha) = 0$.
    This completes the proof of the theorem. \looseness=-1
\end{proof}

The preceding theorem generalizes the Koutris-Wijsen rewritability result in \cite{KoutrisW17}. Furthermore, it gives new rewritability results for the bag, the tropical, the Viterbi, and   the fuzzy semiring.




\section{\suggestion{}{From rewritability to non-approximability}}

Since the early days of computational complexity, there has been an extensive study of the \emph{approximation} properties of optimization problems, i.e., whether or not there are ``good'' algorithms that approximate the optimum (maximum or minimum) value of some objective function defined on a space of feasible solutions.
Most of the work on the approximation properties of optimization problems has focused on optimization problems
whose underlying decision problem is \NP-complete.
This study has shown that such optimization problems may have very different approximation properties, ranging from polynomial-time approximation schemes (e.g., {\sc Knapsack})  to  polynomial-time constant-approximation algorithms (e.g., {\sc Min Vertex Cover}) to polynomial-time logarithmic-approximation algorithms (e.g., {\sc Min Set Cover}), or even worse approximation properties (e.g., {\sc Max Clique}) 
- see \cite{DBLP:books/fm/GareyJ79,DBLP:books/daglib/0072413}. There has also been work on the approximation properties of optimization problems whose underlying decision problem is 
in some lower complexity class, such as \Log and \NL, where \Log and \NL denote,  respectively, deterministic log-space and non-deterministic log-space \cite{DBLP:journals/mst/Tantau07}. 

Let ${\K}=(K,+,\times,0,1)$ be a naturally ordered positive semiring.  Every closed query $q$ and  every  set $\Sigma$ of  constraints give rise to the optimization problem $\m\CA_\K(q,\Sigma)$: given a $\K$-database $\DB$, compute
$\m\CA_\K(q,\Sigma, \DB)$. The decision problem underlying $\m\CA_\K(q,\Sigma)$ asks: given a $\K$-database $\DB$ and an element $k\in \K$, is
$\m\CA_\K(q,\Sigma,\DB)\leq_\K k$? 
In the case of the Boolean semiring $\mathbb B$, both the optimization problem $\m\CA_{\mathbb B}(q,\Sigma)$ and its underlying decision problem coincide with the decision problem $\CONS(q,\Sigma)$. Furthermore, as mentioned earlier, it has been shown in \cite{KoutrisW17,DBLP:journals/mst/KoutrisW21} that if $q$ is a self-join free conjunctive query with one key constraint per relation,
then $\CONS(q,\Sigma)$ exhibits a trichotomy, which we now spell out in detail: 
(i) if the attack graph of 
$q$ is acyclic, then $\CONS(q,\Sigma)$ is $\FO$-rewritable; 
(ii) if the attack graph of $q$ contains a \emph{weak} cycle but no \emph{strong} cycle, then 
$\CONS(q,\Sigma)$ is \Log-complete, hence it is in \PTIME but it is  not $\FO$-rewritable; 
(iii) if the attack graph of $q$ contains a strong cycle, then $\CONS(q,\Sigma)$ is \coNP-complete (the precise definitions of a weak cycle  and of a strong cycle in the attack graph can be found in  \cite{KoutrisW17}).

\newcommand{\opt}{\mathsf{opt}}

Let $q$ be a closed self-join free conjunctive query and let $\Sigma$ be a set of key constraints, one for each relation of $q$.
In what follows, we will leverage the above trichotomy result for  $\CONS(q,\Sigma)$ to study the approximation properties of computing 
$\m\CA_{\mathbb N}(q,\Sigma)$, where ${\mathbb N}=(N,+,\times, 0,1)$ is the bag semiring.  
Let $\varepsilon \geq 1$ be a fixed constant and consider a minimization problem  $\mathcal Q$ in which the objective function takes positive integers as values.
An $\varepsilon$-approximation algorithm for $\mathcal Q$ is an algorithm that, given an input to $\mathcal Q$,  returns the value $A$ of the objective function on some feasible solution so that  $A / \opt \leq \varepsilon$, where  $\opt$ is the value of the objective function on the given input  (note that $A/\opt \geq 1$ because $\mathcal Q$ is a minimization problem).
Since $\m\CA_{\mathbb N}(q,\Sigma)$ may take the value $0$ on an input $\mathbb N$-database $\DB$, we will consider the minimization problem: given an $\mathbb N$-database $\DB$, compute $\m\CA_{\mathbb N}(q,\Sigma,\DB)+1$.

\newcommand{\approximate}[3]{\mathsf{APPROX}({#1}, {#2},{#3})}
\newcommand{\naturals}{\mathbb{N}}
\newcommand{\mydb}{\mathbf{db}}
\newcommand{\myrep}{\mathbf{r}}
\newcommand{\kdb}{\mathcal{D}}
\newcommand{\krep}{\mathcal{R}}

\newcommand{\certainty}[1]{\mathsf{CERTAINTY}({#1})}

Assume that $q$ is a closed self-join free conjunctive query and $\Sigma$ is a set of key constraints, one for each relation of $q$.  For every $\varepsilon \geq 1$, let
$\approximate{q}{\Sigma}{\varepsilon}$ be the following function problem:

\begin{description}
\item[PROBLEM $\approximate{q}{\Sigma}{\varepsilon}$]
\item[INPUT:] An arbitrary $\naturals$-database $\DB$.
\item[OUTPUT:] The value
$q(\DB')$ of $q$ on some repair
 $\DB'$ of $\DB$ such that $q(\DB')/ \opt \leq \varepsilon$,
where $\opt=\m\CA_\N(q,\Sigma,\DB)+1$.
\end{description}

Clearly, the inequality $q(\DB') / \opt \leq \varepsilon$ encapsulates a relative approximation  guarantee.  For example, if we take $\varepsilon= 1.5$, then $\approximate{q}{\Sigma}{1.5}$
asks for the value $q(\DB')$ of $q$ on some repair $\DB'$ of $\DB$ such that $q(\DB')\leq 1.5 \cdot (\m\CA_\N(q,\Sigma,\DB)+1)$.

\begin{proposition}\label{lem:reduction}
Let $q$ be a closed self-join-free conjunctive query and let $\Sigma$ be a set of key constraints, one for each relation of $q$.
For every $\varepsilon\geq 1$, there is a first-order reduction from 
the complement of {\rm $\cons(q,\Sigma)$}  to $\approximate{q}{\Sigma}{\varepsilon}$.
\end{proposition}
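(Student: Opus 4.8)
The plan is to reduce via a weight-amplification gadget that converts Boolean certainty into a large additive gap on the bag semiring. First I would transform an input instance $\DB$ of (the complement of) $\cons(q,\Sigma)$, i.e.\ an ordinary database, into an $\N$-database $\DB^*$ over the same schema by assigning the constant annotation $N \dfn \lceil \varepsilon \rceil + 1$ to every tuple present in $\DB$, and annotation $0$ to all absent tuples. This map is first-order definable, since it merely relabels the unchanged relations with a fixed weight depending only on the fixed parameter $\varepsilon$. Because every nonzero annotation equals $N$, we have $\suppo(\DB^*) = \DB$, so the repairs of $\DB^*$ are in bijection with the ordinary subset repairs of $\DB$: a repair simply selects one tuple per block.

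The key computation is the value that $q$ takes on a repair. Since $q$ is self-join free with, say, $k\ge 1$ atoms, for any repair $\DB'$ of $\DB^*$ the semiring semantics gives $q(\DB') = N^{k}\cdot w(\DB')$, where $w(\DB')$ is the number of satisfying assignments of $q$ in the corresponding ordinary repair: each satisfying assignment contributes the product $N^{k}$ of the $k$ atom annotations, and every other assignment contributes $0$. Hence $q(\DB') = 0$ exactly when $q$ is false in that repair, and $q(\DB') \geq N^{k}$ whenever $q$ is true there. Taking the minimum over repairs, $\m\CA_\N(q,\Sigma,\DB^*) = N^{k}\cdot \min_{\DB'} w(\DB')$, which is $0$ if and only if some repair of $\DB$ falsifies $q$ (i.e.\ the complement of $\cons(q,\Sigma)$ holds on $\DB$), and is at least $N^{k} > \varepsilon$ otherwise.

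With this gap in hand, the reduction would read off its answer from a single call to $\approximate{q}{\Sigma}{\varepsilon}$ on $\DB^*$. Let $v = q(\DB')$ be the returned value, so $v/\opt \leq \varepsilon$ with $\opt = \m\CA_\N(q,\Sigma,\DB^*)+1$. If the complement of $\cons$ holds, then $\m\CA_\N = 0$, hence $\opt = 1$ and $v \leq \varepsilon$. If instead $q$ is certain, then every repair satisfies $q(\DB') \geq \m\CA_\N \geq N^{k} > \varepsilon$, so in particular $v > \varepsilon$. Thus the complement of $\cons(q,\Sigma)$ holds on $\DB$ if and only if $v \leq \varepsilon$, a comparison against the fixed constant $\varepsilon$ that I would use as the first-order post-processing completing the reduction.

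The main obstacle, and the reason the amplification is needed, is that the oracle for $\approximate{q}{\Sigma}{\varepsilon}$ may return the value of $q$ on \emph{any} repair meeting the ratio guarantee rather than the minimizing one, so the argument must be robust against every admissible answer. Annotating with $1$ instead of $N$ would leave $\m\CA_\N = 1$ on some certain instances, making $\opt = 2$ indistinguishable from a no-instance through the $\varepsilon$-threshold; inflating every true repair value above $\varepsilon$ is exactly what separates the two cases uniformly. Both the forward bound ($v \leq \varepsilon$ in the no-case, via $\opt = 1$) and the backward bound ($v > \varepsilon$ in the yes-case, since every repair value exceeds $\varepsilon$) hold for any output the oracle may produce, so the reduction is correct regardless of the oracle's choice.
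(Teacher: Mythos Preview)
Your proof is correct and follows essentially the same weight-amplification idea as the paper: both constructions inflate annotations so that any repair satisfying $q$ has value strictly exceeding $\varepsilon$, forcing the approximation oracle to return $0$ precisely when some repair falsifies $q$. The only cosmetic difference is that the paper annotates just one relation with $M>\varepsilon$ and the rest with~$1$, whereas you annotate every relation with $N=\lceil\varepsilon\rceil+1$; the gap analysis and the decision criterion (checking whether the returned value is $0$, equivalently $\leq\varepsilon$) are the same.
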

\begin{proof} Fix  $q$, $\Sigma$, and $\varepsilon\geq 1$, as in the hypothesis of this proposition.
Let $M$ be a fixed natural number such that $M> \varepsilon  $ 
and pick a relation symbol $R$ occurring in $q$.

Given a standard database $\DB_0$ (i.e., a $\mathbb B$-database) that is an input to $\CONS(q,\Sigma)$, we construct an $\mathbb N$-database $\DB$ as follows:

\begin{itemize}
\item Annotate every $R$-fact of $\DB_0$ with $M$, that is,
if $t \in \suppo(R^{\DB_0})$, then $R^\DB(t)=M$; otherwise
$R^\DB(t) = 0$.
\item Annotate all other facts in $\DB_0$ with~$1$, that is, for every relation symbol $S$ in $q$ that is different from $R$, if $s\in \suppo(S^{\DB_0})$, then $S^\DB(s)=1$; otherwise $S^\DB(s)=0$.
\end{itemize}
Since $M$ is fixed, the $\mathbb N$-database $\DB$ can be constructed from $\DB_0$ in first-order logic.
Furthermore, it is clear that there is a one-to-one correspondence between the repairs of
$\DB_0$ and the repairs of $\DB$.

We now claim that $\cons(q,\Sigma, \DB_0)$ is false   if and only if $\approximate{q}{\Sigma}{\varepsilon}$ returns $0=q(\DB')$ for some repair $\DB'$ of $\DB$, where, of course, the evaluation of $q(\DB')$ is over the bag semiring $\mathbb N$.

Assume first that $\cons(q,\Sigma, \DB_0)$ is false, which means that  there is a repair $\DB^*_0$ of $\DB_0$ that falsifies $q$.
It is easily verified that there is a (unique) repair $\DB^*$ of $\DB$ such that $\suppo({\DB^*})=\DB_0^*$.
Furthermore, $q(\DB^*)=0$, hence 
$\m\CA_{\mathbb N}(q,\Sigma,\DB)=0$ and $\opt =1$.
Now take any repair $\DB'$ of $\DB$ such that  $q(\DB')\neq 0$.
By our construction, $q(\DB')\geq M> \varepsilon$, hence $q(\DB') / \opt \geq M >\varepsilon $, which implies that $q(\DB')\not = 0$ is not a valid output for $\approximate{q}{\Sigma}{\varepsilon}$.
It follows that $\approximate{q}{\Sigma}{\varepsilon}$ must return $0=q(\DB')$ for some repair $\DB'$ of $\DB$.

Conversely, assume that $\approximate{q}{\Sigma}{\varepsilon}$ returns $0=q(\DB')$ for some  repair $\DB'$ of $\DB$. By our construction, it must be the case that $\suppo(\DB')$ is a repair of $\DB_0$ that falsifies~$q$, hence $\cons(q,\Sigma, \DB_0)$ is false.
This concludes the proof of Proposition~\ref{lem:reduction}.
\end{proof}

\begin{corollary} \label{inapp-cor}
Let $q$ be a closed self-join-free conjunctive query and let $\Sigma$ be a set of key constraints, one for each relation of $q$. Then the following statements are true.
\begin{itemize}
\item If the attack graph of $q$ contains a (weak or strong) cycle, then for every $\varepsilon \geq 1$, the problem $\approximate{q}{\Sigma}{\varepsilon}$ is $\Log$-hard under first-order reductions; and
\item if the attack graph of $q$ contains a strong cycle, then for every $\varepsilon\geq 1$, the problem $\approximate{q}{\Sigma}{\varepsilon}$ is $\NP$-hard under first-order reductions.
\end{itemize}
\end{corollary}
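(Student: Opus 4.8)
The plan is to derive the corollary directly from the first-order reduction of Proposition~\ref{lem:reduction} together with the Koutris--Wijsen trichotomy for $\cons(q,\Sigma)$ recalled at the start of this section. The single conceptual point is that hardness of the \emph{complement} of $\cons(q,\Sigma)$ transfers, via that reduction, to $\approximate{q}{\Sigma}{\varepsilon}$ for every fixed $\varepsilon\geq 1$.

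First I would handle the $\Log$-hardness claim. By the trichotomy, if the attack graph of $q$ has a weak cycle but no strong cycle then $\cons(q,\Sigma)$ is $\Log$-complete, and if it has a strong cycle then $\cons(q,\Sigma)$ is $\coNP$-complete; in either case $\cons(q,\Sigma)$ is $\Log$-hard under first-order reductions. Since deterministic log-space is closed under complementation, the complement of $\cons(q,\Sigma)$ is also $\Log$-hard under first-order reductions. Composing this with the first-order reduction of Proposition~\ref{lem:reduction}, and using that first-order reductions are closed under composition, I obtain that $\approximate{q}{\Sigma}{\varepsilon}$ is $\Log$-hard.

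For the $\NP$-hardness claim I would argue analogously but start from the strong-cycle case: when the attack graph contains a strong cycle, $\cons(q,\Sigma)$ is $\coNP$-complete, so its complement is $\NP$-complete and in particular $\NP$-hard under first-order reductions. Precomposing Proposition~\ref{lem:reduction} then yields that $\approximate{q}{\Sigma}{\varepsilon}$ is $\NP$-hard, again for every fixed $\varepsilon\geq 1$.

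The only point that needs care --- and hence the main (mild) obstacle --- is the bookkeeping of reduction types: I must confirm that the Koutris--Wijsen lower bounds are stated (or can be restated) under first-order reductions, and that first-order reductions compose, so that the chain ``complement of $\cons(q,\Sigma)$ $\to$ $\approximate{q}{\Sigma}{\varepsilon}$'' genuinely witnesses $\Log$- and $\NP$-hardness under first-order reductions. Both facts are standard in descriptive complexity; with them in hand the corollary follows at once, with no dependence on $\varepsilon$ beyond the fixed choice of $M>\varepsilon$ already made inside the proof of Proposition~\ref{lem:reduction}.
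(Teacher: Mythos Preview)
Your proposal is correct and follows essentially the same approach as the paper: invoke the Koutris--Wijsen lower bounds to get $\Log$-hardness (respectively $\NP$-hardness) of the complement of $\cons(q,\Sigma)$, and then compose with the first-order reduction of Proposition~\ref{lem:reduction}. The paper's proof is terser, simply citing that the complement of $\cons(q,\Sigma)$ is $\Log$-hard (resp.\ $\NP$-hard) in the relevant cases, whereas you spell out the closure-under-complement and reduction-composition steps explicitly.
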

\begin{proof}
As discussed earlier, it is shown in \cite{KoutrisW17}  that the complement of $\cons(q,\Sigma)$ is $\Log$-hard if the attack graph of $q$ contains a cycle, and $\NP$-hard if the attack graph of $q$ contains a strong cycle.
The desired conclusions then follow from Proposition~\ref{lem:reduction}.
\end{proof}

We illustrate Corollary \ref{inapp-cor} with two examples.
\begin{itemize}
\item Let $\qcycle$ be the query $\exists x\exists y(R(x;y)\land S(y;x))$. It is known that the attack graph
of $\qcycle$ contains a weak cycle but not a strong cycle. Thus, for every $\varepsilon\geq 1$, we have that $\approximate{\qcycle}{\Sigma}{\varepsilon}$ is
$\Log$-hard under first-order reductions.
\item Let $\qsink$ be the query $\exists x\exists y\exists z (R(x;z)\land S(y;z))$. It is known that the attack graph
of $\qsink$ contains  a strong cycle. Thus, for every $\rho\geq 1$ and every $\alpha\geq 0$, we have that $\approximate{\qsink}{\Sigma}{\varepsilon}$ is
$\NP$-hard under first-order reductions.
\end{itemize}
\section{\suggestion{}{Directions for Future Research}} \label{sec:concl}

In this paper, we initiated a study of consistent query answering for databases over a naturally ordered positive semiring. The results obtained  suggest several research directions in this area.


First, an interesting open question is to extend our complexity study of $\m\CA_\K(q,\Sigma)$ from rewritability  in $\Logic_\K$ to computability in polynomial time. Consider a self-join free conjunctive query whose attack graph contains a weak cycle, but not a \emph{strong} cycle.  
In the case of standard databases, it was  shown in \cite{KoutrisW17} 
that the consistent answers of such queries are polynomial-time computable, but not \FO-rewritable. Does this result extend and how does it extend to  the consistent answers of such queries for $\K$-databases, where $\K$ is a naturally ordered positive semiring?  Here, among others, we have the problem of how to formulate the ``right'' notion of polynomial-time computability over semirings.
For standard databases, two polynomial-time approaches are known for computing consistent answers to self-join-free conjunctive queries with a cyclic attack graph that has no strong cycles: through a rewriting in a variant of Datalog~\cite{DBLP:journals/mst/KoutrisW21}, or via a more general algorithm developed by Figueira et al.~\cite{DBLP:conf/icdt/FigueiraPSS23}, which can be formulated in some fixpoint logic.
It is an open question whether these approaches can be adapted beyond the Boolean semiring to arbitrary naturally ordered positive semirings.
In this respect, it is also significant that the algorithm of Figueira et al.\ is not restricted to the self-join-free case but can also handle some, though not all, conjunctive queries with self-joins. Their algorithm could provide a pathway for extending the results in the current paper to conjunctive queries with self-joins.
Nevertheless, one should be aware that consistent query answering to conjunctive queries with self-joins is a notorious open problem. For example, for conjunctive queries $q$ with self-joins and primary keys $\Sigma$, the complexity of $\m\CA_\B(q,\Sigma)$ is understood for queries with at most two atoms~\cite{DBLP:journals/pacmmod/PadmanabhaSS24}, but is largely open for queries with three or more atoms.

Second, we initiated an investigation into approximating consistent query answers when the computation of exact results is intractable.
In particular, we showed that if $q$ is a  self-join free conjunctive query whose attack graph contains a strong cycle, then the consistent answers on bag databases (i.e., $\m\CA_\N(q,\Sigma)$) are not approximable in polynomial time, unless $\PTIME=\NP$. How does this result extend to  naturally ordered positive semirings other than the bag semiring?

Third, in a recent paper~\cite{A-Wijsen24}, consistent query answering is studied for primary keys and numerical queries that return a single number obtained by aggregating (e.g., by means of SUM or AVG) the results returned by a self-join-free conjunctive query $q(r)$, where the free variable~$r$ is numerical and ranges over~$\mathbb N$. 
The range semantics established in~\cite{DBLP:journals/tcs/ArenasBCHRS03}  requires computing the greatest lower bound (glb) and the least upper bound (lub) on the answers to the numerical query over all repairs.   
The problem of finding the glb for SUM queries can be restated as a special case of $\m\CA_\N(q,\Sigma)$.
The authors of~\cite{A-Wijsen24} study rewritings in aggregate logic, which is different from the logic $\Logic_\N$ in the current paper, and obtain a dichotomy similar to our Theorem~\ref{thm:main}. While their rewriting adresses a special case of $\m\CA_\N(q,\Sigma)$, the two approaches differ significantly in their formalism and underlying syntax.
A deeper theoretical exploration is required to precisely pinpoint the differences and commonalities between the two approaches.

Finally, it is also natural to introduce least-upper-bound semantics in the context of semirings.
Specifically, the  \emph{possible answers $\mathrm{M}\CA_\K(\varphi, \Sigma, \DB, \alpha)$ of $\varphi$ on $\DB, \alpha$ with respect to $\Sigma$} is defined as
$\max_{\DB'\in \Rep(\DB,\Sigma)} \varphi(\DB',\alpha)$.
Thus, the possible answers  provide  the tightest upper
bound on the values $\varphi(\DB',\alpha)$ as $\DB'$ ranges over all repairs of $\DB$.

\begin{acks}
We would like to thank the anonymous reviewers for their constructive feedback.
The second and the third author were partially supported by the DFG grant VI 1045-1/1.
\end{acks}

\bibliographystyle{ACM-Reference-Format}
\bibliography{biblio}

\newpage
\appendix

\section{Semiring semantics via interpretations}\label{A:truth}

In this appendix, we give some background on semiring semantics via interpretations 
and provide further  justification for the definition of repairs given in our paper, i.e, why it is reasonable to define repairs using flattening, as done in Section \ref{sec:query}.
The main references are two papers by Gr\"adel and Tannen \cite{gradel2017semiring} and \cite{G-T3} in which semiring semantics to first-order logic \FO is given using the notion of an \emph{interpretation}. We first recall the basic definitions from these two papers.

In what follows,  we assume that ${\mathbb K}=(K, +, \times, 0,1)$ is a positive semiring and  $\tau = (R_1,\ldots,R_m)$ is a relational schema. 

Let $D$ be a finite set and let $\mathsf{Lit}_D$ be the set of all atomic and negated atomic facts involving elements of $D$, i.e., all expressions of the form $R_i({\bf a})$ and $\neg R_i({\bf a})$, where ${\bf a}$ is a tuple of elements from $D$. We will refer to such expressions as \emph{literals} from $D$.
\begin{itemize}
    \item An \emph{interpretation} on $D$ is a function $\pi:\mathsf{Lit}_D \rightarrow K$.
    \item An interpretation $\pi$ is \emph{model defining} if for every atomic fact $R_i({\bf a})$, exactly one of the values $\pi(R_i({\bf a}))$ and $\pi(\neg R_i({\bf a}))$ is different from $0$.
    \item Every model-defining interpretation $\pi$ on $D$ determines a unique finite structure ${\bf D}_\pi=(D,R_1^{{\bf D}_\pi}, \ldots, R_m^{{\bf D}_\pi})$ with universe $D$, where for every $i$ and for every tuple ${\bf a}$ from $D$, we have that
    ${\bf a}\in R_i^{{\bf D}_\pi}$ if and only if $\pi(R_i({\bf a}))\not = 0$.
\end{itemize}
Let $\pi$ be an interpretation on $D$. 
To every \FO-formula $\varphi(x_1,\ldots,x_n)$ in negation normal form (NNF) and every tuple $(a_1,\ldots,a_n)$ from $D$, 
Gr\"adel and Tannen \cite{gradel2017semiring,G-T3}  assign a  value $\pi(\varphi(x_1/a_1,\ldots,x_n/a_n))$ in $K$.  The definition extends the values of the interpretation $\pi$  by induction on the construction of \FO-formulas; in this definition, the addition $+$ operation of $\K$ is used to define the semantics of $\vee$ and $\exists$, while the multiplication $\times$ operation of $\K$ is used to define the semantics of $\wedge$ and $\forall$. In particular, for every \FO-sentence $\psi$ in NNF, the interpretation assigns a semiring value $\pi(\psi)$ to $\psi$. The following proposition (Proposition 5 in \cite{G-T3}) will be useful in the sequel.

\begin{proposition} \label{pro:model-def}
Let $\pi$ be a model-defining interpretation on $D$. Then for every \FO-sentence $\psi$ in NNF, we have that
$$\pi(\psi) \not = 0 \quad \mbox{if and only if} \quad {\bf D}_\pi\models \psi.$$
\end{proposition}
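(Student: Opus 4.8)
The plan is to prove the equivalence by induction on the structure of $\psi$. Since the quantifier cases of such an induction naturally produce subformulas with an additional free variable, I would first strengthen the statement to formulas with free variables: for every \FO-formula $\varphi(x_1,\dots,x_n)$ in NNF and every tuple $(a_1,\dots,a_n)$ of elements of $D$,
\[
\pi(\varphi(x_1/a_1,\dots,x_n/a_n)) \neq 0 \quad\Longleftrightarrow\quad \mathbf{D}_\pi,(a_1,\dots,a_n)\models\varphi,
\]
where the right-hand side denotes ordinary set-based satisfaction in $\mathbf{D}_\pi$. Proposition~\ref{pro:model-def} is then the special case $n=0$.

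For the base cases, a relational atom $R_i(\mathbf{a})$ is immediate from the definition of $\mathbf{D}_\pi$, since $\pi(R_i(\mathbf{a}))\neq 0$ holds precisely when $\mathbf{a}\in R_i^{\mathbf{D}_\pi}$. For a negated atom $\neg R_i(\mathbf{a})$ I would invoke the hypothesis that $\pi$ is model-defining: exactly one of $\pi(R_i(\mathbf{a}))$ and $\pi(\neg R_i(\mathbf{a}))$ is nonzero, so $\pi(\neg R_i(\mathbf{a}))\neq 0$ iff $\pi(R_i(\mathbf{a}))=0$ iff $\mathbf{a}\notin R_i^{\mathbf{D}_\pi}$, which is exactly $\mathbf{D}_\pi\models\neg R_i(\mathbf{a})$. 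Equality and inequality atoms, if present, are handled the same way, as $\pi$ assigns them the value $1$ or $0$ in agreement with classical truth.

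For the inductive step, each connective reduces to a single algebraic fact about the positive semiring $\mathbb{K}$, which the induction hypothesis then turns into the matching Tarskian clause in $\mathbf{D}_\pi$:
\begin{itemize}
\item \emph{Conjunction.} $\pi(\varphi\wedge\psi)=\pi(\varphi)\times\pi(\psi)$ is nonzero iff both factors are nonzero, using that $0$ is an annihilator for one direction and that $\mathbb{K}$ has no zero-divisors for the other.
\item \emph{Disjunction.} $\pi(\varphi\vee\psi)=\pi(\varphi)+\pi(\psi)$ is nonzero iff some summand is nonzero, using additive positivity ($a+b=0\Rightarrow a=b=0$).
\item \emph{Existential quantifier.} $\pi(\exists x\,\varphi)=\sum_{a\in D}\pi(\varphi(x/a))$ is a finite sum, hence nonzero iff some summand is nonzero; this is additive positivity extended to finitely many terms by a trivial sub-induction.
\item \emph{Universal quantifier.} $\pi(\forall x\,\varphi)=\prod_{a\in D}\pi(\varphi(x/a))$ is a finite product, hence nonzero iff every factor is nonzero, by the no-zero-divisors property, again extended to finitely many terms.
\end{itemize}

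The argument is a routine structural induction with no deep obstacle; the one point that must be watched is that each case invokes exactly the semiring axiom it needs, so that both halves of positivity are genuinely used. Disjunction and existential quantification rely on additive positivity, while conjunction and universal quantification rely on the absence of zero-divisors, and the result would fail over an arbitrary commutative semiring. Finally, should the domain be empty, the conventions ``empty sum $=0$'' and ``empty product $=1$'' keep the existential and universal cases consistent with vacuous classical truth, using only the semiring axiom $0\neq 1$.
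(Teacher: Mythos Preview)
Your proof is correct. Note, however, that the paper does not give its own proof of this proposition: it simply attributes the result to Gr\"adel and Tannen, citing it as Proposition~5 in~\cite{G-T3}. Your structural induction is the standard argument and is presumably what the cited reference contains; the key observation, which you identify clearly, is that positivity of the semiring is exactly what makes each inductive step go through (additive positivity for $\vee$ and $\exists$, absence of zero-divisors for $\wedge$ and $\forall$).
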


Note that ${\bf D}_\pi$ is an ordinary finite structure. Gr\"adel and Tannen \cite{gradel2017semiring,G-T3} do \emph{not} give semantics of \FO  on ${\mathbb K}$ structures, i.e., on finite structures of the form
${\bf D}=(D,R_1^{\bf A}, \ldots,R_m^{\bf A})$, where each $R_i^{\bf D}$ is a function from $D^{r_i}$ to $K$ and $r_i$ is the arity of the relation symbol $R_i$. In particular, they never define what it means for a $\mathbb K$-structure $\bf D$  or for a $\K$-database $\DB$  to \emph{satisfy} a \FO-sentence $\psi$. Yet, we need such a definition in order to define the notion of a \emph{repair} of a $\mathbb K$-database with respect to a set of integrity constraints. 

\begin{definition} \label{compatible-defn}
Let ${\bf D}=(D,R_1^{\bf D}, \ldots,R_m^{\bf D})$ be a finite $\mathbb K$-structure.
\begin{itemize}
\item We say that an interpretation $\pi$ on $D$ is \emph{compatible with} $\bf D$ if $\pi$ is model-defining and for every $i$ and every tuple ${\bf a}$ from $D$, we have that $\pi(R_i({\bf a}))= R_i^{\bf D}({\bf a})$.
\item The \emph{canonical compatible interpretation with ${\bf D}$} is the interpretation $\pi_{\bf D}$ such that for every $i$ and every tuple ${\bf D}$ from $D$, we have that $\pi(R_i({\bf a}))= R_i^{\bf D}$ and
$$
\pi(\neg R_i({\bf a}))=\begin{cases}
			0, & \text{if $R_i^{\bf D}({\bf a })\not =0$}\\
            1, & \text{if $R_i^{\bf D}({\bf a}) =0$.}
		 \end{cases}
$$
\end{itemize}
\end{definition}

Note that if $\pi$ is an  interpretation that is compatible with a $\mathbb K$-structure $\bf D$, then, by definition, $\pi$ has to agree with each relation $R_i^{\bf D}$ on the atomic facts $R_i({\bf a})$. For the negated atomic facts $\neg R_i({\bf a})$, we have that 
$\pi(\neg R_i({\bf a})) = 0$ if $R_i^{\bf D}({\bf a})\not =  0$, because $\pi$ is model defining.  If, however,  $R_i^{\bf D}({\bf a}) =  0$, then $\pi(\neg R_i({\bf a}))$ can be any non-zero value from $K$; in the case of the canonical compatible interpretation $\pi_{\bf D}$, we have that this non-zero value is $1$.

Let $\bf D$ be a $\mathbb K$-structure and consider the canonical compatible interpretation $\pi_{\bf D}$. As discussed above this gives rise to an ordinary structure ${\bf D}_{\pi_{\bf D}}$ with universe $D$. From the definitions, it follows that
${\bf D}_{\pi_{\bf D}}= \suppo({\bf D})$, where $\suppo({\bf D})=(D, \suppo(R_1^{\bf D}), \ldots, \suppo(R_m^{\bf D}))$ is the ordinary structure with universe $D$ and relations the supports of the relations of $\mathbb D$.

\begin{proposition} \label{sem-prop} Let $\bf D$ be a $\mathbb K$-structure and let $\psi$ be a \FO-sentence. Then the following hold:
\begin{itemize}
\item If  $\pi_1$ and $\pi_2$ are two interpretations on $D$  that are compatible with $\bf D$, then $\pi_1(\psi)\not =  0$ if and only if $\pi_2(\psi) \not = 0$.
\item For every interpretation $\pi$ on $D$ that is compatible with $\bf D$, the following statements are equivalent:
\begin{enumerate}
\item $\pi(\psi) \not = 0$;
\item $\pi_{\bf D}(\psi) \not = 0$;
\item $\suppo({\bf D})\models \psi$.
\end{enumerate}
\end{itemize}
\end{proposition}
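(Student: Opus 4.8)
The plan is to reduce everything to Proposition~\ref{pro:model-def} (Proposition 5 of \cite{G-T3}) by first isolating a single structural fact: \emph{every} interpretation $\pi$ on $D$ that is compatible with $\bf D$ is model-defining and satisfies ${\bf D}_\pi = \suppo({\bf D})$. Granting this, Proposition~\ref{pro:model-def} applied to any compatible $\pi$ yields the chain of equivalences $\pi(\psi) \neq 0 \iff {\bf D}_\pi \models \psi \iff \suppo({\bf D}) \models \psi$, and both bullets then fall out immediately. Since the interpretation semantics of Gr\"adel and Tannen is defined only for formulas in negation normal form, I would note at the outset that $\psi$ is taken in NNF; this is harmless, as every \FO-sentence has an NNF-equivalent with the same classical models, so $\suppo({\bf D}) \models \psi$ is unaffected and Proposition~\ref{pro:model-def} applies verbatim.

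First I would prove the structural fact. Fix a compatible $\pi$. By Definition~\ref{compatible-defn}, $\pi$ is model-defining and $\pi(R_i({\bf a})) = R_i^{\bf D}({\bf a})$ for every atomic fact. By the definition of ${\bf D}_\pi$, we have ${\bf a} \in R_i^{{\bf D}_\pi}$ iff $\pi(R_i({\bf a})) \neq 0$, i.e.\ iff $R_i^{\bf D}({\bf a}) \neq 0$, i.e.\ iff ${\bf a} \in \suppo(R_i^{\bf D})$. Hence ${\bf D}_\pi = \suppo({\bf D})$. The crucial point here is that ${\bf D}_\pi$ is determined solely by the values on the \emph{positive} atomic facts, which compatibility pins down to $R_i^{\bf D}({\bf a})$; the freedom a compatible interpretation has in choosing the nonzero values on negated atomic facts plays no role, so the identity ${\bf D}_\pi = \suppo({\bf D})$ holds uniformly across all compatible interpretations, not merely the canonical one.

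Then for the first bullet, given compatible $\pi_1,\pi_2$, I would apply the equivalence chain to each to obtain $\pi_1(\psi) \neq 0 \iff \suppo({\bf D}) \models \psi \iff \pi_2(\psi) \neq 0$. For the second bullet, I would first check that the canonical interpretation $\pi_{\bf D}$ is itself compatible: it agrees with the relations on positive facts by construction, and it is model-defining because for each atomic fact exactly one of $R_i^{\bf D}({\bf a})$ and the value ($1$ or $0$) assigned to its negation is nonzero, using the semiring axiom $1 \neq 0$. Applying the chain to $\pi_{\bf D}$ gives the equivalence of $(2)$ and $(3)$, and applying it to the given compatible $\pi$ gives the equivalence of $(1)$ and $(3)$; transitivity then closes the loop among $(1)$, $(2)$, $(3)$.

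The argument is essentially bookkeeping and I do not anticipate a deep obstacle. The one step deserving care is the observation that all compatible interpretations induce the \emph{same} ordinary structure $\suppo({\bf D})$, despite disagreeing on negated facts; this is precisely what decouples the nonvanishing of $\pi(\psi)$ from the particular choice of compatible interpretation and lets the whole proposition ride on Proposition~\ref{pro:model-def}.
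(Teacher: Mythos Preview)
Your argument is correct. The key observation that ${\bf D}_\pi=\suppo({\bf D})$ holds for \emph{every} compatible $\pi$ (not just the canonical $\pi_{\bf D}$), combined with Proposition~\ref{pro:model-def}, indeed yields both bullets at once.

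This is a slightly different route from the paper's. The paper establishes the first bullet by a direct induction on the construction of $\psi$, using compatibility at the base case and the semiring operations at the inductive steps; it then derives the second bullet from the first together with Proposition~\ref{pro:model-def} and the fact ${\bf D}_{\pi_{\bf D}}=\suppo({\bf D})$ (stated only for the canonical interpretation). You instead upgrade that last fact to all compatible interpretations and thereby bypass the induction entirely, invoking Proposition~\ref{pro:model-def} once per compatible $\pi$. Your approach is marginally cleaner since it outsources the structural induction to the cited result of Gr\"adel and Tannen rather than reproving a special case of it; the paper's approach is more self-contained in that the first bullet is established without appeal to Proposition~\ref{pro:model-def}.
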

\begin{proof}
The first part of the proposition is proved using a straightforward induction and the definition of compatibility. The second part of the proposition follows from the first part of the proposition together with Proposition \ref{pro:model-def}, and the earlier fact that ${\bf D}_{\pi_{\bf D}}= \suppo({\bf D})$.
\end{proof}

The preceding proposition motivates the following definition of what it means for a $\mathbb K$-structure to ``satisfy'' a \FO-sentence.

\begin{definition}\label{sat-defn} We say that a $\bf K$-structure $\bf D$ \emph{satisfies} a \FO-sentence $\psi$ in NNF, denoted ${\bf D}\models_{\mathbb K} \psi$, if 
for some (equivalently, for all) interpretation $\pi$ that is compatible with $\bf D$, we have that $\pi(\psi) \not = 0$. 
\end{definition}

By Proposition \ref{sem-prop}, we have that
${\bf D}\models_{\mathbb K} \psi$ if and only if
$\suppo({\bf A})\models \psi$.

Let us now turn to $\K$-databases. Every $\K$-database $\DB$ with $\K$-relations $R_1^\DB,\ldots,R_m^\DB$ can be viewed as a finite $\K$-structure $\overline{\DB}$ with universe the active domain $D$ of $\DB$ and with the same relations as those of $\DB$, i.e., $\overline{\DB}= (D,R_1^\DB,\ldots,R_m^\DB)$. It is now clear that the semiring semantics of \FO on a $\K$-database $\DB$ that we gave in Section \ref{sec:query} coincides with the definition of the semantics of \FO derived by using the canonical compatible interpretation $\pi_{\overline{\DB}}$.  In particular, Proposition \ref{pro:interp} is a special case of the above Proposition \ref{sem-prop}. Furthermore, in defining what it means for a $\K$-database $\DB$ to satisfy a \FO-sentence $\psi$, we could have used any interpretation 
$\pi$ on $D$ that is compatible with $\DB$, instead of the canonical compatible one. And this amounts to $\suppo(\DB)$ satisfying $\psi$ in the standard sense.

In summary, the preceding considerations  justify using ``flattening'' to define the notion of a repair of a $\mathbb K$-database with respect to a set of key constraints and, more broadly, with respect to a set of \FO-sentences. 

\section{Proofs omitted from the main body}

\subsection{Rewritability of the consistent answers of the path query} \label{appendix_qpath}

Recall that $\qpath = \exists x \exists y \exists z( R(x;y)\land S(y;z))$, for which $\m\CA_\B(\qpath)$ is in $P$~\cite{DBLP:journals/jcss/FuxmanM07}, and has the following
first-order rewriting:
\(
\exists x \exists z' (R(x,z') \land \forall z( R(x,z) \to \exists y (S(z,y)))).
\)
We would like to obtain a similar expression in our setting for \qpath, that is, an expression utilising semiring operations that provides the answer $\m\CA_K(\qpath,\DB)$ for $\K$-databases without having to evaluate \qpath on every  repair of $\DB$. The semiring semantics of \FO on a $\K$-database $\DB$ we gave in Section \ref{sec:query} are precisely the semiring semantics of \FO using the canonical interpretation 
We define the expression $\epath$ as follows:
 $$ \epath \dfn \sum_{a\in D} \min_{b\in D:R^\DB(a,b)\neq 0}({R^\DB(a,b) \times \min_{c\in D:S^\DB(b,c)\neq 0}{S^\DB(b,c)}}),
  $$
 where $\sum$, $\times$, and $\min$ refer to operations of a naturally ordered positive semiring $\K$.
%
%
We claim that
\begin{align}
\m\CA_\K(\qpath,\DB)
= \epath
\label{a:eq:rewriting_qpath}
\end{align}
for every $\K$-database $\DB$. 
%
%
%

\begin{proof}
We show that the equality holds.
Note that 
    \[
    \m\CA_\K(\qpath,\DB) =
    \min_{\DB' \in \Rep(\DB)} \FOKeval{\qpath}{\DB'} =
    \min_{\DB' \in \Rep(\DB)} \sum_{a,b,c\in D'} R^{\DB'}(a,b) \times S^{\DB'}(b,c),
    \]
where the former equality is the definition of $\m\CA_\K$ and the latter follows from the semiring semantics for CQs.  
%
To show (\ref{a:eq:rewriting_qpath}), 
it suffices to establish that the following two statements hold:
\begin{enumerate}[label=(\roman*)]
    \item For every repair $\DB'$ of $\DB$, we have that $\epath \leq_\K \qpath(\DB')$; \label{a:qpath:first}
    \item There is a repair $\DB^*$ of $\DB$ such that $\epath = q(\DB^*)$. \label{a:qpath:second}
\end{enumerate}

 For \ref{a:qpath:first}, let $\DB'\in\Rep(\DB)$ and $a,b,c\in D'$ be arbitrary,
 and consider the expression
 $R^{\DB'}(a,b) \times S^{\DB'}(b,c)$.
 We distinguish the following three cases: 

 Case 1: $a$ is not a key value for $R^\DB$.

 Case 2: $a$ is a key value for $R^\DB$, but $b$ is not a key value for $S^\DB$.

 Case 3. $a$ is a key value for $R^\DB$ and $b$ is a key value for $S^\DB$.

 Note that, in the sum defining \epath,
 the element $a$ contributes only one summand to that sum. In Case 1, in which $a$ is not a key value for $R^\DB$,  the summand to which $a$ contributes has value~$0$, since we set $\min(\emptyset) = 0$. 
 Similarly, in Case 2, in which $a$ is a key value for $R^\DB$ but $b$ is not a key value for $S^\DB$, the factor 
 $\min_{z\in D:S^\DB(b,z)\neq 0}{S^\DB(b,z)}$ to which $b$ contributes takes value $0$, and hence the summand to which $a$ contributes takes value $0$. In Case 3, in which $a$ is a key value for $R^\DB$
and $b$ is a key value for $S^\DB$, by monotonicity of multiplication (if $j'\leq_\K j$ then $i\times j' \leq_\K i\times j$) we have
\[
\min_{y\in D:R^\DB(a,y)\neq 0}({R^\DB(a,y) \times \min_{z\in D:S^\DB(y,z)\neq 0}{S^\DB(y,z)}}) \leq_\K 
R^{\DB'}(a,b)\times S^{\DB'}(b,c).
\]
Therefore, the  summand of \epath
to which $a$ contributes is dominated by the summand of $\qpath(\DB')$ to which $a$ contributes. 
Hence $\epath \leq_\K \qpath(\DB')$, which concludes the proof of \ref{a:qpath:first}.

For \ref{a:qpath:second}, we distinguish
   two cases:  $\DB\not \models_\K\qpath$ or 
 $\DB\models_\K\qpath$.
 If $\DB\not\models_\K\qpath$, then
 for every repair 
 $\DB'$ of $\DB$, we have that
 $\DB'\not \models_\K\qpath$,  Hence, $\qpath(\DB') = 0$, for every repair $\DB'$ of $\DB$.
  Now, an inspection for $\epath$ 
  shows that $\epath = 0$. 
  Therefore, we can pick any repair $\DB^*$ of $\DB$ (at least one repair  exists) and conclude that
  $\epath = 0 = \qpath(\DB^*)$.

 Finally, suppose that 
  $\DB \models_\K\qpath$.
Our goal is to show that there exists a repair $\DB^*$ of $\DB$ such that 
$\epath = q(\DB^*)$.
    We build a repair $\DB^*$ as follows. For every element $b$ that is a key value for $S^\DB$, we choose a value $c^*$ such that $S^\DB(b,c^*) = \min\{ S^\DB(b,c) \mid S^\DB(b,c) \neq 0, c\in D \}$.
    Now, for every element $a$ that is a key value for $R^\DB$, we have two possibilities:
    \begin{enumerate}[label=(\roman*)]
        \item There exists a $b$ such that $R^\DB(a,b) \neq 0$, and $S^\DB(b,c) = 0$ for every $c\in D$ (i.e.,  
        $b$ is not a key value for $S^\DB$). In this case,  we pick one such $b$ and put $(a,b)$ in $R^{\DB^*}$. \label{item:1_qpath}
        \item For every $b$ such that $R^\DB(a,b) \neq 0$, there exists a value $c$ for which $S^\DB(b,c) \neq 0$. In this case, we choose an element  $b^*$ with $R^\DB(a,b^*) \neq 0$ and such that the value $R^\DB(a,b^*) \times S^\DB(b^*,c^*)$ is minimised, and we put $(a, b^*)$ in $R^{\DB^*}$. \label{item:2_qpath}
    \end{enumerate}
    
    Let $a$ be a key value for $R^\DB$ for which the first possibility holds. If $b$ is the element for which $R^{\DB^*}(a,b)$ is a fact of $\DB^*$, then   $\min\{S^\DB(b,c): S^\DB(b,c)\neq 0, c\in D\} = 0$.
     Consequently,  we have that 
    $$\min_{y\in D:R^\DB(a,y)\neq 0}({R^\DB(a,y) \times \min_{z\in D:S^\DB(y,z)\neq 0}{S^\DB(y,z)}})= 0.$$
    Moreover, $R^{\DB^*}(a,b) \times S^{\DB^*}(b,c) = 0$, for any $c\in D^*$. 
    Thus, the value of the summand in $\epath$ 
    contributed by $a$ equals the value of the summand in $\qpath(\DB^*)$ contributed by $a$. 
    
Let  $a$ be a key value such that the second possibility holds.  From the definition of $\DB^*$, it follows  that the expression $\min_{y\in D:R^\DB(a,y)\neq 0}({R^\DB(a,y) \times \min_{z\in D:S^\DB(y,z)\neq 0}{S^\DB(y,z)})}$ in $\epath$ 
is equal to the value in $\qpath(\DB^*)$ contributed by  $a$. 

By combining the findings in these two possibilities, we conclude that 
$\epath = \qpath(\DB^*)$.
\end{proof}

\subsection{$\K$-circuits and complexity theory}\label{appendix_circuits}


\circuitprop*


\begin{proof}
The proof proceeds analogously to the classical case proving that $\FO$ is in $\DLOGTIME$-uniform $\aco$. When considered as inputs to circuits, $\K$-databases are encoded as strings of semiring values in the same fashion as Boolean databases are encoded as strings of Booleans (see, e.g., \cite{Libkin04}).
The inputs that we consider also include an assignment $\alpha$ giving values for some fixed finite set $X$ of variables. 
If $R^\DB$ is a $\K$-relation of arity $k$ in $\DB$, then its encoding $\enc(R^\DB,\alpha)$ is simply the concatenation of the semiring values $R^\DB(\alpha(\vec{a}))$, for $\vec{a}\in (D\cup X)^k$, written in some predefined order (that is we stipulate some ordering on $D\cup X$ and use that to define an ordering of $(D\cup X)^k$).
The encoding $\enc(\DB,\alpha)$ of the $\K$-database and an assignment is then the concatenation of the encodings of its $\K$-relations $\enc(R^{\DB},\alpha)$ in some predefined order.

Given $n\in\N$ and a formula $\varphi \in\Logic_\K$, one can recursively define the $\aco_\K(+,\times_2, \min)$-circuit that computes the value of $\varphi$ in a $\DB,\alpha$, such that $\lvert D\rvert = n$, on the input $\enc(\DB, \alpha)$. In the transformation of $\varphi \in \Logic_\K$ to a circuit, it suffices to transform subformulas of the forms $\exists x. \varphi(\vec{y},x)$ and $\nabla x. \varphi(\vec{y},x)$ to expressions $\sum_{a\in D} \varphi(\vec{y},a/x)$ and $\min_{a\in D} \varphi(\vec{y},a/x)$, respectively. After this, the remaining construction of the circuit is to treat each subformula as a gate of the circuit labelled with its top-most connective. Gates corresponding to atomic formulas are input gates and are labelled with an appropriate part of the input $\enc(\DB,\alpha)$ determined by the ordering used in it. The argument that there is a $\DLOGTIME$ algorithm that describes $C_n$, given $n$, is the same as in the classical case for $\FO$ (see \cite{DBLP:conf/wollic/BarlagV21} for a similar proof for $\FO_\R$ and $\DLOGTIME$-uniform $\aco_\R(+,\times)$).
\end{proof}

\subsection{Acyclicity of the attack graph and semiring rewriting of the consistent answers}\label{appendix_rewriting}

\jefslemmatwo*

\begin{proof}
Notice that the attack graphs of $q[x]$ and $q$ have the same atoms, hence they have the same nodes. Consider an edge in the attack graph of $q[x]$, that is, an edge between the nodes corresponding to atoms $R(\vec{y};\vec{z})$ and $R'(\vec{y'};\vec{z'})$. Thus, $R(\vec{y}; \vec{z})$ attacks a variable $y'_i$ in $\var(\vec{y'})$ that is bounded in $q[x]$. In particular, $y'_i$ is bounded in $q$, and both atoms $R(\vec{y}; \vec{z})$ and $R'(\vec{y'};\vec{z'})$ are in $q$. Since $R(\vec{y}; \vec{z})$ attacks $y'_i$, there are witnesses $x_1, \ldots, x_m=y'_i$ such that:
\begin{enumerate}
     \item $x_1 \in \var(\vec{z})$ (i.e., is a non-key variable of $R$), and $x_m=y'_i$;
     \item for all $i <n$ we have that $x_i, x_{i+1}$ occur together in some atom of $q[x]$ (thus, they occur together in some atom of $q$); and
     \item For every $i\leq n$ we have that $x_i \not\in \{ \var(\vec{y})\}^+_{\Sigma(q[x]\setminus R)}$. Notice that the set of key-constraints in $q[x] \setminus \{R(\vec{y}; \vec{z})\}$ and the set of key constraints in $q\setminus \{R(\vec{y}; \vec{z})\}$ coincide .
 \end{enumerate}
We can conclude that $R(\vec{y}; \vec{z})$ attacks $y'_i$ in the attack graph of $q$ as well, in other words, the edge between $R(\vec{y};\vec{z})$ and $R'(\vec{y'};\vec{z'})$ is also in the attack graph of $q$, therefore the attack graph of $q[x]$ is a subgraph of the attack graph of $q$.
\end{proof}


\lemmaKDBs*

\begin{proof}
    Note that, by definition
    \[
     \m\CA_\K(q, \DB, \alpha) = \min_{\DB' \in \Rep(\DB,\Sigma)} q(\DB', \alpha) = \min_{\DB' \in \Rep(\DB,\Sigma)} \sum_{c \in D'} q[x](\DB', \alpha(c/x)).
    \]
    Hence, it suffices to establish that
    \[
    \min_{\DB' \in \Rep(\DB,\Sigma)} \sum_{c \in D'} q[x](\DB', \alpha(c/x)) = \sum_{c \in D} \min_{\DB' \in \Rep(\DB,\Sigma)} q[x](\DB', \alpha(c/x)).
    \]
    It is easy to see that
    \[
    \min_{\DB' \in \Rep(\DB,\Sigma)} \sum_{c \in D'} q[x](\DB', \alpha(c/x)) \geq_\K \sum_{c \in D} \min_{\DB' \in \Rep(\DB,\Sigma)} q[x](\DB', \alpha(c/x)).
    \]


    For the other direction, if $\m\CA_\K(q, \DB,\alpha) = 0$, then the lemma holds. Suppose $\m\CA_\K(q, \DB, \alpha) >0$. 
    We need to show that
    \begin{equation}\label{eq:0}
    \min_{\DB' \in \Rep(\DB,\Sigma)} \sum_{c \in D'} q[x](\DB', \alpha(c/x))
    \leq_\K \sum_{c \in D} \min_{\DB' \in \Rep(\DB, \Sigma)} q[x](\DB', \alpha(c/x)).
    \end{equation}
    For every repair $\RP$ of $\DB$, set
    \(
    \rho(\RP) \dfn \{ c \in D \mid \RP,\alpha(c/x) \models_{\K} q[x] \} 
    \)
    and $v(\RP) \dfn \sum_{c \in R} q[x](\RP, \alpha(c/x))$. Since $\m\CA_\K(q, \DB, \alpha) >0$, which is the minimum $v(\RP)$ over the repairs $\RP$ of $\DB$, we have that $v(\RP) >0$ for every $\RP$.
    Let $m= \min_{\RP \in \Rep(\DB,\Sigma)} v(\RP) = \m\CA_\K(q, \DB, \alpha)$ and fix $\RP_{\min} \in \Rep(\DB, \Sigma)$ such that $v(\RP_{\min}) = m$.
        We will show the following claim:


    \begin{claim}\label{claim:lemmaKdbs}
    If $a \in \rho(\RP_{\min})$ and $\RP\in \Rep(\DB,\Sigma)$, then
    $q[x](\RP_{\min},\alpha(a/x)) \leq_\K q[x](\RP,\alpha(a/x))$.  
    \end{claim}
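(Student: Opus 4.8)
The plan is to read Claim~\ref{claim:lemmaKdbs} as the value-tracking analogue of Lemma~\ref{lemma:jef}: whereas Lemma~\ref{lemma:jef} produces a single repair that simultaneously \emph{satisfies} $q[x]$ at every commonly satisfiable value of $x$, here I want a single $v$-optimal repair that simultaneously attains the \emph{smallest} value $\min_{\RP} q[x](\RP,\alpha(c/x))$ at each such $c$. Accordingly, the first step is to prove the existence of a repair $\RP^*$ with $q[x](\RP^*,\alpha(c/x)) = \min_{\RP\in\Rep(\DB)} q[x](\RP,\alpha(c/x))$ for every $c \in D$ at once. Since $+$ is monotone with respect to $\leq_\K$, such an $\RP^*$ automatically minimizes $v$, so one may take $\RP_{\min} = \RP^*$ and the claim follows.

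To build $\RP^*$ I would iterate the fact-swapping construction of Claim~\ref{claim:rho_containment}, now carrying semiring values along. Suppose a repair is not yet pointwise optimal at some $a$, witnessed by a competitor repair $\RP$ with $q[x](\RP,\alpha(a/x)) <_\K q[x](\RP_{\min},\alpha(a/x))$, and let $R(\vec y;\vec z)$ be the unattacked atom carrying $x$. I graft onto $\RP_{\min}$ the facts selected by the mixed valuation $\delta$ that follows $\RP$ on the variables attacked by $R(\vec y;\vec z)$ and follows $\RP_{\min}$ elsewhere. The key steps are exactly those of Claim~\ref{claim:rho_containment}: (i) $\delta$ maps every atom of $\qminusR$ to a fact already present, because any variable shared between an attacked and an unattacked atom lies in $\closureVars$, where $\RP$ and $\RP_{\min}$ agree; (ii) the resulting database $\RP'$ is again a repair, as $\Sigma$ has one key per relation and $q$ is self-join free; and (iii) by monotonicity of $\times$ and $+$ the grafting does not increase $q[x](\cdot,\alpha(c/x))$ at any $c$ while strictly lowering it at $a$. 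Thus every swap makes progress toward the pointwise minimum.

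The main obstacle is the passage from a termwise improvement to a genuine contradiction, and it is here that the semiring generality bites. Naturally ordered positive semirings need not be cancellative — the tropical semiring, in which $+$ is $\min$, is the decisive example — so a strict $\leq_\K$-decrease of a single summand need \emph{not} strictly decrease the sum $v$; in particular pointwise optimality can fail for an arbitrarily chosen $v$-minimizer, and minimality of $v(\RP_{\min})=m$ alone is too weak to drive the induction. I would therefore govern the iteration by a secondary, well-founded measure on the \emph{vector} $\big(q[x](\RP_{\min},\alpha(c/x))\big)_{c\in D}$ — for instance the number of coordinates that are not pointwise optimal — and choose $\RP_{\min}$ so as to minimize this measure among all $v$-minimizers. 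The swap above then strictly decreases this measure, which yields the contradiction and hence $\RP^*$. A second delicate point, needed to make step (iii) rigorous, is that $q[x](\RP,\alpha(c/x))$ is in general a sum of products over valuations rather than a single witness; acyclicity of the attack graph is precisely what guarantees that the optimal value is realized by a key-determined witness, so that the graft genuinely controls the value at $a$.
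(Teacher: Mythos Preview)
Your proposal has the right endpoint—produce a single repair that is simultaneously pointwise optimal at every $c$—but the swap argument does not get there as written. Two things are ill-defined. First, ``the unattacked atom carrying $x$'' conflates notions: that $x$ is an unattacked \emph{variable} does not single out a unique atom, and no atom containing $x$ need be unattacked in the attack graph. Second, the grafting step is not a well-defined repair modification. In Claim~\ref{claim:rho_containment} one swaps a \emph{single} fact $A$ for a key-equal $A'$ and then builds the mixed valuation~$\delta$ \emph{afterwards}, to morph embeddings into the new repair back into the old one. You instead use $\delta$ (which ``follows $\RP$'' and ``follows $\RP_{\min}$''—but these are repairs, not valuations) to \emph{select} which facts to change; since $\delta$ is a single valuation it names one fact per relation, and it is left unspecified which blocks of $\RP_{\min}$ are altered and why the outcome is again a repair.

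The decisive gap is step~(iii). Monotonicity of $+$ and $\times$ does not yield ``the grafting does not increase $q[x](\cdot,\alpha(c/x))$ at any $c$'': a fact swapped in one block can participate in embeddings at many values of~$c$, and the $\delta$-morphing of Claim~\ref{claim:rho_containment} only shows that embeddings \emph{survive} the swap, not that their semiring products weakly decrease. Establishing that non-increase is exactly the content of the claim, and it is where acyclicity must do real work. The paper's proof is organised quite differently: it fixes a topological order $R_{1},\dots,R_{n}$ of the attack graph and builds the minimizing repair greedily in \emph{reverse} order, choosing in each so-called $\forall R_{i}$-block (via the superfrugal-repair machinery of~\cite{A-Wijsen24}) the fact that minimizes the contribution given the already fixed $R_{j}$-facts for $j>i$. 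The crux is an independence lemma: the optimal choice in an $R_{i}$-block depends only on atoms \emph{attacked by} $R_{i}$—hence not on $R_{\ell}$ with $\ell<i$, and in particular not on the value of the unattacked variable~$x$. That independence is precisely what your step~(iii) would need, but it comes from the global topological structure of the attack graph, not from a single local swap.
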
 
    \begin{proof}[Proof of Claim~\ref{claim:lemmaKdbs}]
     Let $(R_{1}, R_{2}, \ldots, R_{n})$ be a topological ordering of the attack graph of~$q$.
     For simplicity, when we use a relation name $R_{i}$ in contexts where an atom is expected, we mean the unique $R_{i}$-atom of $q$. 
    If $\RP$ is a repair, and $i\in\{0,1,2,\ldots,n\}$, then we write $\agree{\RP}{i}$ for the smallest subset of $\Rep(\DB,\Sigma)$ that contains $\RP'$ whenever $\RP$ and $\RP'$ have exactly the same $R_{j}$-facts for all $j\in\{1, 2, \ldots, i\}$.
    Thus, $\agree{\RP}{0}=\Rep(\DB,\Sigma)$, and  $\agree{\RP}{n}=\{\RP\}$.
We introduce some convenient terminology.
 An \emph{$R_{\ell}$-block} is a maximal set of $R_{\ell}$-facts of $\suppo(\DB)$ that agree on all key attributes.
    Clearly, for each $\ell\in\{1,2,\ldots,n\}$, every repair selects exactly one fact from each $R_{\ell}$-block.
     An \emph{embedding of $q$ into $\RP$} is an assignment that maps every atom of $q$ to a fact in $\suppo(\RP)$. 
      If $\theta$ is an embedding, then $\theta(R_{\ell})$ denotes the fact to which the atom~$R_{\ell}$ is mapped (where $\ell\in\{1,2,\ldots,n\}$).
    A repair $\RP$ is called \emph{superfrugal}~\cite{A-Wijsen24} if no repair $\RP'$ has a set of embeddings that is a strict subset of that of $\RP$.
    An embedding in a superfrugal repair is also called a \emph{$\forall$embedding}. 
    If $\beta$ is a $\forall$embedding, then,  for all $i\in\{1,2,\ldots,n\}$, the $R_{i}$-block that contains $\beta(R_{i})$ is called a \emph{$\forall R_{i}$-block}. 
    From~\cite[Lemma~4.5]{A-Wijsen24}, it follows that for all $i\in\{1,2,\ldots,n\}$, we can select a fact from each block that is not a $\forall R_{i}$-block such that the selected facts do not belong to any embedding into the resulting repair. Therefore, in what follows, it suffices to consider only $\forall R_{i}$-blocks. 

    A repair $\RP$ such that $v(\RP)=m$ can be constructed as follows, for decreasing values of $i=n, n-1, \ldots, 1$:
\begin{itemize}
    \item
    for $i=n$, select in each $\forall R_{n}$-block the fact with the smallest annotated semiring value; and
    \item
    for $i<n$, in each $\forall R_{i}$-block, select the fact that yields the smallest annotated semiring value, given the already fixed $R_{j}$-facts for $j\in\{i+1, i+2, \ldots, n\}$.
    This selection is illustrated next and, as we will argue shortly, is independent of the $R_{\ell}$-facts for $\ell<i$.
\end{itemize}    

\begin{example}\label{ex:independentchoice}
Consider the bag semiring.
Assume that the last two facts of $q$, in the topological order of the attack graph, are $R_{n-1}(v;y)$ and $R_{n}(y,z)$, where all attributes of $R_{n}$ are key attributes.
Assume that the only $R_{n}$-facts with positive support are $R_{n}^{\DB}(b,d_{1})=1$,  $R_{n}^{\DB}(b,d_{2})=1$, and  $R_{n}^{\DB}(c,d_{1})=8$.
Furthermore, assume that $R_{n-1}^{\DB}(a,b)=2$ and $R_{n-1}^{\DB}(a,c)=1$, two facts belonging to the same $\forall R_{n-1}$-block. 
The former $R_{n-1}$-fact contributes $(2\times 1)+(2\times 1)=4$, while the latter contributes $1\times 8=8$. So our procedure will select $R_{n-1}(a,b)$, which, notably, is not the fact with the smallest annotated semiring value in its block.
\end{example}    

It can now be argued by induction on decreasing $i=n, n-1,\ldots,1$ that every superfrugal repair~$\RP$ containing the selected $R_{j}$-facts for every $j\in\{i, i+1, \ldots, n\}$ minimizes the semiring value of $q$ across all repairs in $\agree{\RP}{i-1}$. 
The reasoning is similar to that in Claim~\ref{claim:rho_containment}.
Assume that in some superfrugal repair $\RP$, we replace a fact $B$ with $C$ in some $\forall R_{i}$-block $\block$, such that the semiring value of $q$ decreases.
Let~$\RP_{C}$ be the repair satisfying $\suppo(\RP_{C})=\lrformula{\suppo(\RP)\setminus\{B\}}\cup\{C\}$.
There will be an embedding $\beta$ into $\RP$ such that $\beta(R_{i})=B$, as well as an embedding $\gamma$ into $\RP_{C}$ such that $\gamma(R_{i})=C$.
    Since $B$ and $C$ agree on their key, it follows that $\beta$ and~$\gamma$ agree on every variable in $V\eqdef\{v\in\var(q)\mid\Sigma(q\setminus\{R_i\})\models\key{R_{i}}\to v\}$. 
    Define $\compose{\gamma}{\beta}$ as the assignment such that for every variable~$v$ in $\var(q)$,
    \[
    \compose{\beta}{\gamma}(v) = 
    \begin{cases}
        \gamma(v) & \textnormal{if $R_{i}$ attacks $v$};\\
        \beta(v) & \textnormal{otherwise}.
    \end{cases}
    \]
    Since $x$ is unattacked in $q$, we have $\compose{\beta}{\gamma}(x)=\beta(x)$.
    Notably, $\compose{\beta}{\gamma}$ is an embedding into $\RP_{C}$ because if an atom~$R_{\ell}$ of $q$ (where $\ell\in\{1,2,\ldots,n\}$) contains both variables attacked and unattacked by~$R_{i}$, then~$\beta$ and $\gamma$ agree on the unattacked variables, as they belong to~$V$; consequently, $\compose{\beta}{\gamma}(R_{\ell})=\gamma(R_{\ell})$. 
    Furthermore, if $\ell\neq i$ and $R_{\ell}$ is not attacked by $R_{i}$, then $\compose{\beta}{\gamma}(R_{\ell})=\beta(R_{\ell})$, which, in particular, holds for $\ell\in\{1,2,\ldots,i-1\}$.
    Note also that $x$ occurs in at least one atom that is not attacked by~$R_{i}$.
    Informally, we conclude that if $B$ is replaced by~$C$, then \emph{every} embedding into $\RP$ that used $B$ can be transformed into one that uses $C$ instead, while remaining unchanged over all atoms not attacked by $R_{i}$.
    Consequently, the optimal (minimizing) choice from $\block$ is independent of atoms not attacked by $R_{i}$.
    In particular, regarding $x$, if $\beta_{1}$ and~$\beta_{2}$ are distinct embeddings into $\RP$ such that $\beta_{1}(R_{i})=\beta_{2}(R_{i})=B$, then the optimal choice in $\block$ is the same for both $\beta_{1}$ and $\beta_{2}$, even if $\beta_{1}(x)\neq\beta_{2}(x)$.  
    For example, the selection from the $R_{n-1}$-block in Example~\ref{ex:independentchoice} does not require knowledge of the $R_{\ell}$-facts for $\ell<n-1$. 


When $i$ reaches $1$, the superfrugal repair with the selected facts minimizes the semiring value of~$q$ across all repairs of $\agree{\RP}{0}=\Rep(\DB,\Sigma)$. 
Moreover, our minimization procedure ensures that for every  $a \in D$, the semiring value of $q[x]$ with respect to $\alpha(a/x)$ is also minimized. 
This concludes the proof of Claim~\ref{claim:lemmaKdbs}.
     \end{proof}

Equation~\eqref{eq:0} now follows from Claim~\ref{claim:lemmaKdbs}, and this finishes the proof of the lemma.
\end{proof}

\end{document}